\documentclass[11pt]{amsart}
\usepackage{amssymb,mathrsfs,graphicx,enumerate}
\usepackage{amsmath,amsfonts,amssymb,amscd,amsthm,bbm}
\allowdisplaybreaks
\usepackage[retainorgcmds]{IEEEtrantools}
\usepackage{colortbl}
\usepackage{kotex}
\usepackage{graphicx, subfigure}
\allowdisplaybreaks
\topmargin-0.1in \textwidth6.in \textheight8.5in \oddsidemargin0in
\evensidemargin0in
\title[The Lohe Hermitian sphere model with time delay]{Emergent behaviors of homogeneous Lohe Hermitian sphere particles under time-delayed interactions}

\author[Ha]{Seung-Yeal Ha}
\address[Seung-Yeal Ha]{\newline Department of Mathematical Sciences and Research Institute of Mathematics \newline Seoul National University, Seoul 08826 and \newline
Korea Institute for Advanced Study, Hoegiro 85, Seoul, 02455, Republic of Korea}
\email{syha@snu.ac.kr}
 
\author[Hwang]{Gyuyoung Hwang}
\address[Gyuyoung Hwang]{\newline Department of Mathematical Sciences\newline Seoul National University, Seoul 08826, Republic of Korea}
\email{hgy0407@snu.ac.kr}

\author[Park]{Hansol Park}
\address[Hansol Park]{\newline Department of Mathematical Sciences\newline Seoul National University, Seoul 08826, Republic of Korea}
\email{hansol960612@snu.ac.kr}

\newtheorem{theorem}{Theorem}[section]
\newtheorem{lemma}{Lemma}[section]

\newtheorem{proposition}{Proposition}[section]

\newtheorem{definition}{Definition}[section]

\newcommand{\bbr}{\mathbb R}

\newcommand{\bbs}{\mathbb S}

\newcommand{\bbc}{\mathbb C}

\newcommand{\re}{\mathrm{Re}}

\begin{document}

\date{\today}

\subjclass{82C10, 82C22, 35Q40} \keywords{Emergence, hermitian sphere, tensor, time-delay}

\thanks{\textbf{Acknowledgment.} The work of S.-Y. Ha was supported by National Research Foundation of Korea(NRF-2020R1A2C3A01003881).
The work of H. Park was supported by Basic Science Research Program through the National Research Foundation of Korea(NRF) funded by the Ministry of Education (2019R1I1A1A01059585) }

\begin{abstract}
We study emergent behaviors of the Lohe hermitian sphere(LHS) model with a time-delay for a homogeneous ensemble. The LHS model is a complex counterpart of the Lohe sphere(LS) aggregation model on the unit sphere in Euclidean space, and it describes the aggregation of particles on the unit hermitian sphere in $\bbc^d$ with $d \geq 2$, Recently it has been introduced by two authors of this work as a special case of the Lohe tensor model \cite{H-P1}. When the coupling gain pair satisfies a specific linear relation, namely the Stuart-Landau(SL) coupling gain pair, it can be embedded into the LS model on $\bbr^{2d}$. In this work, we show that if the coupling gain pair is close to the SL coupling pair case, the dynamics of the LHS model exhibits an emergent aggregate phenomenon via the interplay between time-delayed interactions and nonlinear coupling between states. For this, we present several frameworks for complete aggregation and practical aggregation in terms of initial data and system parameters using the Lyapunov functional approach. 
\end{abstract}

\maketitle \centerline{\date}


\section{Introduction} \label{sec:1}
\setcounter{equation}{0}
Emergent dynamics of a many-body system is ubiquitous in classical and quantum systems, e.g., aggregation of bacteria \cite{T-B-L, T-B}, flocking of birds \cite{A-B-F}, schooling of fishes, synchronization of fireflies and neurons \cite{B-B, Pe, Wi1, Wi2} and hand clapping of people in a concert hall, etc. For surveys and books, we refer to \cite{A-B, A-B-F, B-H, D-B1, H-K-P-Z, P-R, St, VZ, Wi1}. In this paper, we continue  studies begun in \cite{B-H-P, H-P2} on the emergent dynamics of the LHS model. The LHS model corresponds to the complex counterpart of the Lohe sphere(LS) model which has been extensively studied in previous literature \cite{C-H3, J-C, Lo-2, M-T-G, T-M, Zhu}. The LHS model is the first-order aggregation model describing continuous-time dynamics of particle's position on the hermitian unit sphere $\mathbb{HS}^{d-1} := \Big \{ z = ([z]_1, \cdots, [z]_d)\in\bbc^d:~\|z \| := \sqrt{\sum_{\alpha=1}^{d} |[z]_\alpha|^2} = 1 \Big \}$ with $d \geq 1$. Here we denote the $\alpha$-th component of the complex vector $z\in\bbc^d$ as $[z]_\alpha$ which is consistent with earlier notation in \cite{H-P1}. As a warm up for our discussion, we briefly introduce the LHS model with time-delayed interactions. 

Let $z_j = ([z_j]_1, \cdots, [z_j]_d) \in \bbc^d$ be a position of the $j$-th Hermitian Lohe particle on the Hermitian unit sphere, and interaction weight between the $j$-th and $k$-th particle is denoted by the real value $a_{jk} \in \bbr$.  Then, the temporal dynamics of $z_j$ is governed by the Cauchy problem to the LHS model with a uniform time-delay $\tau > 0$:
\begin{equation} \label{A-1}
\begin{cases} 
\displaystyle \dot{z}_j =\Omega_j z_j + \displaystyle\frac{\kappa_0}{N}\sum_{k \neq j}a_{jk}\big(\langle z_j, z_j \rangle z_k^{\tau} - \langle z_k^{\tau}, z_j \rangle z_j) + \frac{\kappa_1}{N}\sum_{k \neq j}a_{jk}\big(\langle z_j, z_k^{\tau} \rangle - \langle z_k^{\tau} , z_j \rangle) z_j, \quad t > 0, \vspace{0.2cm} \\
\displaystyle z_j(t) = \varphi_j(t) \in \mathbb{HS}^{d-1}, \quad -\tau \leq t \leq 0, \quad j \in {\mathcal N} := \{1, \cdots, N \},
\end{cases}
\end{equation}
where $z_k^{\tau}(t) := z_k(t-\tau)$, $\varphi_j = \varphi_j(t)$ is a bounded continuous function of $t$, $\Omega_j$ is $d\times d$ skew-Hermitian matrix and $(a_{ik}) \in \bbr^{N \times N}$ is a symmetric matrix whose components are all positive. Before we continue further, we introduce
\[
\langle w, z \rangle := \sum_{\alpha = 1}^{d} [\bar{w}]_\alpha [z]_{\alpha}, \quad  \|z \| := \sqrt{ \langle z, z \rangle}, \quad \bar{w} = (\overline{[w]_1}, \cdots, \overline{[w]_d}).
\]
The global well-posedness of system \eqref{A-1} is guaranteed by the local well-posedness by the standard Cauchy-Lipschitz theory in \cite{Hale, Kuang} and a priori estimate in Lemma \ref{L2.1}. In the absence of time-delay with $\tau = 0$, emergent dynamics of the LHS model was investigated in \cite{B-H-P, H-P2} in which several sufficient frameworks were proposed for complete and practical aggregations.  In this paper, we are interested in the following simple question: \newline
\begin{quote}
``Under what conditions on system parameters $\kappa_0, \kappa_1, \tau$, network topology $(a_{ij})$ and initial data set $\{ \varphi_j \}$, can we verify the emergence of collective behaviors of the LHS with time-delay?"
\end{quote}

\vspace{0.2cm}

This question has been addressed in other low-dimensional aggregation models, to name a few, the Lohe sphere model \cite{C-H0, C-H2}, the Lohe matrix model \cite{HKKPS}. Throughout the paper, we set 
\[ Z := (z_1, \cdots, z_N), \quad {\mathcal D}(Z) := \max_{1 \leq i,j \leq N} \|z_i - z_j \|. \]
Next, we recall several concepts on the emergent dynamics in the following definition
\begin{definition} \label{D1.1}
Let $\{z_i \}$ be a global solution to \eqref{A-1}. 
\begin{enumerate}
\item 
Complete aggregation occurs asymptotically if the ensemble diameter ${\mathcal D}(Z)$ tends to zero asymptotically: 
\[ \lim_{t \to \infty} {\mathcal D}(Z(t)) = 0. \]
\item
Practical aggregation (with respect to time-delay) occurs asymptotically if the ensemble diameter ${\mathcal D}(Z)$ satisfies
\[ \lim_{\tau \to 0+} \limsup_{t \to \infty} {\mathcal D}(Z(t)) = 0.     \]
\end{enumerate}
\end{definition}
Then, it is easy to see that practical aggregation implies complete aggregation. In the absence of time-delay $\tau = 0$, emergent dynamics for system \eqref{A-1} has been extensively studied in \cite{H-P2} (see Section \ref{sec:2.3}). Thus, main point of this paper is to see the effect of time-delayed interactions in the emergent dynamics of \eqref{A-1}. For notational simplicity, we set 
\[
 \max_{i,j} := \max_{1 \leq i, j \leq N}, \quad  \min_{i,j} := \min_{1 \leq i, j \leq N},\quad \sum_{k \neq j}:=\sum_{\substack{k,j =1\\k\neq j}}^N. \]
The main results of this paper are threefold. First, we consider the following setting:
\[ a_{ik} \equiv 1, \quad \Omega_j = 0, \quad \forall~i, k \in {\mathcal N}. \] 
In this case, system \eqref{A-1} becomes 
\begin{align}\label{A-2}
\begin{cases}
\displaystyle \dot{z}_j=  \frac{\kappa_0}{N}\sum_{k \neq j}(\langle z_j, z_j \rangle z_k^{\tau} - \langle z_k^{\tau}, z_j \rangle z_j) + \frac{\kappa_1}{N}\sum_{k \neq j}(\langle z_j, z_k^{\tau} \rangle - \langle z_k^{\tau} , z_j \rangle) z_j, \\
\displaystyle z_j(t) = \varphi_j(t)\in \mathbb{HS}^{d-1}, \quad -\tau \leq t \leq 0.
\end{cases}
\end{align}
When the coupling gain pair $(\kappa_0, \kappa_1)$ is close to the SL coupling gain pair, i.e., 
\[   \tilde{\kappa} := \frac{\kappa_0}{2} + \kappa_1, \quad  |\tilde{\kappa}| \ll 1, \]
system \eqref{A-2} can be rewritten as follows (see Section \ref{sec:3}):
\begin{align}\label{A-3}
\begin{cases}
\displaystyle \dot{z}_j= \frac{\kappa_0}{N}\sum_{k \neq j} \Big(\langle z_j, z_j \rangle z_k^{\tau} - \mathrm{Re}(\langle z_k^{\tau}, z_j \rangle) z_j \Big) + \frac{\tilde{\kappa}}{N}\sum_{k \neq j} \Big(\langle z_j, z_k^{\tau} \rangle - \langle z_k^{\tau} , z_j \rangle \Big) z_j,  \vspace{0.2cm} \\
\displaystyle z_j(t) = \varphi_j(t) \in  \mathbb{HS}^{d-1}, \quad -\tau \leq t \leq 0.
\end{cases}
\end{align}
Our first set of results is concerned with the complete aggregation of \eqref{A-3} (see Theorem \ref{T3.1} and Theorem \ref{T3.2}). We assume that system parameters and initial data satisfy
\[ \kappa_0 > 0, \quad |\tilde \kappa| \ll \kappa_0, \quad \tau \ll 1, \quad  N \geq 3, \quad \sup_{-\tau \leq t \leq 0} {\mathcal D}(Z(t)) \ll 1. \]
For the complete aggregation, we introduce a Lyapunov functional:
\[
\mathcal{E}_{ij}(t) := \|z_i(t) - z_j(t)\|^2 + \gamma \int_{t-\tau}^{t}\|z_i(s) - z_j(s)\|^2 ds.
\]
Then, it is easy to see that it satisfies the standard energy estimate (see Section \ref{sec:3.2.2}):
\[  \mathcal{E}_{ij}(t)  + \beta\int_0^t\|z_i(s)-z_j(s)\|^2ds \leq \mathcal{E}_{ij}(0), \quad \forall~t > 0.   \]
By Barbalat's lemma \cite{Ba}, the above estimate leads to complete aggregation (see Theorem \ref{T3.2}): 
\[  \lim_{t \to \infty} \|z_i(t)-z_j(t)\| = 0. \]
Now, our second set of result is concerned with the practical aggregation with respect to time-delay (Theorem \ref{T4.1}). We assume that  system parameter and initial data satisfy
\[ 2|\kappa_1|<\kappa_0 \quad  \max_{i,j}  \Big( 1-  \langle z^0_i, z^0_j \rangle \Big) <1-\frac{2|\kappa_1|}{\kappa_0}. \]
Then, a practical aggregation emerges:
\[ \lim_{\tau\searrow0}\limsup_{t\to\infty} \max_{i,j} \Big(1-  \langle z_i(t), z_j(t) \rangle \Big) = 0. \]
Our final set of result is concerned with the practical aggregation with respect to both time-delay and non-identical free flow matrix $\Omega_j$ (Theorem \ref{T4.2}). For system \eqref{A-1}, we assume that system parameter satisfies
\begin{equation} \label{New-ini}
\max_{i, j}|1-\langle z_i^0, z_j^0\rangle|<1-\frac{2\sum_{k=1}^N|a_{ik}-a_{jk}|}{\sum_{k=1}^N(a_{ik}+a_{jk})}.
\end{equation}
Then, a practical aggregation emerges:
\[
\lim_{\kappa_0\to\infty}\lim_{\tau\searrow0}\limsup_{t\to\infty}\max_{i,j} \Big(1-  \langle z_i(t), z_j(t) \rangle \Big) = 0.
\]
Note that although we imposed the initial condition on $\varphi_j(t)$ for a time-strip $-\tau\leq t\leq 0$, we require that  the initial condition depends on the initial data at $t = 0$ for practical aggregation estimate in Theorem \ref{T4.2}. 

\vspace{0.5cm}

The rest of paper is organized as follows. In Section \ref{sec:2}, we present conservation laws for the LHS model with time-delay, its reduction to other aggregation models, and review previous results on the emergent dynamics for the LHS model without time-delay and LS model with a time-delay. In Section \ref{sec:3}, we provide a sufficient framework for the complete aggregation when the coupling gain pair is close to that of SL coupling gain pair. In Section \ref{sec:4}, we provide a sufficient framework leading to the practical aggregation under a general setting. Finally, Section \ref{sec:5} is devoted to a brief summary of main results and some open problems which will be left for a future work. 
 
 \section{Preliminaries} \label{sec:2}
\setcounter{equation}{0}
In this section, we discuss two conservation laws of the LHS model with time-delay and its reduction to other aggregation models, and review previous results on the emergent dynamics for the LHS model.\newline

\subsection{Conservation laws} \label{sec:2.1}
In this subsection, we study conservation laws associated with \eqref{A-1}. 
\begin{lemma}\label{L2.1}
\emph{(Conservation of modulus)}
Let $\{z_j\}$ be a global solution to \eqref{A-1}. Then, the modulus of $z_j(t)$ satisfies
\[ \|z_j(t)\|=1, \quad t \geq 0, \quad j \in {\mathcal N}. \]
i.e., the hermitian Lohe sphere $\mathbb{HS}^{d-1}$ is positively invariant set for \eqref{A-1}.
\end{lemma}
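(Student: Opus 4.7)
The plan is to show that $\frac{d}{dt}\|z_j(t)\|^2 = 0$ for all $t > 0$, from which the conclusion follows since the prescribed history $\varphi_j$ already takes values in $\mathbb{HS}^{d-1}$. Using that $\langle \cdot, \cdot \rangle$ is conjugate-linear in the first slot and linear in the second, one has
\[
\frac{d}{dt}\|z_j\|^2 = \langle \dot z_j, z_j \rangle + \langle z_j, \dot z_j \rangle = 2\,\mathrm{Re}\,\langle z_j, \dot z_j \rangle,
\]
so it suffices to verify that pairing $z_j$ on each side with every term on the right-hand side of \eqref{A-1} produces a quantity with vanishing real part.

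I would proceed term by term. For the drift term $\Omega_j z_j$, skew-Hermiticity $\Omega_j^* = -\Omega_j$ gives $\langle \Omega_j z_j, z_j\rangle = -\langle z_j, \Omega_j z_j\rangle$, so their sum is zero. For the $\kappa_0$-coupling term, let $v_k := \langle z_j, z_j\rangle z_k^\tau - \langle z_k^\tau, z_j\rangle z_j$; a direct expansion yields
\[
\langle z_j, v_k\rangle = \|z_j\|^2\bigl(\langle z_j, z_k^\tau\rangle - \langle z_k^\tau, z_j\rangle\bigr),
\]
whose real part vanishes because $\langle z_j, z_k^\tau\rangle - \langle z_k^\tau, z_j\rangle = c - \bar c$ is purely imaginary. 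For the $\kappa_1$-coupling term, the scalar prefactor $\sum_{k\ne j} a_{jk}(\langle z_j, z_k^\tau\rangle - \langle z_k^\tau, z_j\rangle)$ is itself purely imaginary (real coefficients $\kappa_1, a_{jk}$ multiplying a purely imaginary bracket), so the whole term is $i\lambda\, z_j$ for some real $\lambda(t)$, and $2\,\mathrm{Re}(i\lambda \|z_j\|^2) = 0$.

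Combining the three cancellations gives $\frac{d}{dt}\|z_j(t)\|^2 = 0$ for every $t>0$ and each $j \in \mathcal N$, hence $\|z_j(t)\| \equiv \|z_j(0)\| = \|\varphi_j(0)\| = 1$. I do not anticipate a real obstacle: the time-delay plays no essential role, since $z_k^\tau(t)$ simply enters as a vector argument of the sesquilinear form and the cancellations above are algebraic identities that do not care whether the second vector is $z_k(t)$ or $z_k(t-\tau)$. The only minor care needed is to keep straight the conjugate-linearity convention of $\langle\cdot,\cdot\rangle$ when showing that the $\kappa_0$-contribution is purely imaginary; once the convention is fixed, the computation is routine and identical in spirit to the modulus-preservation proofs already used for the LS model and the undelayed LHS model.
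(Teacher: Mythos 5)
Your proof is correct and follows essentially the same route as the paper: differentiate $\|z_j\|^2$, use skew-Hermiticity of $\Omega_j$ for the drift term, and observe that the $\kappa_0$- and $\kappa_1$-coupling contributions are purely imaginary by conjugate symmetry of the inner product, so the modulus is conserved and equals $\|\varphi_j(0)\|=1$. The only difference is cosmetic: you collapse the computation into $2\,\mathrm{Re}\langle z_j,\dot z_j\rangle$ rather than expanding all six paired terms as the paper does, which is a harmless streamlining.
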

\begin{proof} We use $(a_{jk}) \in \bbr^{N \times N}$ and sesquilinearity of the inner product to find
\begin{align}
\begin{aligned} \label{B-2}
&\frac{d}{dt}\|z_j\|^2 =\langle \dot{z}_j, z_j\rangle+\langle z_j, \dot{z}_j\rangle\\
& \hspace{0.2cm} = \left\langle \Omega_j z_j + \displaystyle\frac{\kappa_0}{N}\sum_{k \neq j} a_{jk} \big(\langle z_j, z_j \rangle z_k^{\tau} - \langle z_k^{\tau}, z_j \rangle z_j) + \frac{\kappa_1}{N}\sum_{k \neq j} a_{jk} \big(\langle z_j, z_k^{\tau} \rangle - \langle z_k^{\tau} , z_j \rangle) z_j,~z_j \right\rangle \\
&\hspace{0.4cm}+ \left\langle z_j, ~\Omega_j z_j + \displaystyle\frac{\kappa_0}{N}\sum_{k \neq j} a_{jk} \big(\langle z_j, z_j \rangle z_k^{\tau} - \langle z_k^{\tau}, z_j \rangle z_j) + \frac{\kappa_1}{N}\sum_{k \neq j} a_{jk} \big(\langle z_j, z_k^{\tau} \rangle - \langle z_k^{\tau} , z_j \rangle) z_j \right\rangle \\
\end{aligned}
\end{align}
\begin{align*}
\begin{aligned}
&\hspace{0.2cm} = \langle \Omega_j z_j , z_j \rangle + \langle z_j, \Omega_j z_j \rangle \\
&\hspace{0.4cm}+ \displaystyle\frac{\kappa_0}{N}\sum_{k \neq j}  a_{jk}  \big(\langle z_j, z_j \rangle \langle z_k^{\tau}, z_j \rangle - \overline{\langle z_k^{\tau}, z_j \rangle} \langle z_j, z_j \rangle) + \displaystyle\frac{\kappa_0}{N}\sum_{k \neq j}  a_{jk}  \big(\langle z_j, z_j \rangle \langle z_j, z_k^{\tau} \rangle - \langle z_k^{\tau}, z_j \rangle \langle z_j, z_j \rangle)  \\
&\hspace{0.4cm}+ \displaystyle\frac{\kappa_1}{N}\sum_{k \neq j}  a_{jk}  \big(\overline{\langle z_j, z_k^{\tau} \rangle} \langle z_j, z_j \rangle - \overline{\langle z_k^{\tau}, z_j \rangle} \langle z_j, z_j \rangle) + \displaystyle\frac{\kappa_1}{N}\sum_{k \neq j}  a_{jk} \big(\langle z_j, z_k^{\tau} \rangle \langle z_j, z_j \rangle - \langle z_k^{\tau}, z_j \rangle \langle z_j, z_j \rangle) \\ 
&\hspace{0.2cm} =: \sum_{i=1}^{6} {\mathcal I}_{1i}.   
\end{aligned}
\end{align*}

\vspace{0.5cm}

Below, we estimate the terms ${\mathcal I}_{1i}$ with $1\leq i\leq 6$ one by one. \newline

\noindent $\bullet$~Case A (Estimates on ${\mathcal I}_{11} + {\mathcal I}_{12}$): Note that 
\[
\langle \Omega_j z_j , z_j \rangle = \langle z_j, \Omega^\dagger_j z_j  \rangle
= \langle z_j, -\Omega_j z_j \rangle
= - \langle z_j, \Omega_j z_j \rangle
= - \overline{\langle \Omega_j z_j, z_j \rangle}.
\] Thus $\langle \Omega z_j, z_j \rangle$ is purely imaginary. This yields 
\[
{\mathcal I}_{11} + {\mathcal I}_{12} = \langle \Omega_j z_j , z_j \rangle + \langle z_j, \Omega_j z_j \rangle = - \overline{\langle \Omega_j z_j, z_j \rangle} + \overline{\langle \Omega_j z_j, z_j \rangle} = 0.
\]

\vspace{0.5cm}

\noindent $\bullet$~Case B (Estimates on ${\mathcal I}_{13} + {\mathcal I}_{14}$):~We use $ \langle z_j, z_k^{\tau} \rangle = \overline{\langle z_k^{\tau}, z_j \rangle}$ to see that 
\[  {\mathcal I}_{13} +  {\mathcal I}_{14}  = 0.   \]

\vspace{0.5cm}

\noindent $\bullet$~Case C (Estimates on ${\mathcal I}_{15} + {\mathcal I}_{16}$):~Similar to Case B, one has 
\[  {\mathcal I}_{15} +  {\mathcal I}_{16}  = 0.   \]
Finally we combine all the estimates in Cases A, B, and C to obtain
\[
\frac{d}{dt}\|z_j(t)\|^2=0, \quad \forall~t > 0,\quad j  \in {\mathcal N}.
\]
This yields
\[
\|z_j(t)\|=\|z_j(0)\|=\|\varphi_j(0)\|=1.
\]
\end{proof}
\begin{lemma}\label{L2.2}
\emph{(Propagation of real-valuednesss)}
Suppose that $\{\Omega_j\}$ and initial data set $\{ \varphi_j \}$ satisfy the relations:
\[
\Omega_j \in\bbr^{d\times d},\quad \Omega_j^T=-\Omega_j,\quad \varphi_j(t)\in\bbr^d,\quad \|\varphi_j(t)\|=1
\]
for all $j\in\mathcal{N}$ and $-\tau\leq t\leq 0$, and let $\{z_j\}$ be a solution to \eqref{A-1}. Then $z_j$ is a real-valued state, i.e.,
\[
\mathrm{Im}([z_j(t)]_\alpha)=0, \quad \forall t \geq 0,~~\alpha \in \{1, \cdots, d\}, \quad j \in {\mathcal N}.
\]
\end{lemma}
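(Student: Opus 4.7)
The cleanest route is a complex-conjugation/uniqueness argument: show that $\{\bar z_j\}$ solves the same delay system \eqref{A-1} with the same history, so uniqueness forces $\bar z_j = z_j$, i.e.\ $z_j$ is real. Concretely, I plan to conjugate \eqref{A-1} componentwise and verify that every structural piece of the right-hand side commutes with complex conjugation under the hypotheses.

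First I would record the three identities that will do all the work:
\begin{equation*}
\overline{\Omega_j z_j} = \Omega_j \bar z_j,\qquad \overline{\langle w,z\rangle} = \langle z,w\rangle,\qquad \langle \bar w,\bar z\rangle = \overline{\langle w,z\rangle}.
\end{equation*}
The first uses $\Omega_j\in\bbr^{d\times d}$; the second is the sesquilinearity already spelled out in Section \ref{sec:1}; the third is a direct computation from the definition $\langle w,z\rangle=\sum_\alpha[\bar w]_\alpha[z]_\alpha$. Applying these to \eqref{A-1} term by term gives, for each pair $(j,k)$,
\begin{align*}
\overline{\langle z_j,z_j\rangle z_k^\tau - \langle z_k^\tau,z_j\rangle z_j} &= \langle \bar z_j,\bar z_j\rangle \bar z_k^\tau - \langle \bar z_k^\tau,\bar z_j\rangle \bar z_j, \\
\overline{\big(\langle z_j,z_k^\tau\rangle - \langle z_k^\tau,z_j\rangle\big)z_j} &= \big(\langle \bar z_j,\bar z_k^\tau\rangle - \langle \bar z_k^\tau,\bar z_j\rangle\big)\bar z_j,
\end{align*}
since $\langle z_j,z_j\rangle=\|z_j\|^2$ is already real (by Lemma \ref{L2.1}) and equals $\langle \bar z_j,\bar z_j\rangle$. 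Taking complex conjugates of both sides of \eqref{A-1} then shows that $\{\bar z_j\}$ satisfies exactly the same delay system with coefficients $\Omega_j$, $\kappa_0$, $\kappa_1$, $(a_{jk})$.

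Second, the history function $\bar\varphi_j$ equals $\varphi_j$ by hypothesis (since $\varphi_j(t)\in\bbr^d$ for $-\tau\le t\le 0$), so $\{\bar z_j\}$ and $\{z_j\}$ solve the same delay Cauchy problem on $[-\tau,\infty)$. The right-hand side of \eqref{A-1} is polynomial in the components and, because of Lemma \ref{L2.1}, the trajectory stays on the compact set $(\mathbb{HS}^{d-1})^N$, so the Cauchy--Lipschitz theory for delay ODEs cited from \cite{Hale, Kuang} applies and yields uniqueness. Therefore $\bar z_j(t)=z_j(t)$ for all $t\ge 0$, i.e.\ $\mathrm{Im}([z_j(t)]_\alpha)=0$.

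The only place where I expect to slow down is the bookkeeping in the conjugation identities, and in particular making sure the asymmetric-looking $\mathrm{Re}$-like structure of the $\kappa_0$ and $\kappa_1$ nonlinearities is genuinely preserved under $z\mapsto \bar z$; this is really the content of the third identity above, so once that is stated cleanly there is no further obstacle. An alternative, equally workable, plan would be to split $z_j=x_j+iy_j$, derive a closed-form delay ODE for $y_j$, and observe that $y_j\equiv 0$ is a solution consistent with the initial data; I would only fall back to this route if the uniqueness theorem quoted were insufficiently sharp, which it is not in this setting.
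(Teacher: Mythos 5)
Your argument is correct and follows the same route the paper takes, namely the standard uniqueness theory for time-delayed ODEs cited from \cite{Hale, Kuang}: the paper leaves the details implicit, while you make them explicit by checking that conjugation is a symmetry of \eqref{A-1} (using $\Omega_j\in\bbr^{d\times d}$ and $\langle \bar w,\bar z\rangle=\overline{\langle w,z\rangle}$) and that the real history forces $\bar z_j\equiv z_j$. The only cosmetic remark is that the appeal to Lemma \ref{L2.1} for the realness of $\langle z_j,z_j\rangle$ is unnecessary, since $\langle z_j,z_j\rangle=\sum_\alpha |[z_j]_\alpha|^2$ is real by definition.
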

\begin{proof} This follows from the standard uniqueness theory of time-delayed ordinary differential equations \cite{Hale, Kuang}
\end{proof}
\subsection{Reduction to aggregation models} \label{sec:2.2}
In this subsection, we discuss the reductions of \eqref{A-1} to the Lohe sphere model and the Kuramoto model. Suppose that initial data set $\{\varphi_j \}$ satisfy 
\[  \varphi_j(t)\in\bbr^d,\quad \|\varphi_j(t)\|=1,\]
for all $j\in\mathcal{N}$ and $-\tau\leq t\leq 0$. Then, it follows from Lemma \ref{L2.1} and Lemma \ref{L2.2} that 
\[ z_j(t)\in\bbs^{d-1}\subset\bbr^d. \]
In this case, the coupling terms in the R.H.S. of \eqref{A-1} become
\[ \langle z_j, z_j\rangle z_k^\tau-\langle z_k^\tau, z_j\rangle z_j=\|z_j\|^2 z_k^\tau-\langle z_k^\tau, z_j\rangle z_j, \quad 
(\langle z_j, z_k^\tau\rangle-\langle z_k^\tau, z_j\rangle)z_j=0.
\]
We set 
\[  x_j(t) := z_j(t), \quad j \in {\mathcal N},~t \geq 0. \]
Then the real-valued state $x_j \in \bbr^d$ satisfies the LS model with time-delay \cite{C-H2}:
\begin{align}\label{B-2-1}
\begin{cases}
\displaystyle \dot{x}_j=\Omega_j x_j+\displaystyle\frac{\kappa_0}{N}\sum_{k\neq j}(\|x_j\|^2 x_k^\tau-\langle x_k^\tau, x_j\rangle x_j),\\
\displaystyle x_j(t)=\varphi_j(t)\in\bbs^{d-1}\subset\bbr^d,\quad -\tau\leq t\leq 0,
\end{cases}
\end{align}
where $\Omega_j$ is a $d\times d$ skew-symmetric matrix for all $j$. Emergent dynamics of \eqref{B-2-1} has been studied in \cite{C-H0}. To see the reduction to the Kuramoto model, we also set 
\begin{equation} \label{B-3}
d=2,\quad x_j:=\begin{bmatrix}
\cos\theta_j\\
\sin\theta_j
\end{bmatrix},\quad
\varphi_j:=\begin{bmatrix}
\cos\alpha_j\\
\sin\alpha_j
\end{bmatrix},\quad
\Omega_j:=\begin{bmatrix}
0&-\nu_j\\
\nu_j&0
\end{bmatrix}, \quad \kappa_0 = \kappa.
\end{equation}
Again, we substitute the ansatz \eqref{B-3} into \eqref{B-2} to derive the Kuramoto model with time-delay \cite{HLLP, HNP}:
\begin{equation}\label{B-4}
\begin{cases}
\displaystyle \dot{\theta}_j=\nu_j +\displaystyle\frac{\kappa}{N}\sum_{k\neq j}\sin(\theta_k^\tau-\theta_j), \quad t > 0,\\
\displaystyle \theta_j(t)=\alpha_j(t),\quad -\tau\leq t\leq 0,\quad j\in {\mathcal N}.
\end{cases}
\end{equation}
For the emergent dynamics of \eqref{B-4}, we refer to \cite{B-C-M,  C-S, D-X, D-B, Ku2}. In summary, one has  the following diagram:
\[ \mbox{LHS model} \quad \Longrightarrow \quad \mbox{Lohe sphere model}  \quad \Longrightarrow \quad \mbox{Kuramoto model}. \]
We also refer to \cite{B-C-S, D-F-M-T, De, H-Kim-R, HKKPS, Kim,Lo-0, Lo-1, Lo-2} for other Lohe type matrix models.


\subsection{Previous results} \label{sec:2.3} In this subsection, we present two results on the emergent dynamics of the  LHS model without a time-delay and the Lohe sphere model with time-delay which correspond to the special cases for system \eqref{A-1}. \newline

First, we consider the LHS model with zero time-delay case with $\tau = 0$ over the complete network with $a_{ik} = 1$. Under these setting, system \eqref{A-1} becomes 
\begin{equation}\label{B-4}
\begin{cases}
\dot{z}_j=\Omega_j z_j+\displaystyle\frac{\kappa_0}{N}\sum_{k=1}^N\big(\langle z_j, z_j\rangle z_k-\langle z_k, z_j\rangle z_j\big)+\frac{\kappa_1}{N}\sum_{k=1}^N\big(\langle z_j, z_k\rangle-\langle z_k, z_j\rangle\big)z_j,~~t > 0, \\
z_j(0)=z_j^{in}\in \mathbb{HS}^{d-1},\quad  j \in \mathcal{N}.
\end{cases}
\end{equation}

For emergent dynamics of \eqref{B-4}, we introduce an order parameter as a modulus of $z_c$ and state diameter:
\begin{equation} \label{B-4-1}
\rho := \|z_c\|, \quad \mathcal{D}(Z) := \max_{i,j}\|z_i - z_j\|.
\end{equation}
On the other hand, we consider system \eqref{B-4-1} with a zero free flow:
\begin{equation} \label{B-5}
\begin{cases}
\displaystyle \dot{w}_j=  \kappa_{0} \Big(w_c\langle w_j, w_j \rangle-w_j \langle w_c. w_j \rangle \Big)+\kappa_1 \Big(\langle{w_j, w_c}\rangle- \langle w_c, w_j\rangle \Big) w_j, \quad t > 0, \vspace{0.2cm} \\
\displaystyle w_j(0)  = z_j^{in}, \quad j \in \mathcal{N},
\end{cases}
\end{equation}
where $w_c:=\displaystyle\frac{1}{N}\sum_{k=1}^Nw_k$. \newline

Then the emergence of complete aggregation and solution splitting property of \eqref{B-4} can be summarized in the following proposition.  
\begin{proposition} 
\cite{H-P2} Suppose that coupling gains, free flows and initial data satisfy
\[
N \geq 3, \quad 0<\kappa_1<\frac{1}{4}\kappa_0,  \quad \rho^{in} > \frac{N-2}{N}, 
\]
and let $\{z_j \}$ be a global solution to \eqref{B-4}. Then, the following assertions hold.
\begin{enumerate}
\item
Complete aggregation emerges asymptotically:
\[ \lim_{t \to \infty} {\mathcal D}(Z(t)) = 0. \]
\item
Solution splitting property holds:
\[ z_j=e^{\Omega t} w_j, \quad j \in\mathcal{N},   \]
where $w_j$ is a solution to \eqref{B-5}.
\end{enumerate}
\end{proposition}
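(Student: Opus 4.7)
The plan is to decouple the common free-flow rotation via a unitary change of variable, and then run an order-parameter analysis on the reduced system \eqref{B-5}. For the solution splitting (assertion (2)), I would set $w_j(t) := e^{-\Omega t}z_j(t)$. Since $\Omega$ is skew-Hermitian, $e^{-\Omega t}$ is unitary and preserves Hermitian inner products, so the coupling terms in \eqref{B-4} depend only on inner products $\langle z_i,z_j\rangle=\langle w_i,w_j\rangle$ and factor cleanly through the unitary. Differentiating $w_j = e^{-\Omega t}z_j$ and substituting \eqref{B-4}, the $\Omega z_j$ drift cancels the term arising from the time derivative of $e^{-\Omega t}$, and $e^{-\Omega t}$ can be pulled out of the coupling, yielding exactly \eqref{B-5} for $w_j$. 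Because $e^{-\Omega t}$ is an isometry, $\mathcal{D}(Z(t)) = \mathcal{D}(W(t))$, so it suffices to prove assertion (1) for the reduced system.

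For the reduced system I would analyze the order parameter $\rho(t) := \|w_c(t)\|$ with $w_c := \frac{1}{N}\sum_k w_k$. Using $\|w_j\|=1$ and writing $\zeta_j := \langle w_c, w_j\rangle$, a direct computation of $\frac{d}{dt}\rho^2 = 2\,\mathrm{Re}\,\langle\dot w_c, w_c\rangle$ gives
\begin{equation*}
\frac{d}{dt}\rho^2 \;=\; \frac{2\kappa_0}{N}\sum_{j=1}^{N}\bigl(\rho^2 - (\mathrm{Re}\,\zeta_j)^2\bigr) \;+\; \frac{2(\kappa_0+2\kappa_1)}{N}\sum_{j=1}^{N}(\mathrm{Im}\,\zeta_j)^2.
\end{equation*}
Since $|\mathrm{Re}\,\zeta_j|\leq|\zeta_j|\leq\rho$ and $\kappa_0,\kappa_1>0$, both sums are nonnegative, so $\rho$ is monotone nondecreasing. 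The crux is to upgrade this to a quantitative lower bound $\frac{d}{dt}\rho \geq c_\star(1-\rho)$ on the basin $\mathcal{B}:=\{\rho>(N-2)/N\}$, using the averaging identity $\frac{1}{N}\sum_j\mathrm{Re}\,\zeta_j = \rho^2$ together with $|\zeta_j|\leq\rho$. The threshold $\kappa_1<\kappa_0/4$ enters here: it guarantees that the tangential $\kappa_1$-rotation cannot drive the ensemble toward the degenerate configurations where $\mathrm{Re}\,\zeta_j\in\{\pm\rho\}$ for every $j$, on which the first sum vanishes.

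Assertion (1) then follows from the pairwise identity
\begin{equation*}
\sum_{i,j=1}^{N}\|w_i-w_j\|^2 \;=\; 2N^2(1-\rho^2),
\end{equation*}
which gives $\mathcal{D}(Z(t))^2 \leq 2N^2(1-\rho(t)^2)\to 0$ from the exponential convergence $\rho(t)\to 1$ obtained above. The hardest part is the previous paragraph: monotonicity of $\rho$ is free, but converting it into $\rho\to 1$ requires simultaneously controlling the imaginary drift produced by the $\kappa_1$-term and excluding the spurious stationary configurations on which the right-hand side of the $\dot\rho^2$ identity vanishes. The combined hypotheses $N\geq 3$, $\rho^{\mathrm{in}}>(N-2)/N$, and $\kappa_1<\kappa_0/4$ interlock exactly so as to force $\mathcal{B}$ to be forward invariant and the stationary set to be empty inside it.
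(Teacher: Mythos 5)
You should first note that this paper does not actually prove the proposition: it is quoted from \cite{H-P2} and the ``proof'' is a one-line citation, so there is no in-paper argument to compare against; your proposal has to stand on its own. Its skeleton is sound. The splitting $z_j=e^{\Omega t}w_j$ (using the implicit hypothesis $\Omega_j\equiv\Omega$, without which assertion (2) does not even parse), the identity
\[
\frac{d}{dt}\rho^2=\frac{2\kappa_0}{N}\sum_{j=1}^N\bigl(\rho^2-(\mathrm{Re}\,\zeta_j)^2\bigr)+\frac{2(\kappa_0+2\kappa_1)}{N}\sum_{j=1}^N(\mathrm{Im}\,\zeta_j)^2,
\]
and the relation $\sum_{i,j}\|w_i-w_j\|^2=2N^2(1-\rho^2)$ are all correct, and the last one does reduce (1) to showing $\rho\to1$.

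The genuine gap is exactly at what you call the crux: the bound $\dot\rho\ge c_\star(1-\rho)$ is asserted, not derived, and the mechanism you offer for it is wrong. In your own identity the $\kappa_1$-contribution enters with the favorable coefficient $2(\kappa_0+2\kappa_1)>0$, so there is no ``tangential $\kappa_1$-rotation'' that needs to be dominated, and the hypothesis $\kappa_1<\kappa_0/4$ plays no role in your argument (only $\kappa_1>-\kappa_0/2$ is used); attributing the closure of the proof to it is spurious. What actually closes the argument is the ingredient you name but do not exploit: from $\frac1N\sum_k\mathrm{Re}\,\zeta_k=\rho^2$ and $\mathrm{Re}\,\zeta_k\le|\zeta_k|\le\rho$ one gets, for every $j$, $\mathrm{Re}\,\zeta_j\ge N\rho^2-(N-1)\rho$, hence $\rho+\mathrm{Re}\,\zeta_j\ge N\rho\bigl(\rho-\tfrac{N-2}{N}\bigr)\ge N\rho^{in}\bigl(\rho^{in}-\tfrac{N-2}{N}\bigr)=:c_0>0$ on the set $\{\rho\ge\rho^{in}\}$, which is forward invariant by your monotonicity. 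Then
\[
\frac{d}{dt}\rho^2\ \ge\ \frac{2\kappa_0}{N}\sum_{j=1}^N(\rho-\mathrm{Re}\,\zeta_j)(\rho+\mathrm{Re}\,\zeta_j)\ \ge\ \frac{2\kappa_0 c_0}{N}\sum_{j=1}^N(\rho-\mathrm{Re}\,\zeta_j)\ =\ 2\kappa_0 c_0\,\rho(1-\rho),
\]
i.e.\ $\dot\rho\ge\kappa_0 c_0(1-\rho)$, giving $1-\rho(t)\le(1-\rho^{in})e^{-\kappa_0 c_0 t}$ and hence (1); this also disposes of the ``spurious stationary configurations'' (bipolar states force $\rho\le\tfrac{N-2}{N}$, excluded by invariance). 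So the route is completable, but as written the decisive inequality is missing and the stated role of the coupling condition is incorrect; if your argument genuinely needs weaker hypotheses than the quoted ones, say so explicitly rather than invent a mechanism for the unused assumption.
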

\begin{proof} For a proof, we refer to \cite{H-P2}.
\end{proof} 
Second, we consider the Lohe sphere model on the unit sphere in $\bbr^{d}$ under the influence of time-delay:
\begin{equation}\label{B-6}
\begin{cases}
\displaystyle \dot{x}_j=\Omega x_j+\displaystyle\frac{\kappa}{N}\sum_{k\neq j}\left(\|x_j\|^2 x_k^\tau-\langle x_k^\tau, x_j\rangle x_j\right), \quad t > 0,~~j \in {\mathcal N}, \vspace{0.2cm} \\
\displaystyle x_i(t)=\varphi_i(t)\in\bbs^{d-1},\quad -\tau\leq t\leq0,
\end{cases}
\end{equation}
For an emergent dynamics, we introduce a modified ensemble diameter as follows:
\begin{equation} \label{B-7}
\mathcal{D}^{0,\tau}(t) : = \displaystyle\max_{i,j}\|z_i(t) - z_j^\tau(t)\|.
\end{equation}
\begin{proposition}
\emph{\cite{C-H2}}
Suppose that the system parameters and initial data satisfy 
\begin{align*}
\begin{aligned}
& N \geq 3, \quad  \kappa >0, \quad \tau < \displaystyle\frac{1}{8(d\|\Omega\|_{\infty} +2\kappa)}, \\
& \|\varphi(t)\|=1, \quad t \in [-\tau, 0], \quad \sup_{-\tau \leq t \leq 0} \mathcal{D}(\varphi(t))< \frac{1}{8}, 
\end{aligned}
\end{align*}
and let $\{x_j \}$ be a global solution to $\eqref{B-6}$. Then, we have
\[
\lim_{t \to \infty} {\mathcal D}(X(t)) = 0.
\]
\end{proposition}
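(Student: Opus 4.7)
The plan is to combine a uniform a priori velocity bound with a Lyapunov-functional argument that compensates for the time-delay by an integrated penalty term; the specific smallness of $\tau$ in the hypothesis is precisely what lets the delay-induced error be absorbed by the dissipation coming from the Lohe coupling. First I would observe that because $\|x_j(t)\|=1$ for all $t$ and each summand $\|x_j\|^2 x_k^\tau - \langle x_k^\tau,x_j\rangle x_j$ has norm at most $2$, the right-hand side of \eqref{B-6} satisfies $\|\dot x_j(t)\| \leq d\|\Omega\|_\infty + 2\kappa =: M$. Integrating over $[t-\tau,t]$ yields the crucial delay estimate $\|x_j(t)-x_j^\tau(t)\|\leq M\tau$, and the hypothesis $\tau<1/(8M)$ gives $M\tau<1/8$.

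Using skew-symmetry of $\Omega$ to kill the free-flow contribution in $\tfrac{d}{dt}\langle x_i,x_j\rangle$, I would then compute the pairwise derivative $\tfrac{d}{dt}\|x_i-x_j\|^2$ and split each $x_k^\tau$ as $x_k+(x_k^\tau-x_k)$. After collecting terms and using $\|x_j\|=1$, this leads to a differential inequality of the schematic form
\[
\frac{d}{dt}\|x_i-x_j\|^2 \leq -\kappa\,\alpha(t)\,\|x_i-x_j\|^2 + C\kappa\tau\,\mathcal{D}(X(t)),
\]
where the coefficient $\alpha(t)$ stays bounded below as long as the diameter remains small. Next I would run a bootstrap/continuity argument to propagate diameter smallness: starting from $\sup_{-\tau\leq t\leq 0}\mathcal{D}(\varphi(t))<1/8$ and using $M\tau<1/8$, one shows that $\mathcal{D}(X(t))$ stays below a fixed threshold (say $1/2$) for all $t\geq 0$, so the dissipative mechanism remains active. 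A Lyapunov functional of the form
\[
\mathcal{L}(t)=\mathcal{D}(X(t))^2+\gamma\int_{t-\tau}^{t}\mathcal{D}(X(s))^2\,ds,
\]
for a suitable $\gamma>0$, combined with the pairwise inequality above (maximised over $(i,j)$), yields an absorption estimate $\mathcal{L}(t)+\beta\int_0^t\mathcal{D}(X(s))^2\,ds\leq\mathcal{L}(0)$. Since $\mathcal{D}(X)$ is uniformly Lipschitz by the first step, Barbalat's lemma \cite{Ba} then delivers $\mathcal{D}(X(t))\to 0$.

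The main obstacle, in my view, lies in making the bootstrap work with sharp enough constants: the delay error $M\tau$ must be strictly dominated by the coupling-induced dissipation along the entire orbit, and this is exactly where the explicit threshold $\tau<1/(8(d\|\Omega\|_\infty+2\kappa))$ comes from. A secondary technical point is that $\mathcal{D}(X(t))$ is only Lipschitz, not $C^1$, so the derivative estimates must be read in the sense of upper Dini derivatives, and the extremal pair $(i,j)$ attaining the maximum can switch in time; this is routine but must be handled carefully before invoking Barbalat.
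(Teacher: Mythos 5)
Your overall architecture (velocity bound $\|\dot x_j\|\le d\|\Omega\|_\infty+2\kappa=:M$, delay estimate $\|x_j-x_j^\tau\|\le M\tau$, trapping of the diameter, Lyapunov functional with an integral penalty, Barbalat) is the same as the one used in \cite{C-H2} and mirrored in Section \ref{sec:3} of this paper. However, there is a genuine gap in the central step: the schematic inequality
\[
\frac{d}{dt}\|x_i-x_j\|^2 \le -\kappa\,\alpha(t)\,\|x_i-x_j\|^2 + C\kappa\tau\,\mathcal{D}(X(t))
\]
is too lossy to give \emph{complete} aggregation. An inhomogeneous term that is linear in $\mathcal{D}$ with a fixed coefficient $C\kappa\tau$ does not vanish as $\mathcal{D}\to 0$, so this inequality only yields $\limsup_{t\to\infty}\mathcal{D}(X(t))=O(\tau)$, i.e.\ practical aggregation. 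Your integral term $\gamma\int_{t-\tau}^t\mathcal{D}(X(s))^2\,ds$ cannot repair this: its derivative contributes $\gamma\mathcal{D}(t)^2-\gamma\mathcal{D}(t-\tau)^2$, which can only cancel \emph{quadratic delayed-difference} terms, never a term of the form $C\kappa\tau\,\mathcal{D}(t)$, so the claimed absorption estimate $\mathcal{L}(t)+\beta\int_0^t\mathcal{D}^2\,ds\le\mathcal{L}(0)$ cannot be derived from your inequality.

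The fix is to avoid using the crude bound $\|x_k^\tau-x_k\|\le M\tau$ as an additive error. If you carry out the pairwise computation exactly, the ``bare'' delayed terms cancel between $\dot x_i$ and $\dot x_j$ (the full sums of $x_k^\tau$ are identical), and the delay enters only in two harmless places: (i) the dissipation coefficient $\frac{\kappa}{N}\sum_k\big(\langle x_k^\tau,x_i\rangle+\langle x_k^\tau,x_j\rangle\big)$, where the bound $\langle x_k^\tau,x_i\rangle\ge 1-\tfrac12\mathcal{D}^2-M\tau$ (or the trapping of the modified diameter $D^{0,\tau}$ of \eqref{B-7}, as in the paper) is perfectly adequate, and (ii) the $k=i,j$ self-interaction corrections, which must be kept in the form $\frac{2\kappa}{N}\|x_i-x_j\|\,\|x_i^\tau-x_j^\tau\|$ rather than bounded by $\frac{C\kappa\tau}{N}\|x_i-x_j\|$. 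After Young's inequality, the resulting $\frac{\kappa}{N}\|x_i^\tau-x_j^\tau\|^2$ is cancelled exactly by choosing $\gamma=\kappa/N$ in the (pairwise) Lyapunov functional, leaving a net coefficient of the type $-\tfrac{7\kappa}{4}+\tfrac{4\kappa}{N}<0$; this is precisely where the hypothesis $N\ge 3$ enters, a hypothesis your proposal never uses --- a sign that the $O(1/N)$ bookkeeping is where your argument goes astray. With that correction (and the Dini-derivative caveat you already note), the rest of your plan goes through as in Section \ref{sec:3.2.2}.
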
 
\begin{proof}
For a proof, we refer to Theorem 3.1 of \cite{C-H2}.
\end{proof}

\section{Emergence of complete aggregation} \label{sec:3}
\setcounter{equation}{0}
In this section, we provide an emergent dynamics of \eqref{A-1} under the following setting:
\[ a_{ik} \equiv 1, \quad i, k \in {\mathcal N} \quad \mbox{and} \quad \Omega = 0.\] 
Note that this case corresponds to the same free flow and complete network topology. Then system \eqref{A-1} becomes 
\begin{align}\label{C-1}
\begin{cases}
\dot{z}_j= \displaystyle  \frac{\kappa_0}{N}\sum_{k \neq j}(\langle z_j, z_j \rangle z_k^{\tau} - \langle z_k^{\tau}, z_j \rangle z_j) + \frac{\kappa_1}{N}\sum_{k \neq j}(\langle z_j, z_k^{\tau} \rangle - \langle z_k^{\tau} , z_j \rangle) z_j, \\
z_j(t) = \varphi_j(t)\in \mathbb{HS}^{d-1}, \quad -\tau \leq t \leq 0.
\end{cases}
\end{align}
In the following two subsections, we study complete aggregation in which coupling gains satisfy the following relations:
\begin{align*}
\begin{aligned}
& \kappa_1 + \frac{\kappa_0}{2} = 0 \quad  \mbox{(Stuart-Landau(SL) coupling gain pair)}, \\
& 0 < \Big| \kappa_1 + \frac{\kappa_0}{2} \Big| \ll 1 \quad  \mbox{(Close-to-SL coupling gain pair)}.
\end{aligned}
\end{align*}
In Section 2.3 of \cite{B-H-P}, authors reduced the vector version of Stuart-Landau model to the LHS model with the special pair of coupling gains.  From this process, Stuart-Landau(SL) coupling gain pair and close-to SL coupling gain pair were naturally obtained.
\subsection{SL coupling gain pair} \label{sec:3.1}
In this subsection, we consider the emergent behavior of \eqref{C-1} for the Stuart-Landau gain pair. In this case, the coupling term can be simplified as follows: on $\mathbb{HS}^{d-1}$,
\begin{align}
\begin{aligned} \label{C-1-1}
& \kappa_0 (\langle z_j, z_j \rangle z_k^{\tau} - \langle z_k^{\tau}, z_j \rangle z_j) + \kappa_1 (\langle z_j, z_k^{\tau} \rangle - \langle z_k^{\tau} , z_j \rangle) z_j \\
& \hspace{0.5cm} = \kappa_0 \Big[  z_k^{\tau} - \langle z_k^{\tau}, z_j \rangle z_j - \frac{1}{2}  (\langle z_j, z_k^{\tau} \rangle - \langle z_k^{\tau} , z_j \rangle) z_j                       \Big] \\
& \hspace{0.5cm} = \kappa_0 \left[ z^{\tau}_k-\frac{1}{2}\Big(  \langle z^{\tau}_k, z_j \rangle + \langle z_j, z^{\tau}_k \rangle \Big)z_j\right ] \\
& \hspace{0.5cm} =  \kappa_0 \big( z_k^{\tau} - \mathrm{Re}( \langle z_k^{\tau}, z_j \rangle) z_j\big).
\end{aligned}
\end{align}               
Finally, we combine \eqref{C-1} and \eqref{C-1-1} to get
\begin{align}\label{C-2}
\begin{cases}
\dot{z}_j= \displaystyle\frac{\kappa_0}{N}\sum_{k \neq j}\big( z_k^{\tau} - \mathrm{Re}( \langle z_k^{\tau}, z_j \rangle) z_j\big), \quad t > 0 ,\\
z_j(t) = \varphi_j(t)\in\bbc^d, \quad -\tau \leq t \leq 0.
\end{cases}
\end{align}

\begin{theorem} \label{T3.1}
Suppose system parameters and initial data set $\varphi_j$ satisfy
\[ \kappa_0 > 0, \quad N \geq 3,\quad i\in \mathcal{N}, \quad\tau < \displaystyle\frac{1}{16\kappa_0}, \quad \|\varphi_j\| =1, \quad D(\varphi(t)) < \displaystyle\frac{1}{8}, \quad t \in [-\tau , 0],
\]
and let $\{ z_j \}$ be a global solution to \eqref{C-2}. Then, the complete aggregation emerges asymptotically:
\[ \lim_{t \to \infty} {\mathcal D}(Z(t)) = 0. \]
\end{theorem}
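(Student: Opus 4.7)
The plan is to exploit the structural observation already flagged in the abstract: at the Stuart--Landau coupling gain pair, the LHS dynamics embeds isometrically into a real Lohe sphere dynamics on $\mathbb{S}^{2d-1}\subset\mathbb{R}^{2d}$. Once this embedding is set up, Theorem \ref{T3.1} becomes an immediate corollary of the time-delayed Lohe sphere result of \cite{C-H2} quoted above.

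First I would introduce the standard $\mathbb{R}$-linear identification $\mathbb{C}^d\to\mathbb{R}^{2d}$, $z=x+iy\mapsto Z:=(x,y)$. A direct computation gives
$$\|z\|_{\mathbb{C}^d}=\|Z\|_{\mathbb{R}^{2d}},\qquad\mathrm{Re}\,\langle w,z\rangle_{\mathbb{C}^d}=\langle W,Z\rangle_{\mathbb{R}^{2d}},$$
so this map is an $\mathbb{R}$-isometry carrying $\mathbb{HS}^{d-1}$ onto $\mathbb{S}^{2d-1}$ and preserving pairwise distances, hence preserving the ensemble diameter $\mathcal{D}$.

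Next I would apply this identification to \eqref{C-2}. Because its right-hand side only involves the vectors $z_k^\tau, z_j$ and the real scalar $\mathrm{Re}\,\langle z_k^\tau,z_j\rangle$, the transformed system reads
$$\dot{Z}_j=\frac{\kappa_0}{N}\sum_{k\neq j}\bigl(Z_k^\tau-\langle Z_k^\tau,Z_j\rangle Z_j\bigr).$$
By Lemma \ref{L2.1}, $\|Z_j\|=1$, so inserting the trivial factor $\|Z_j\|^2$ in front of $Z_k^\tau$ turns the above into exactly the Lohe sphere model \eqref{B-6} with $\Omega=0$, coupling gain $\kappa=\kappa_0$, and ambient dimension $2d$ in place of $d$.

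Finally I would match the hypotheses. With $\Omega=0$, the delay threshold $\tau<\tfrac{1}{8(d\|\Omega\|_\infty+2\kappa)}$ from the quoted Lohe sphere proposition collapses to $\tau<\tfrac{1}{16\kappa_0}$, which is precisely the assumption of Theorem \ref{T3.1} (the replacement $d\mapsto 2d$ is invisible here since $\|\Omega\|_\infty=0$, so the ambient dimension does not enter the bound). The remaining conditions $\|\varphi_j\|=1$ and $\mathcal{D}(\varphi(t))<\tfrac{1}{8}$ on $[-\tau,0]$ transfer verbatim under the isometry. The quoted proposition then yields $\mathcal{D}(Z(t))\to 0$ on $\mathbb{S}^{2d-1}$, which by isometry is the same as $\mathcal{D}(Z(t))\to 0$ on $\mathbb{HS}^{d-1}$. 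The only non-cosmetic step is the algebraic simplification \eqref{C-1-1} that collapses the $\kappa_1$ contribution; once that is in hand, I do not anticipate any analytical obstacle beyond careful bookkeeping of the isometry, and the result is essentially a transcription of the real-sphere theorem.
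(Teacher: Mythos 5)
Your proposal is correct, but it proves Theorem \ref{T3.1} by a genuinely different route than the paper. You reduce the SL-gain system \eqref{C-2} to the real Lohe sphere model with delay via the $\mathbb{R}$-isometry $\mathbb{C}^d\cong\mathbb{R}^{2d}$, using $\mathrm{Re}\,\langle w,z\rangle_{\mathbb{C}^d}=\langle W,Z\rangle_{\mathbb{R}^{2d}}$ and norm conservation (Lemma \ref{L2.1}) to identify \eqref{C-2} with \eqref{B-6} for $\Omega=0$, $\kappa=\kappa_0$, and then invoke the quoted result of \cite{C-H2}; the hypothesis matching is accurate, since with $\Omega=0$ the delay threshold is $\tau<\tfrac{1}{16\kappa_0}$ independently of the ambient dimension, and the diameter is preserved by the isometry. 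This is exactly the embedding the paper itself advertises in the abstract and conclusion, so it is a legitimate and economical proof of Theorem \ref{T3.1}. The paper, by contrast, does not use the embedding as a black box: it redevelops the delay estimates directly in the complex setting (Lemmas \ref{L3.1}--\ref{L3.4}, the trapping-set bound $D^{0,\tau}<\tfrac12$, the Lyapunov functional $\mathcal{E}_{ij}$ with the delay integral, and Barbalat's lemma). The payoff of the paper's heavier route is that these complex-variable estimates are precisely what get perturbed to handle the close-to-SL case of Theorem \ref{T3.2}, where your reduction breaks down: the residual $\tilde{\kappa}$ term involves $\mathrm{Im}\,\langle z_j,z_k^{\tau}\rangle$, which under the identification is a symplectic pairing rather than the Euclidean inner product, so the embedded system is no longer the LS model and \cite{C-H2} cannot be cited directly. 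Your argument therefore buys brevity for Theorem \ref{T3.1} alone, while the paper's argument buys the framework needed for the more general theorem; as a standalone proof of the stated theorem, yours is complete modulo the routine verification of the algebraic identity \eqref{C-1-1}, which the paper also carries out.
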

\begin{proof}
We leave its proof in Section \ref{sec:3.1.2}.
\end{proof}

\subsubsection{Basic a priori estimates} \label{sec:3.1.1} In the part, we provide four lemmas for the emergent dynamics of \eqref{C-2} following the strategy in \cite{C-H2}. 
\begin{lemma} \label{L3.1}
Let $\{z_j\}$ be a global solution to \eqref{C-2}. Then we have
\begin{align*}
\frac{d}{dt}\|z_i-z_j^s\|^2&\leq2\kappa_0\re\langle z_c^\tau-z_c^{\tau+s}, z_i-z_j^s\rangle-\kappa_0\|z_i-z_j^s\|^2(\re\langle z_c^\tau,z_i\rangle+\re\langle z_c^{\tau+s}, z_j^s\rangle)\\
&\hspace{0.5cm} -\frac{2\kappa_0}{N}\left(\re\langle z_i-z_j^s,z_i^\tau-z_j^{\tau+s}\rangle-\|z_i-z_j^s\|^2\right),
\end{align*}
for all $j\in\mathcal{N}$ and $t\geq s+\tau$.
\end{lemma}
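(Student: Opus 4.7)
The plan is to differentiate $\|z_i-z_j^s\|^2$ directly using the identity $\frac{d}{dt}\|w\|^2=2\re\langle\dot w,w\rangle$ (valid for the convention $\langle w,z\rangle=\sum_\alpha[\bar w]_\alpha[z]_\alpha$) with $w=z_i-z_j^s$. Since $t\geq s+\tau$ and $z_j^s(t)=z_j(t-s)$, one has $\dot z_j^s(t)=\dot z_j(t-s)$ and $z_k^\tau(t-s)=z_k^{\tau+s}(t)$, so \eqref{C-2} gives expressions for both $\dot z_i(t)$ and $\dot z_j^s(t)$ in terms of the genuine dynamics (no initial-data regime needed).

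First I would rewrite the sums $\sum_{k\neq i}$ and $\sum_{k\neq j}$ as full sums minus the diagonal term. Using $\frac{1}{N}\sum_{k=1}^N z_k^\tau = z_c^\tau$ and the analogous identity for $z_c^{\tau+s}$, this yields
\[
\dot z_i=\kappa_0\bigl(z_c^\tau-\re\langle z_c^\tau,z_i\rangle z_i\bigr)-\frac{\kappa_0}{N}\bigl(z_i^\tau-\re\langle z_i^\tau,z_i\rangle z_i\bigr),
\]
and the analogous formula for $\dot z_j^s$ with $\tau$ replaced by $\tau+s$ and $z_j^s$ in place of $z_i$. Subtracting and taking $2\re\langle\cdot,z_i-z_j^s\rangle$ splits the result into a mean-field part and a $-\frac{2\kappa_0}{N}$-correction coming from the two diagonal terms.

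For the mean-field part I would use the elementary identities, valid on $\mathbb{HS}^{d-1}$ by Lemma \ref{L2.1},
\[
\re\langle z_i,z_i-z_j^s\rangle=\tfrac12\|z_i-z_j^s\|^2,\qquad \re\langle z_j^s,z_i-z_j^s\rangle=-\tfrac12\|z_i-z_j^s\|^2,
\]
which produce exactly $2\kappa_0\re\langle z_c^\tau-z_c^{\tau+s},z_i-z_j^s\rangle$ from the linear $z_c^\tau,z_c^{\tau+s}$ terms and $-\kappa_0\|z_i-z_j^s\|^2\bigl(\re\langle z_c^\tau,z_i\rangle+\re\langle z_c^{\tau+s},z_j^s\rangle\bigr)$ from the projection terms. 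For the diagonal correction, the same two identities convert the projection pieces into $\frac{\kappa_0}{N}\|z_i-z_j^s\|^2\bigl(\re\langle z_i^\tau,z_i\rangle+\re\langle z_j^{\tau+s},z_j^s\rangle\bigr)$, while the remaining pieces give $-\frac{2\kappa_0}{N}\re\langle z_i^\tau-z_j^{\tau+s},z_i-z_j^s\rangle$ (and $\re$ is symmetric in its two arguments).

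To close the argument and turn the equality into the stated inequality, I would invoke Lemma \ref{L2.1} and Cauchy--Schwarz to bound
\[
\re\langle z_i^\tau,z_i\rangle+\re\langle z_j^{\tau+s},z_j^s\rangle\leq \|z_i^\tau\|\|z_i\|+\|z_j^{\tau+s}\|\|z_j^s\|=2,
\]
which, together with $\|z_i-z_j^s\|^2\geq0$, transforms the diagonal correction into the claimed $-\frac{2\kappa_0}{N}\bigl(\re\langle z_i-z_j^s,z_i^\tau-z_j^{\tau+s}\rangle-\|z_i-z_j^s\|^2\bigr)$ upper bound. No step is analytically deep; the only obstacle is careful algebraic bookkeeping (sign tracking in the $j$-term, and matching the factor of $2$ introduced by $\frac{d}{dt}\|w\|^2=2\re\langle\dot w,w\rangle$ against the factors of $\tfrac12$ from the real-part identities above).
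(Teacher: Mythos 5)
Your proposal is correct and follows essentially the same route as the paper: write $\dot z_i$ and $\dot z_j^s$ as mean-field terms in $z_c^\tau$, $z_c^{\tau+s}$ minus the $\tfrac{1}{N}$ diagonal corrections, take $2\re\langle\cdot, z_i-z_j^s\rangle$, use the unit-norm identity $\re\langle z_i,z_j^s\rangle=1-\tfrac12\|z_i-z_j^s\|^2$, and finally bound $\re\langle z_i^\tau,z_i\rangle+\re\langle z_j^{\tau+s},z_j^s\rangle\le 2$ by Cauchy--Schwarz to pass from the resulting identity to the stated inequality. The only difference is presentational: you apply the half-diameter identities directly instead of expanding all pairwise inner products and recombining, as the paper does.
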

\begin{proof} We set 
\[  z_j^{s}(t) = z_j(t-s), \quad j \in {\mathcal N}. \]
Then, it satisfies 
\begin{equation} \label{C-3}
\dot{z}_j^s= \displaystyle\frac{\kappa_0}{N}\sum_{k \neq j}\big( z_k^{\tau+s} - \mathrm{Re}( \langle z_k^{\tau+s}, z_j^s \rangle) z_j^s\big).
\end{equation}
It follows from $\eqref{C-2}_1$ and \eqref{C-3} that 
\begin{align}
\begin{aligned} \label{C-3-1}
\frac{d}{dt}(z_i-z_j^s)
&=\frac{\kappa_0}{N}\left(\sum_{k\neq i}(z_k^\tau-\mathrm{Re}(\langle z_k^\tau, z_i\rangle)z_i)-\sum_{k\neq j}(z_k^{\tau+s}-\mathrm{Re}(\langle z_k^{\tau+s},z_j^s\rangle)z_j^s)\right)\\
&=\kappa_0\left((z_c^\tau-\mathrm{Re}(\langle z_c^\tau, z_i\rangle)z_i)-(z_c^{\tau+s}-\mathrm{Re}(\langle z_c^{\tau+s}, z_j^s\rangle)z_j^s)\right)\\
&\hspace{0.2cm}-\frac{\kappa_0}{N}\left((z_i^\tau-\mathrm{Re}(\langle z_i^\tau, z_i\rangle)z_i)-(z_j^{\tau+s}-\mathrm{Re}(\langle z_j^{\tau+s},z_j^s\rangle)z_j^s)\right).
\end{aligned}
\end{align}
This yields
\begin{align}
\begin{aligned} \label{C-4}
&\frac{d}{dt}\|z_i-z_j^s\|^2 =2\mathrm{Re}\left\langle z_i-z_j^s, \frac{d}{dt}(z_i-z_j^s)\right\rangle\\
& \hspace{0.7cm} =2\kappa_0 \mathrm{Re}\left\langle z_i-z_j^s,z_c^\tau-\mathrm{Re}(\langle z_c^\tau, z_i\rangle)z_i \right\rangle
-2\kappa_0 \mathrm{Re}\left\langle z_i-z_j^s,z_c^{\tau+s}-\mathrm{Re}(\langle z_c^{\tau+s}, z_j^s\rangle)z_j^s \right\rangle\\
&\hspace{0.7cm} -\frac{2\kappa_0}{N}\mathrm{Re}\left\langle z_i-z_j^s,z_i^\tau-\mathrm{Re}(\langle z_i^\tau, z_i\rangle)z_i \right\rangle
+\frac{2\kappa_0}{N}\mathrm{Re}\left\langle z_i-z_j^s,z_j^{\tau+s}-\mathrm{Re}(\langle z_j^{\tau+s},z_j^s\rangle)z_j^s \right\rangle\\
&\hspace{0.7cm} =2\kappa_0 \mathrm{Re}\left\langle -z_j^s,z_c^\tau-\mathrm{Re}(\langle z_c^\tau, z_i\rangle)z_i \right\rangle
-2\kappa_0 \mathrm{Re}\left\langle z_i,z_c^{\tau+s}-\mathrm{Re}(\langle z_c^{\tau+s}, z_j^s\rangle)z_j^s \right\rangle\\
& \hspace{0.7cm} -\frac{2\kappa_0}{N}\mathrm{Re}\left\langle -z_j^s,z_i^\tau-\mathrm{Re}(\langle z_i^\tau, z_i\rangle)z_i \right\rangle
+\frac{2\kappa_0}{N}\mathrm{Re}\left\langle z_i,z_j^{\tau+s}-\mathrm{Re}(\langle z_j^{\tau+s},z_j^s\rangle)z_j^s \right\rangle\\
& \hspace{0.7cm} =2\kappa_0(\re\langle z_c^\tau,z_i\rangle\re\langle z_j^s, z_i\rangle+\re\langle z_c^{\tau+s}, z_j^s\rangle\re\langle z_i,z_j^s\rangle-\re\langle z_j^s, z_c^\tau\rangle-\re\langle z_i, z_c^{\tau+s}\rangle)\\
& \hspace{0.7cm}-\frac{2\kappa_0}{N}\big(\re\langle z_i^\tau,z_i\rangle\re\langle z_j^s,z_i\rangle+\re\langle z_i,z_j^s\rangle\re\langle z_j^{\tau+s},z_j^s\rangle-\re\langle z_j^s,z_i^\tau\rangle-\re\langle z_i, z_j^{\tau+s}\rangle\big).
\end{aligned}
\end{align}
On the other hand, we have
\begin{equation} \label{C-5}
\|z_i-z_j^s\|^2=2(1-\re\langle z_i, z_j^s\rangle), \quad \mbox{i.e.,} \quad \re\langle z_i, z_j^s\rangle=1-\frac{1}{2}\|z_i-z_j^s\|^2.
\end{equation}
We combine \eqref{C-4} and \eqref{C-5} to obtain
\begin{align}
\begin{aligned} \label{C-6}
&\frac{d}{dt}\|z_i-z_j^s\|^2v=2\kappa_0(\re\langle z_c^\tau,z_i\rangle+\re\langle z_j^{\tau+s}, z_j^s\rangle-\re\langle z_j^s, z_c^\tau\rangle-\re\langle z_i, z_c^{\tau+s}\rangle)\\
&\hspace{0.5cm}-\kappa_0\|z_i-z_j^s\|^2(\re\langle z_c^\tau,z_i\rangle+\re\langle z_c^{\tau+s}, z_j^s\rangle)\\
&\hspace{0.5cm}-\frac{2\kappa_0}{N}(\re\langle z_i^\tau,z_i\rangle+\re\langle z_j^{\tau+s},z_j^s\rangle-\re\langle z_j^s,z_i^\tau\rangle-\re\langle z_i, z_j^{\tau+s}\rangle)\\
&\hspace{0.5cm}+\frac{\kappa_0}{N}\|z_i-z_j^s\|^2(\re\langle z_i^\tau,z_i\rangle+\re\langle z_j^{\tau+s},z_j^s\rangle)\\
&=2\kappa_0\re\langle z_c^\tau-z_c^{\tau+s}, z_i-z_j^s\rangle-\kappa_0\|z_i-z_j^s\|^2(\re\langle z_c^\tau,z_i\rangle+\re\langle z_c^{\tau+s}, z_j^s\rangle)\\
&\hspace{0.5cm}-\frac{2\kappa_0}{N}\re\langle z_i-z_j^s,z_i^\tau-z_j^{\tau+s}\rangle+\frac{\kappa_0}{N}\|z_i-z_j^s\|^2(\re\langle z_i^\tau,z_i\rangle+\re\langle z_j^{\tau+s},z_j^s\rangle).
\end{aligned}
\end{align}
Finally, \eqref{C-6} and $|\langle z, w\rangle|\leq \|z\|\cdot\|w\|$ yield desired estimate.
\end{proof}

\begin{lemma}\label{L3.2}
Let $\{z_j\}$ be a global solution to \eqref{C-2}. Then we have following relation for suitable positive numbers $s,u,t$:
\begin{align*}
\begin{aligned}
&\Big|\|z_i(t) - z_j^s(t)\|^2 - \mathrm{Re}\langle z_i^{u}(t) - z_j^{u +s}(t), z_i(t) - z_j^s(t) \rangle\Big| \\
& \hspace{1cm} \leq 2u\kappa_0  \sup_{t-u < v < t}\Big(\|z_i(v) - z_j^s(v)\| + \|z_c^{\tau}(v) - z_c^{\tau +s}(v)\|\Big) \|z_i(t) -z_j^s(t)\| \\
&\hspace{1.2cm} +\frac{2u\kappa_0}{N}  \sup_{t-u < v < t} \Big(\|z_i(v) - z_j^s(v)\| + \|z_i^{\tau}(v) - z_j^{\tau +s}(v)\|\Big) \|z_i(t) -z_j^s(t)\|.
\end{aligned}
\end{align*}
\end{lemma}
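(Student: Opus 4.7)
The plan is to rewrite the left-hand side as an inner product involving a time increment of $z_i - z_j^s$, convert that increment to an integral of the velocity via the fundamental theorem of calculus, and then bound the velocity difference using the right-hand side of \eqref{C-2}.

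First I would note that since $\|z_i(t) - z_j^s(t)\|^2 = \mathrm{Re}\langle z_i(t) - z_j^s(t),\, z_i(t) - z_j^s(t)\rangle$, the expression inside the absolute value equals
\[
\mathrm{Re}\bigl\langle \bigl(z_i(t) - z_i^u(t)\bigr) - \bigl(z_j^s(t) - z_j^{u+s}(t)\bigr),\, z_i(t) - z_j^s(t)\bigr\rangle.
\]
By the fundamental theorem of calculus together with the change of variable $v \mapsto v-s$ in the integral for $z_j^s - z_j^{u+s}$, one has
\[
\bigl(z_i(t) - z_i^u(t)\bigr) - \bigl(z_j^s(t) - z_j^{u+s}(t)\bigr) = \int_{t-u}^{t}\bigl(\dot{z}_i(v) - \dot{z}_j^{s}(v)\bigr)\,dv.
\]
Applying Cauchy--Schwarz to the real inner product reduces the problem to an $L^\infty$-bound on $\|\dot{z}_i(v) - \dot{z}_j^{s}(v)\|$ multiplied by $u$ and $\|z_i(t)-z_j^s(t)\|$.

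Next, I would obtain the velocity estimate from the representation \eqref{C-3-1}. Introduce the auxiliary map $F(a,b) := a - \mathrm{Re}\langle a, b\rangle b$, so that
\[
\dot{z}_i - \dot{z}_j^s = \kappa_0\bigl[F(z_c^\tau, z_i) - F(z_c^{\tau+s}, z_j^s)\bigr] - \frac{\kappa_0}{N}\bigl[F(z_i^\tau, z_i) - F(z_j^{\tau+s}, z_j^s)\bigr].
\]
By inserting and subtracting $\mathrm{Re}\langle a_2, b_2\rangle b_1$ and using sesquilinearity, one easily checks the Lipschitz-type inequality
\[
\|F(a_1,b_1) - F(a_2,b_2)\| \leq 2\|a_1 - a_2\| + 2\|a_2\|\,\|b_1 - b_2\|
\]
for any $a_1,a_2 \in \bbc^d$ and unit vectors $b_1, b_2 \in \mathbb{HS}^{d-1}$. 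Since $\|z_c^{\tau+s}\| \leq 1$ and $\|z_j^{\tau+s}\| = 1$ by Lemma \ref{L2.1}, this yields
\[
\|\dot{z}_i(v) - \dot{z}_j^s(v)\| \leq 2\kappa_0\bigl(\|z_c^\tau(v) - z_c^{\tau+s}(v)\| + \|z_i(v) - z_j^s(v)\|\bigr) + \frac{2\kappa_0}{N}\bigl(\|z_i^\tau(v) - z_j^{\tau+s}(v)\| + \|z_i(v) - z_j^s(v)\|\bigr).
\]
Substituting this into the integral estimate and replacing the integrand by its supremum over $(t-u, t)$ produces exactly the claimed inequality.

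The main obstacle is not conceptual but rather bookkeeping: several time-shifted states $z_i$, $z_i^\tau$, $z_j^s$, $z_j^{u+s}$, $z_c^\tau$, $z_c^{\tau+s}$ must all be tracked simultaneously, and one must be careful that the change of variable in the integral aligns the two velocity contributions correctly. Isolating the abstract inequality on $F$ before plugging in the specific arguments is, I expect, the cleanest way to keep the superscripts and subscripts straight and to make the $2\kappa_0$ and $\frac{2\kappa_0}{N}$ prefactors match precisely.
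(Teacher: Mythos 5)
Your proposal is correct and follows essentially the same route as the paper: rewrite the left-hand side as $\mathrm{Re}\langle (z_i - z_i^u)-(z_j^s - z_j^{u+s}),\, z_i - z_j^s\rangle$, integrate the relation \eqref{C-3-1} over $[t-u,t]$, apply Cauchy--Schwarz, and bound the integrand using $\|z_i\|=\|z_j^s\|=1$ and $\|z_c\|\le 1$. Your auxiliary Lipschitz estimate for $F(a,b)=a-\mathrm{Re}\langle a,b\rangle b$ merely packages the elementary add-and-subtract computation the paper delegates to \cite{C-H2}, and it reproduces the prefactors $2u\kappa_0$ and $\tfrac{2u\kappa_0}{N}$ exactly.
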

\begin{proof} Note that 
\begin{align*}
&\left|\|z_i(t) - z_j^s(t)\|^2 - \mathrm{Re}\langle z_i^{u}(t) - z_j^{u +s}(t), z_i(t) - z_j^s(t) \rangle\right|\\
&\hspace{3cm}= \left|\mathrm{Re}\left(\|z_i(t) - z_j^s(t)\|^2 - \langle z_i^{u}(t) - z_j^{u +s}(t), z_i(t) - z_j^s(t) \rangle \right)\right|.
\end{align*}
We integrate \eqref{C-3-1} on the interval $[t-u, t]$ and take the inner product of the resulting relation and $ z_i(t) - z_j^s(t)$ as in \cite{C-H2} to find
\begin{align*}
\begin{aligned}
&\hspace{-1cm}\left|\mathrm{Re}\left(\|z_i(t) - z_j^s(t)\|^2 - \langle z_i^{u}(t) - z_j^{u +s}(t), z_i(t) - z_j^s(t) \rangle \right)\right|\\
&  \leq \left|\|z_i(t) - z_j^s(t)\|^2 - \langle z_i^{u}(t) - z_j^{u +s}(t), z_i(t) - z_j^s(t) \rangle\right| \\
&  \leq  2u\kappa_0  \sup_{t-u < v < t}\Big(\|z_i(v) - z_j^s(v)\| + \|z_c^{\tau}(v) - z_c^{\tau +s}(v)\|\Big) \|z_i(t) -z_j^s(t)\| \\
&\hspace{0.2cm} +\frac{2u\kappa_0}{N}  \sup_{t-u < v < t} \Big(\|z_i(v) - z_j^s(v)\| + \|z_i^{\tau}(v) - z_j^{\tau +s}(v)\|\Big) \|z_i(t) -z_j^s(t)\|.
\end{aligned}
\end{align*}
\end{proof}

\begin{lemma} \label{L3.3}
Let $\{z_j\}$ be a global solution to \eqref{C-2}. Then, the functional \eqref{B-7} satisfies 
\begin{align*}
\begin{aligned}
\frac{d}{dt} D^{0,\tau}(t) &\leq \kappa_0 \| z_c^{\tau} - z_c^{2\tau}\| - \frac{\kappa_0D^{0, \tau}(t)}{2}
\left(2 - \frac{D^{0, \tau}(t)^2}{2} -\frac{D^{0, \tau}(t-\tau)^2}{2}\right) \\
&\hspace{0.5cm} +\frac{4\kappa_0^2\tau(N+1)}{N^2}\left(\sup_{t-2\tau< v < t}D^{0, \tau}(v)\right).
\end{aligned}
\end{align*}
\end{lemma}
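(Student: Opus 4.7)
The plan is to estimate the upper Dini derivative of $D^{0,\tau}(t)$ by selecting, at each fixed time $t$, a pair of indices $(i_*, j_*)$ that attains the maximum, i.e., $D^{0,\tau}(t) = \|z_{i_*}(t) - z_{j_*}^\tau(t)\|$. By the standard envelope (Danskin-type) argument, the upper Dini derivative of the maximum is bounded by $\tfrac{d}{dt}\|z_{i_*}(t) - z_{j_*}^\tau(t)\| = \tfrac{1}{2D^{0,\tau}(t)}\tfrac{d}{dt}\|z_{i_*}(t) - z_{j_*}^\tau(t)\|^2$, so it suffices to bound $\tfrac{d}{dt}\|z_{i_*} - z_{j_*}^\tau\|^2$ via Lemma~\ref{L3.1} with $s = \tau$ and then divide by $2D^{0,\tau}(t)$. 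I will handle the three resulting groups of terms on the right-hand side separately.

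The first term $2\kappa_0 \re\langle z_c^\tau - z_c^{2\tau}, z_{i_*} - z_{j_*}^\tau\rangle$ is controlled by Cauchy--Schwarz as $2\kappa_0 \|z_c^\tau - z_c^{2\tau}\| D^{0,\tau}(t)$, yielding the summand $\kappa_0 \|z_c^\tau - z_c^{2\tau}\|$ after the division. For the second term, I will use the identity $\re\langle z, w\rangle = 1 - \tfrac{1}{2}\|z - w\|^2$ valid on $\mathbb{HS}^{d-1}$, which gives
\[
\re\langle z_c^\tau, z_{i_*}\rangle = 1 - \frac{1}{2N}\sum_{k=1}^{N} \|z_{i_*} - z_k^\tau\|^2 \geq 1 - \frac{1}{2}D^{0,\tau}(t)^2,
\]
and similarly $\re\langle z_c^{2\tau}, z_{j_*}^\tau\rangle \geq 1 - \tfrac{1}{2}D^{0,\tau}(t-\tau)^2$. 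Substituting, using $\|z_{i_*} - z_{j_*}^\tau\|^2 = D^{0,\tau}(t)^2$, and dividing by $2D^{0,\tau}(t)$ produces exactly the middle term of the target inequality.

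For the third term, I rewrite it as $\tfrac{2\kappa_0}{N}\bigl(\|z_{i_*} - z_{j_*}^\tau\|^2 - \re\langle z_{i_*} - z_{j_*}^\tau, z_{i_*}^\tau - z_{j_*}^{2\tau}\rangle\bigr)$, which is precisely the quantity bounded by Lemma~\ref{L3.2} with $u = s = \tau$. The ingredients inside the two suprema there are controlled by $\|z_{i_*}(v) - z_{j_*}^\tau(v)\| \leq D^{0,\tau}(v)$, by $\|z_c^\tau(v) - z_c^{2\tau}(v)\| \leq \max_k \|z_k(v-\tau) - z_k^\tau(v-\tau)\| \leq D^{0,\tau}(v-\tau)$, and analogously by $\|z_{i_*}^\tau(v) - z_{j_*}^{2\tau}(v)\| \leq D^{0,\tau}(v-\tau)$. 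Since $v$ ranges over $(t-\tau, t)$, the shifts $v-\tau$ range over $(t-2\tau, t-\tau)$, so each of the two suprema in Lemma~\ref{L3.2} is at most $2\sup_{t-2\tau < v < t} D^{0,\tau}(v)$. Multiplying out the factor $\tfrac{2\kappa_0}{N}\bigl(1 + \tfrac{1}{N}\bigr)$ and dividing by $2D^{0,\tau}(t)$ then yields the announced constant $\tfrac{4\kappa_0^2 \tau (N+1)}{N^2}$.

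The main obstacle is bookkeeping rather than conceptual: aligning the delayed supremum windows with the interval $(t-2\tau, t)$ in the target, and justifying the envelope step at instants where the attaining pair changes. The latter is standard once one observes that each $t \mapsto \|z_i(t) - z_j^\tau(t)\|^2$ is $C^1$ on $[0,\infty)$ by global well-posedness of \eqref{C-2} together with Lemma~\ref{L2.1}.
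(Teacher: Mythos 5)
Your proposal is correct and is essentially the paper's own argument written out in full: the paper likewise takes the maximizing pair, sets $s=u=\tau$ in Lemma \ref{L3.1} and Lemma \ref{L3.2}, uses the same key real-part estimate $1-\mathrm{Re}\langle z_i, z_c\rangle \le \tfrac{1}{2}D^{0,\tau}(t)^2$, and otherwise defers the bookkeeping to Lemmas 4.1--4.2 of \cite{C-H2}, arriving at the same constants you obtain. The only small inaccuracy is your claim that $t\mapsto\|z_i(t)-z_j^\tau(t)\|^2$ is $C^1$ on $[0,\infty)$ --- for $t<\tau$ the delayed argument is only the continuous datum $\varphi_j$, so one has differentiability only for $t>\tau$ (and the pointwise identity of Lemma \ref{L3.1} for $t\ge 2\tau$), which is harmless since the inequality is only used on $[2\tau,\infty)$.
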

\begin{proof} In Lemma \ref{L3.2}, we set $s =\tau$ and take $s, u = \tau$. Since inequalities in Lemma \ref{L3.2} and Lemma \ref{L3.3} are similar to estimates in Lemma 4.1 and 4.2 in \cite{C-H2}, we can derive the same result. The only difference is that we have terms involving real parts, but it can be estimated in the same way as \cite{C-H2} since 
\begin{align*}
1 - \mathrm{Re}(\langle z_i , z_c \rangle) &= \mathrm{Re}(1 - \langle z_i , z_c \rangle) = \frac{1}{N}\sum_{k=1}^{N}\mathrm{Re}(1 - \langle z_i, z_k \rangle)
= \frac{1}{N}\sum_{k=1}^{N}\frac{\|z_i - z_k\|^2}{2} \leq \frac{D^{0,\tau}(t)^2}{2}.
\end{align*}
\end{proof}
We set 
\begin{equation} \label{C-7}
\Delta_{z_j}^\tau(t) = \|z_j(t) - z_j^{\tau}(t)\|.
\end{equation}
Then we have an estimate for $\Delta_{z_j}^\tau$ as following lemma.
\begin{lemma} \label{L3.4}
Let $\{z_j\}$ be a global solution to \eqref{C-2}. Then, the functional $\Delta_{z_j}^\tau$ satisfies 
\[
\Delta_{z_j}^\tau(t) \leq 2\kappa_0\tau \left( \frac{N-1}{N}\right)
\]
\end{lemma}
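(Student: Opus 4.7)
The plan is straightforward: reduce the estimate of $\Delta_{z_j}^\tau(t)=\|z_j(t)-z_j(t-\tau)\|$ to a uniform bound on $\|\dot{z}_j\|$ and then integrate over a time-window of length $\tau$.

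First, I would invoke the fundamental theorem of calculus on each component of $z_j$ and use the triangle inequality (for integrals of vector-valued functions) to write
\[
\Delta_{z_j}^\tau(t) \;=\; \left\| \int_{t-\tau}^{t} \dot{z}_j(s)\,ds \right\| \;\leq\; \int_{t-\tau}^{t} \|\dot{z}_j(s)\|\,ds.
\]
Thus it suffices to establish a uniform-in-time bound $\|\dot{z}_j(s)\|\leq 2\kappa_0(N-1)/N$ for every $s\geq 0$, after which integrating yields the stated inequality immediately.

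Next, I would bound $\|\dot{z}_j\|$ using the right-hand side of \eqref{C-2}. The conservation of modulus from Lemma \ref{L2.1} gives $\|z_j(s)\|=\|z_k^\tau(s)\|=1$ for all relevant indices and times. Hence, for each $k\neq j$, the triangle inequality yields
\[
\Big\| z_k^\tau - \mathrm{Re}(\langle z_k^\tau, z_j\rangle)\, z_j \Big\| \;\leq\; \|z_k^\tau\| + \big|\mathrm{Re}(\langle z_k^\tau, z_j\rangle)\big|\,\|z_j\| \;\leq\; 1+1 \;=\; 2,
\]
where in the last step I use $|\mathrm{Re}(\langle z_k^\tau, z_j\rangle)|\leq |\langle z_k^\tau,z_j\rangle|\leq \|z_k^\tau\|\|z_j\|=1$ by Cauchy--Schwarz. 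Summing over the $N-1$ indices $k\neq j$ and multiplying by the prefactor $\kappa_0/N$, I obtain
\[
\|\dot{z}_j(s)\| \;\leq\; \frac{\kappa_0}{N}\sum_{k\neq j} 2 \;=\; 2\kappa_0\,\frac{N-1}{N}.
\]

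Combining the two displays, $\Delta_{z_j}^\tau(t) \leq \tau\cdot 2\kappa_0(N-1)/N$, which is the claim. There is really no substantive obstacle here: the estimate is a direct consequence of unit-modulus conservation plus the triangle inequality, and the proof requires no Gronwall-type argument or spectral input. The only minor point worth noting is that the Hermitian inner product convention $\langle w,z\rangle=\sum_\alpha \overline{[w]_\alpha}[z]_\alpha$ is sesquilinear; one should verify that $\mathrm{Re}(\langle z_k^\tau, z_j\rangle)$ is real-valued (which it is by construction) so that the expression $\mathrm{Re}(\langle z_k^\tau,z_j\rangle)\,z_j$ is an honest scalar multiple of $z_j$ and the triangle-inequality bound $\|z_k^\tau\|+|\mathrm{Re}(\langle z_k^\tau,z_j\rangle)|\|z_j\|$ applies without further adjustment.
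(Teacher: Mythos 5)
Your proposal is correct and follows essentially the same route as the paper: write $z_j(t)-z_j^\tau(t)=\int_{t-\tau}^t \dot{z}_j(s)\,ds$, bound the integrand termwise by $2$ using unit modulus and Cauchy--Schwarz, and integrate over the window of length $\tau$. No gaps.
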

\begin{proof} Note that 
\[
z_j(t) - z_j^{\tau}(t) = z_j(t) - z_j(t-\tau) = \int_{t-\tau}^{t} \dot{z}_j (s)ds,
\]
This yields
\begin{align*}
\left\|\int_{t-\tau}^{t} \dot{z}_j (s)ds\right\| &= \left\|\int_{t-\tau}^{t} \left(\displaystyle\frac{\kappa_0}{N}\sum_{k \neq j}\big( z_k^{\tau} - \mathrm{Re}( \langle z_k^{\tau}, z_j \rangle) z_j\big)\right)ds\right\|\\
&\leq \int_{t-\tau}^{t} \displaystyle\frac{\kappa_0}{N}\sum_{k \neq j}\big( z_k^{\tau} - \mathrm{Re}( \langle z_k^{\tau}, z_j \rangle) z_j\big)\|ds \\
& \leq \int_{t-\tau}^{t} \displaystyle\frac{\kappa_0}{N}\sum_{k \neq j}\big( \|z_k^{\tau}\| + |\mathrm{Re}( \langle z_k^{\tau}, z_j \rangle)|\cdot \|z_j\|\big)ds \\
& \leq \int_{t-\tau}^{t}\displaystyle\frac{\kappa_0}{N}\sum_{k \neq j}2ds=2\kappa_0\tau \left( \frac{N-1}{N}\right).
\end{align*}
\end{proof}
Now we are ready to provide a proof of our first main result. 

{\subsubsection{Proof of Theorem \ref{T3.1}} \label{sec:3.1.2} In this part, we present our first result on the complete aggregation by combining all the estimates in Lemma \ref{L3.1} - Lemma \ref{L3.4} in two steps. We will briefly sketch the proof, since in the next section, we will provide more general statement ans its proof.  \vspace{0.2cm}

\noindent $\bullet$~Step A (Existence of trapping set): We claim
\[  
D^{0,\tau}(t) < \frac{1}{2}, \quad t \geq 0. 
\]
\begin{proof} 
We first estimate $D^{0,\tau}(t)$ in an interval $[-\tau, 2\tau]$. Next,
we define a set
\[
\mathcal{T} : = \left\{t \in (2\tau, \infty) : D^{0,\tau}(t)<\frac{1}{2} \right\},
\]
and proceed the proof using Lipschitz continuity of $D^{0,\tau}(t)$ in order to show $\sup\mathcal{T} = \infty$. We have 
\begin{align*}
\frac{d}{dt}D^{0,\tau}(t)&\leq \frac{\kappa_0}{8} -\frac{\kappa_0}{2} D^{0,\tau}(t)\left(2 - \frac{ D^{0,\tau}(t)^2}{2} - \frac{ D^{0,\tau}(t- \tau)^2}{2}\right)
+\frac{4\kappa_0^2\tau(N+1)}{N}\sup_{t-2\tau<v<t} D^{0,\tau}(v)\\
&< \frac{\kappa_0}{8} -\frac{7\kappa_0}{8}D^{0,\tau}(t) +\frac{\kappa_0}{18} < \frac{\kappa_0}{4}  - \frac{7\kappa_0}{8} D^{0,\tau}(t).
\end{align*}
Hence, it follows  that 
\[
D^{0,\tau}(t) \leq \max\left(D^{0,\tau}(0), \frac{\frac{\kappa_0}{4}}{\frac{7\kappa_0}{8}}\right) < \frac{1}{2}.
\]

\vspace{0.5cm}

\noindent $\bullet$~Step B (Punching step):  We claim
\[
\lim_{t\to\infty}\|z_i(s)-z_j(s)\|=0.
\]
For this, we define a Lyapunov functional $\mathcal{E}$: for $Z = (z_1, \cdots , z_N)$, 
\[
\mathcal{E}_{ij}(t) := \|z_i(t) - z_j(t)\|^2 + \gamma \int_{t-\tau}^{t}\|z_i(s) - z_j(s)\|^2 ds,
\]
where $\gamma$ is a positive constant.  Then, one has
\begin{align*}
\frac{d}{dt}\mathcal{E}_{ij}(t) &= \frac{d}{dt}\|z_i(t) - z_j(t)\|^2  + \gamma \|z_i(t) - z_j(t)\|^2  - \gamma\|z_i^\tau(t) - z_j^\tau(t)\|^2 \\
&\leq  -\frac{7\kappa_0}{4}\|z_i-z_j\|^2 + \frac{2\kappa_0}{N}\| z_i-z_j\|\cdot\|z_i^\tau-z_j^{\tau}\| \\
&+\frac{2\kappa_0}{N}\|z_i-z_j\|^2 + \gamma \|z_i(t) - z_j(t)\|^2  - \gamma \|z_i^\tau(t) - z_j^\tau(t)\|^2.
\end{align*}
By applying Young's inequality, we have 
\begin{align*}
\frac{d}{dt}\mathcal{E}_{ij}(t) &\leq  -\frac{7\kappa_0}{4}\|z_i-z_j\|^2 + \frac{\kappa_0}{N}\| z_i-z_j\|^2 + \frac{\kappa_0}{N}\|z_i^\tau-z_j^{\tau}\|^2 \\
&+\frac{2\kappa_0}{N}\|z_i-z_j\|^2 + \gamma \|z_i(t) - z_j(t)\|^2  - \gamma \|z_i^\tau(t) - z_j^\tau(t)\|^2.
\end{align*}
Now put $\gamma = \displaystyle\frac{\kappa_0}{N}$, then we get  
\[
\frac{d}{dt}\mathcal{E}_{ij}(t)\leq \left[-\frac{7\kappa_0}{4} + \frac{4\kappa_0}{N}\right]\|z_i-z_j\|^2 \leq -\frac{5\kappa_0}{12}\|z_i-z_j\|^2 \leq 0, 
\]
since $N \geq 3$.
This leads to  
\[
-\frac{5\kappa_0}{12}\int_0^\infty\|z_i(s)-z_j(s)\|^2ds\leq \mathcal{E}_{ij}(0).
\]
Using the boundedness of $\|\dot{z}_j\|$ for all $j$ we can apply Barbalat's lemma to obtain the desired result.
\end{proof}

\subsection{Close-to-SL coupling gain pair} \label{sec:3.2} In this subsection, we consider the situation in which the coupling gain pair is close to Stuart-Landau coupling gain pair:
\[
\tilde{\kappa} := \frac{\kappa_0}{2} + \kappa_1,\quad |\tilde{\kappa}| \ll 1.
\]
Note that 
\begin{align}
\begin{aligned} \label{C-7-1}
\dot{z}_j&= \displaystyle\frac{\kappa_0}{N}\sum_{k \neq j}(\langle z_j, z_j \rangle z_k^{\tau} - \langle z_k^{\tau}, z_j \rangle z_j) + \frac{\kappa_1}{N}\sum_{k \neq j}(\langle z_j, z_k^{\tau} \rangle - \langle z_k^{\tau} , z_j \rangle) z_j \\
&= \displaystyle\frac{\kappa_0}{N}\sum_{k \neq j}(\langle z_j, z_j \rangle z_k^{\tau} - \langle z_k^{\tau}, z_j \rangle z_j) +
\frac{\kappa_1}{N}\sum_{k \neq j}(\langle z_j, z_k^{\tau} \rangle - \langle z_k^{\tau} , z_j \rangle) z_j \\
&\hspace{0.3cm}-\frac{\kappa_0}{2N}\sum_{k \neq j}(\langle z_j, z_k^{\tau} \rangle - \langle z_k^{\tau} , z_j \rangle) z_j + \frac{\kappa_0}{2N}\sum_{k \neq j}(\langle z_j, z_k^{\tau} \rangle - \langle z_k^{\tau} , z_j \rangle) z_j \\
&= \displaystyle\frac{\kappa_0}{N}\sum_{k \neq j}\big( z_k^{\tau} - \mathrm{Re}( \langle z_k^{\tau}, z_j \rangle) z_j\big) +  \frac{\tilde{\kappa}}{N}\sum_{k \neq j}(\langle z_j, z_k^{\tau} \rangle - \langle z_k^{\tau} , z_j \rangle) z_j.
\end{aligned}
\end{align} 
We substitute $\kappa_1=\tilde{\kappa}-\frac{\kappa_0}{2}$ into \eqref{C-7-1} to get 
\begin{align}\label{C-8}
\begin{cases}
\dot{z}_j=\displaystyle\frac{\kappa_0}{N}\sum_{k \neq j}(\langle z_j, z_j \rangle z_k^{\tau} - \mathrm{Re}(\langle z_k^{\tau}, z_j \rangle) z_j) + \frac{\tilde{\kappa}}{N}\sum_{k \neq j}(\langle z_j, z_k^{\tau} \rangle - \langle z_k^{\tau} , z_j \rangle) z_j, \quad t > 0, \\
z_j(t) = \varphi_j(t)\in\mathbb{HS}^{d-1}, \quad -\tau \leq t \leq 0.
\end{cases}
\end{align}
By straightforward calculation, one has 
\begin{align}\label{C-9}
\begin{aligned}
&\frac{d}{dt}\|z_i - z_j^s\|^2 \\
& \hspace{0.5cm} \leq \langle \dot{z_i} - \dot{z_j}^s, z_i - z_j^s \rangle + \langle z_i - z_j^s,  \dot{z_i} - \dot{z_j}^s \rangle
= 2\mathrm{Re}\langle \dot{z_i} - \dot{z_j}^s, z_i - z_j^s \rangle \\
& \hspace{0.5cm}  =2\kappa_0\re\langle z_c^\tau-z_c^{\tau+s}, z_i-z_j^s\rangle-\kappa_0\|z_i-z_j^s\|^2(\re\langle z_c^\tau,z_i\rangle+\re\langle z_c^{\tau+s}, z_j^s\rangle)\\
& \hspace{0.5cm}  -\frac{2\kappa_0}{N}\left(\re\langle z_i-z_j^s,z_i^\tau-z_j^{\tau+s}\rangle - \|z_i-z_j^s\|^2\right)\\
& \hspace{0.5cm} + 4\tilde{\kappa}\mathrm{Im}\langle z_i, z_j^s \rangle \mathrm{Im}(\langle  z_c^{\tau} , z_i \rangle - \langle z_c^{\tau +s}, z_j^s \rangle) + \frac{4\tilde{\kappa}}{N}\mathrm{Im}\langle z_i, z_j^s \rangle \mathrm{Im}\left(\langle z_i, z_i^{\tau} \rangle - \langle z_j^s, z_j^{\tau + s} \rangle \right)\\
& \hspace{0.5cm} \leq 2\kappa_0\re\langle z_c^\tau-z_c^{\tau+s}, z_i-z_j^s\rangle-\kappa_0\|z_i-z_j^s\|^2(\re\langle z_c^\tau,z_i\rangle+\re\langle z_c^{\tau+s}, z_j^s\rangle)\\
& \hspace{0.5cm} -\frac{2\kappa_0}{N}\left(\re\langle z_i-z_j^s,z_i^\tau-z_j^{\tau+s}\rangle - \|z_i-z_j^s\|^2\right)  + 4|\tilde{\kappa}|\cdot\|z_i - z_j^s\| (\|z_c^{\tau} - z_c^{\tau +s}\| + \|z_i - z_j^s\|)\\
& \hspace{0.5cm} +\frac{4}{N}|\tilde{\kappa}|\cdot\|z_i - z_j^s\|(\|z_i - z_j^s\| + \|z_i^{\tau} - z_j^{\tau + s}\|).
\end{aligned}
\end{align} 
For the second inequality \eqref{C-9}, we use the triangle inequality and 
\begin{align*}
\|z\| = \|w \| = 1\quad \Longrightarrow\quad|\mathrm{Im}\langle z, w \rangle| \leq \|z-w\|.
\end{align*}
and similar arguments in the proof of Lemma \ref{L3.4} to derive
\begin{align}\label{C-10}
\begin{aligned}
&\Big|\|z_i(t) - z_j^s(t)\|^2 - \mathrm{Re}\langle z_i^{u}(t) - z_j^{u +s}(t), z_i(t) - z_j^s(t) \rangle\Big| \\
& \hspace{0.5cm} \leq 2u\kappa_0  \sup_{t-u < v < t}\Big(\|z_i(v) - z_j^s(v)\| + \|z_c^{\tau}(v) - z_c^{\tau +s}(v)\|\Big) \|z_i(t) -z_j^s(t)\| \\
&\hspace{0.7cm} +\frac{2u\kappa_0}{N}  \sup_{t-u < v < t} \Big(\|z_i(v) - z_j^s(v)\| + \|z_i^{\tau}(v) - z_j^{\tau +s}(v)\|\Big) \|z_i(t) -z_j^s(t)\| \\
&\hspace{0.7cm} 2u|\tilde{\kappa}|  \sup_{t-u < v < t}\Big(2\|z_i(v) - z_j^s(v)\| + \|z_c^{\tau}(v) - z_c^{\tau +s}(v)\|\Big) \|z_i(t) -z_j^s(t)\| \\
&\hspace{0.7cm} +\frac{2u|\tilde{\kappa}|}{N}  \sup_{t-u < v < t} \Big(2\|z_i(v) - z_j^s(v)\| + \|z_i^{\tau}(v) - z_j^{\tau +s}(v)\|\Big) \|z_i(t) -z_j^s(t)\|.
\end{aligned}
\end{align}

\begin{theorem} \label{T3.2}
Suppose system parameters and initial data satisfy 
\begin{align*}
\begin{aligned} 
& \kappa_0 > 0, \quad  |\tilde{\kappa}|<\frac{9}{256}\kappa_0, \quad \mathcal{C}_1\tau < \frac{1}{8}, \quad N \geq 3,  \\
& \|\varphi(t)\| = 1, \quad \sup_{-\tau \leq t \leq 0} \mathcal{D}(Z(t)) < \frac{1}{8}, \quad \Omega_j \equiv 0, \quad j \in \mathcal{N},
\end{aligned}
\end{align*}
and let $\{z_j\}$ be a global solution to \eqref{C-8}. Then complete aggregation emerges asymptotically:
\[
\lim_{t \to \infty}\|z_i(t) - z_j(t) \| = 0, \quad i,j \in \mathcal{N}.
\]
\end{theorem}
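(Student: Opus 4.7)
The plan is to follow the two-step strategy used for Theorem \ref{T3.1} with careful tracking of the new perturbation terms proportional to $\tilde{\kappa}$ that appear on the right-hand sides of \eqref{C-9} and \eqref{C-10}. The SL case gave a clean nonlinear dissipation of the form $-\kappa_0 \|z_i-z_j^s\|^2(\re\langle z_c^\tau, z_i\rangle + \re\langle z_c^{\tau+s}, z_j^s\rangle)$; now one must verify that the additional $\tilde\kappa$-terms are dominated by this dissipation whenever $|\tilde\kappa|/\kappa_0$, $\tau$ and the initial diameter are all small.

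\textbf{Step A (Trapping region for $D^{0,\tau}$).} First I derive the analogue of Lemma \ref{L3.3} from \eqref{C-9} and \eqref{C-10} by setting $s = u = \tau$ and taking $\max_{i,j}$. Bounding $\|z_c^\tau - z_c^{2\tau}\|$ and $\|z_i^\tau - z_j^{2\tau}\|$ in terms of $\sup_{t-2\tau < v < t} D^{0,\tau}(v)$, and noting that every new $\tilde\kappa$-term carries either a prefactor $|\tilde\kappa|$ (from \eqref{C-9}) or $|\tilde\kappa|\tau$ (from \eqref{C-10}), I obtain a delay differential inequality of the shape
\begin{equation*}
\frac{d}{dt} D^{0,\tau}(t) \leq \kappa_0 \|z_c^\tau - z_c^{2\tau}\| - \tfrac{\kappa_0}{2} D^{0,\tau}(t)\bigl(2 - \tfrac{1}{2}D^{0,\tau}(t)^2 - \tfrac{1}{2}D^{0,\tau}(t-\tau)^2\bigr) + \mathcal{C}_1 \sup_{t-2\tau < v < t} D^{0,\tau}(v),
\end{equation*}
where $\mathcal{C}_1 = \mathcal{C}_1(\kappa_0, |\tilde\kappa|, \tau, N)$ is linear in $\tau\kappa_0^2$ and contains an extra contribution of order $|\tilde\kappa|$. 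After first controlling $D^{0,\tau}$ on the short interval $[-\tau, 2\tau]$ using Lemma \ref{L3.4} augmented by a $\tilde\kappa$-correction, I define $\mathcal{T} := \{t > 2\tau : D^{0,\tau}(t) < 1/2\}$ and use the Lipschitz continuity of $D^{0,\tau}$ together with the smallness hypotheses $|\tilde{\kappa}| < \tfrac{9}{256}\kappa_0$, $\mathcal{C}_1 \tau < 1/8$, and $\sup_{-\tau \leq t \leq 0} \mathcal{D}(Z(t)) < 1/8$ to conclude $\sup \mathcal{T} = \infty$, exactly as in the proof of Theorem \ref{T3.1}.

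\textbf{Step B (Lyapunov functional and Barbalat).} Once $D^{0,\tau}(t) < 1/2$ is trapped, I differentiate the same functional
\begin{equation*}
\mathcal{E}_{ij}(t) = \|z_i(t) - z_j(t)\|^2 + \gamma \int_{t-\tau}^{t} \|z_i(s) - z_j(s)\|^2 ds,
\end{equation*}
using \eqref{C-9} with $s = 0$. Inside the trapping region, $\re\langle z_c^\tau, z_i\rangle + \re\langle z_c^\tau, z_j\rangle \geq 2 - \mathcal{O}(D^{0,\tau})^2$ yields the leading dissipation $-\tfrac{7\kappa_0}{4}\|z_i-z_j\|^2$. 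Applying Young's inequality to the cross term $\|z_i - z_j\|\,\|z_i^\tau - z_j^\tau\|$ coming from the $-\frac{2\kappa_0}{N}$-term, and to each $\tilde{\kappa}$-contribution (which is quadratic in $\|z_i-z_j\| + \|z_i^\tau - z_j^\tau\| + \|z_c^\tau - z_c^\tau\|$ with prefactor $|\tilde\kappa|$), and choosing $\gamma = \kappa_0/N$, I obtain
\begin{equation*}
\frac{d}{dt}\mathcal{E}_{ij}(t) \leq -\Bigl(\tfrac{7\kappa_0}{4} - \tfrac{4\kappa_0}{N} - C_2|\tilde\kappa|\Bigr)\|z_i - z_j\|^2 \leq -\beta\|z_i - z_j\|^2,
\end{equation*}
for some $\beta > 0$, where the constraint $|\tilde\kappa| < \tfrac{9}{256}\kappa_0$ and $N \geq 3$ are exactly what is needed to keep the parenthesis positive (cancellation of the $\|z_i^\tau - z_j^\tau\|^2$ contributions comes from the boundary term in $\mathcal{E}_{ij}$, as in Theorem \ref{T3.1}). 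Integrating gives $\int_0^\infty \|z_i(s)-z_j(s)\|^2\,ds < \infty$, and since $\|\dot z_j\|$ is uniformly bounded on $\mathbb{HS}^{d-1}$, Barbalat's lemma delivers the conclusion.

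\textbf{Principal obstacle.} The real difficulty is Step A. The new $\tilde\kappa$-terms in \eqref{C-9} produce contributions of the form $4|\tilde\kappa|\,\|z_i - z_j^s\|^2$ that are \emph{not} multiplied by the factor $\re\langle z_c^\tau, z_i\rangle + \re\langle z_c^{\tau+s}, z_j^s\rangle$ and hence destroy the clean quadratic dissipation that was decisive in the SL proof. Showing that the numerical threshold $|\tilde\kappa| < \tfrac{9}{256}\kappa_0$ is sufficient to recover a Grönwall-closable inequality inside the trap, and again to keep the Lyapunov derivative strictly negative, is the quantitatively delicate part of the argument; all other estimates are analogous to, or direct adaptations of, the SL case treated in Section \ref{sec:3.1}.
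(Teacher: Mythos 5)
Your proposal follows the same route as the paper's proof: first a delay differential inequality for $D^{0,\tau}$ (the analogue of Lemma \ref{L3.3} with the extra $\tilde\kappa$-terms, i.e.\ Lemma \ref{L3.5}) combined with the short-time estimate on $[-\tau,2\tau]$ and a continuity/trapping argument to keep $D^{0,\tau}<\tfrac12$, then the same functional $\mathcal{E}_{ij}$ with Young's inequality and Barbalat's lemma. The only cosmetic deviation is your choice $\gamma=\kappa_0/N$, which leaves a residual $\tfrac{2|\tilde\kappa|}{N}\|z_i^\tau-z_j^\tau\|^2$ term (the paper takes $\gamma=(\kappa_0+2|\tilde\kappa|)/N$ to cancel it exactly); this is harmless since the leftover can be absorbed after integration, so the argument is essentially identical to the paper's.
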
 
\begin{proof} We leave its detailed proof in Section \ref{sec:3.2.2}.
\end{proof}
\subsubsection{Basic estimates} \label{sec:3.2.1} 
In this part, we provide several a priori estimates. 
\begin{lemma} \label{L3.5}
Let $\{z_j\}$ be a global solution to \eqref{C-8}. For $t \geq 2\tau$, we have following inequality:
\begin{align}\label{C-11}
\begin{aligned}
\frac{d}{dt}D^{0,\tau}(t) &\leq \kappa_0\| z_c^\tau-z_c^{2\tau}\| -\frac{\kappa_0}{2} D^{0,\tau}(t)\left(2 - \frac{ D^{0,\tau}(t)^2}{2} - \frac{ D^{0,\tau}(t- \tau)^2}{2}\right)\\
&+\frac{2\kappa_0 \tau}{N}\sup_{t-2\tau<v<t} D^{0,\tau}(v)\left(2\kappa_0\left(\frac{N+1}{N}\right)+ 3|\tilde{\kappa}|\left(\frac{N+1}{N}\right) \right) + 4|\tilde{\kappa}|\frac{N+1}{N}\sup_{t-2\tau<v<t} D^{0,\tau}(v).
\end{aligned}
\end{align}
\end{lemma}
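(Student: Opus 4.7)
The plan is to follow the strategy of Lemma \ref{L3.3}, using \eqref{C-9} in place of Lemma \ref{L3.1} and \eqref{C-10} in place of Lemma \ref{L3.2}. First I choose indices $i,j\in\mathcal{N}$ at which the maximum in $D^{0,\tau}(t) = \|z_i(t) - z_j^\tau(t)\|$ is attained; by a standard extremal argument, $\frac{d}{dt}D^{0,\tau}$ coincides almost everywhere with $\frac{d}{dt}\|z_i - z_j^\tau\|$. Specializing \eqref{C-9} to $s=\tau$, using $\frac{d}{dt}\|z_i-z_j^\tau\|^2 = 2\|z_i-z_j^\tau\|\frac{d}{dt}\|z_i-z_j^\tau\|$, and applying Cauchy--Schwarz on the centroid drift term, one arrives at an upper bound for $\frac{d}{dt}\|z_i-z_j^\tau\|$ that splits into five groups of contributions: (i) $\kappa_0\|z_c^\tau - z_c^{2\tau}\|$; (ii) the real-part alignment $-\frac{\kappa_0}{2}\|z_i-z_j^\tau\|\bigl(\re\langle z_c^\tau, z_i\rangle + \re\langle z_c^{2\tau}, z_j^\tau\rangle\bigr)$; (iii) the delay cross-term $\frac{\kappa_0}{N\|z_i-z_j^\tau\|}\bigl(\|z_i-z_j^\tau\|^2 - \re\langle z_i - z_j^\tau, z_i^\tau - z_j^{2\tau}\rangle\bigr)$; and (iv)--(v) the direct $\tilde\kappa$ contributions $2|\tilde\kappa|(\|z_c^\tau - z_c^{2\tau}\| + \|z_i - z_j^\tau\|) + \frac{2|\tilde\kappa|}{N}(\|z_i - z_j^\tau\| + \|z_i^\tau - z_j^{2\tau}\|)$.

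Group (i) is kept as the first summand of the statement. For (ii) I reuse the identity already exploited in Lemma \ref{L3.3},
\[
1 - \re\langle z_i, z_c^\tau\rangle = \frac{1}{2N}\sum_{k=1}^{N}\|z_i - z_k^\tau\|^2 \leq \frac{1}{2}D^{0,\tau}(t)^2,
\]
together with its analogue obtained by replacing $(z_i,z_c^\tau)$ with $(z_j^\tau,z_c^{2\tau})$, which is bounded by $\frac{1}{2}D^{0,\tau}(t-\tau)^2$. Combining yields the second summand $-\frac{\kappa_0}{2}D^{0,\tau}(t)\bigl(2 - \frac{1}{2}D^{0,\tau}(t)^2 - \frac{1}{2}D^{0,\tau}(t-\tau)^2\bigr)$.

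For (iii) I invoke \eqref{C-10} with $u = s = \tau$, which splits the cross-term into four suprema of $\tau$ order. The essential sub-estimate, obtained by setting $i=j=k$ in the definition of $D^{0,\tau}$, is $\|z_k(v) - z_k(v-\tau)\| \leq D^{0,\tau}(v)$; this upgrades to $\|z_c^\tau(v) - z_c^{2\tau}(v)\| \leq \sup_{t-2\tau<w<t} D^{0,\tau}(w)$ for $v\in(t-\tau,t)$, while $\|z_i^\tau(v) - z_j^{2\tau}(v)\| \leq D^{0,\tau}(v-\tau)$ is immediate. Plugging these into the four supremum terms of \eqref{C-10} (with their $\kappa_0$, $|\tilde\kappa|$, and $1$, $\frac{1}{N}$ prefactors) and then multiplying by $\frac{\kappa_0}{N\|z_i-z_j^\tau\|}$, the factor $\|z_i-z_j^\tau\|$ cancels and one recovers the third summand, with the factor $\frac{N+1}{N}$ arising cleanly as $1 + \frac{1}{N}$; the coefficient $3$ in front of $|\tilde\kappa|$ reflects the fact that the $\tilde\kappa$-part of \eqref{C-10} carries a coefficient $2$ in front of $\|z_i-z_j^s\|$ rather than $1$, yielding $1+2 = 3$ after summation.

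Finally, for (iv)--(v) I bound each of the four constituent norms by $\sup_{t-2\tau<v<t} D^{0,\tau}(v)$, using the same observations as in (iii), to obtain the coefficient $2 + 2 + \frac{2}{N} + \frac{2}{N} = 4\cdot\frac{N+1}{N}$ and hence the final summand $4|\tilde\kappa|\frac{N+1}{N}\sup D^{0,\tau}$. The main obstacle is bookkeeping rather than analysis: one must carefully separate contributions that pick up a $\tau$ prefactor (those coming through the Lipschitz-in-time estimate \eqref{C-10}) from those that do not (the direct $\tilde\kappa$ pieces already present in \eqref{C-9}), and verify that all coefficients collapse into the advertised clean form $\frac{N+1}{N}$.
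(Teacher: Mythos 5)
Your proposal is correct and follows essentially the same route as the paper: pick maximizing indices, specialize \eqref{C-9} to $s=\tau$ and \eqref{C-10} to $u=s=\tau$, bound the real-part terms via $1-\re\langle z_i,z_c^\tau\rangle\leq\frac{1}{2}D^{0,\tau}(t)^2$ (and its shifted analogue), and bound all centroid and delay differences by $\sup_{t-2\tau<v<t}D^{0,\tau}(v)$; the paper merely carries out the same bookkeeping at the level of $\frac{d}{dt}D^{0,\tau}(t)^2$ before dividing by $2D^{0,\tau}(t)$, and your coefficient accounting ($2$ vs.\ $3$ inside \eqref{C-10}, and the $\frac{N+1}{N}$ factors) matches the paper's arithmetic exactly.
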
 
\begin{proof}
We set $s=\tau$ in the inequality \eqref{C-9} to find 
\begin{align*}
\begin{aligned}
\frac{d}{dt}\|z_i - z_j^{\tau}\|^2 
& \leq 2\kappa_0\re\langle z_c^\tau-z_c^{2\tau}, z_i-z_j^{\tau} \rangle-\kappa_0\|z_i-z_j^{\tau}\|^2(\re\langle z_c^\tau,z_i\rangle+\re\langle z_c^{2\tau}, z_j^{\tau}\rangle)\\
&-\frac{2\kappa_0}{N}\left(\re\langle z_i-z_j^{\tau},z_i^\tau-z_j^{2\tau}\rangle - \|z_i-z_j^{\tau}\|^2\right) + 4|\tilde{\kappa}_1|\|z_i - z_j^{\tau}\| (\|z_c^{\tau} - z_c^{2\tau}\| + \|z_i - z_j^{\tau}\|)\\
&+ \frac{4}{N}|\tilde{\kappa}|\|z_i - z_j^{\tau}\|(\|z_i - z_j^{\tau}\| + \|z_i^{\tau} - z_j^{2\tau}\|).
\end{aligned}
\end{align*}
In the inequality \eqref{C-10}, we set
\[
u=\tau\quad\text{and}\quad s = \tau
\]
to get  
\begin{align*}
\begin{aligned}
&\Big|\|z_i(t) - z_j^\tau(t)\|^2 - \mathrm{Re}\langle z_i^{\tau}(t) - z_j^{2\tau}(t), z_i(t) - z_j^\tau(t) \rangle\Big| \\
&\hspace{0.5cm}  \leq 2\tau\kappa_0  \sup_{t-\tau < v < t}\Big(\|z_i(v) - z_j^\tau(v)\| + \|z_c^{\tau}(v) - z_c^{2\tau}(v)\|\Big) \|z_i(t) -z_j^\tau(t)\| \\
&\hspace{0.5cm} +\frac{2\tau\kappa_0}{N}  \sup_{t-\tau < v < t} \Big(\|z_i(v) - z_j^\tau(v)\| + \|z_i^{\tau}(v) - z_j^{2\tau}(v)\|\Big) \|z_i(t) -z_j^\tau(t)\| \\
&\hspace{0.5cm} +2\tau|\tilde{\kappa}|  \sup_{t-\tau < v < t}\Big(2\|z_i(v) - z_j^\tau(v)\| + \|z_c^{\tau}(v) - z_c^{2\tau}(v)\|\Big) \|z_i(t) -z_j^\tau(t)\| \\
&\hspace{0.5cm} +\frac{2\tau|\tilde{\kappa}|}{N}  \sup_{t-\tau < v < t} \Big(2\|z_i(v) - z_j^\tau(v)\| + \|z_i^{\tau}(v) - z_j^{2\tau}(v)\|\Big) \|z_i(t) -z_j^\tau(t)\|.
\end{aligned}
\end{align*}
For a fixed $t$, there exist $i_t$ and $j_t$ such that 
\[
D^{0,\tau}(t) = \|z_{i_t} - z_{j_t}^\tau\|.
\]
Then, for $t \geq 2\tau$, one has 
\begin{align*}
\begin{aligned}
\frac{d}{dt}D^{0,\tau}(t)^2 &= \frac{d}{dt}\|z_{i_t} - z_{j_t}^\tau\|^2  \\
&\leq 2\kappa_0\| z_c^\tau-z_c^{2\tau}\|D^{0,\tau}(t) -\kappa_0 D^{0,\tau}(t)^2(\re\langle z_c^\tau,z_i\rangle+\re\langle z_c^{2\tau}, z_j^{\tau}\rangle)\\
&+\frac{2\kappa_0 \tau}{N}D^{0,\tau}(t) \sup_{t-2\tau<v<t} D^{0,\tau}(v)  \left(4\kappa_0+ \frac{4\kappa_0}{N} + 6|\tilde{\kappa}| + \frac{6\tilde{\kappa}}{N} \right)\\
&+ 8|\tilde{\kappa}|D^{0,\tau}(t)\sup D^{0,\tau}(t) + \frac{8|\tilde{\kappa}|}{N}D^{0,\tau}(t) \sup D^{0,\tau}(t).
\end{aligned}
\end{align*}
Hence, one has
\begin{align*}
\begin{aligned}
&\frac{d}{dt}D^{0,\tau}(t) \leq \kappa_0\| z_c^\tau-z_c^{2\tau}\| -\frac{\kappa_0}{2} D^{0,\tau}(t)(\re\langle z_c^\tau,z_i\rangle+\re\langle z_c^{2\tau}, z_j^{\tau}\rangle)\\
&\hspace{0.5cm}+\frac{\kappa_0 \tau}{N} \sup D^{0,\tau}(v) \left(4\kappa_0+ \frac{4\kappa_0}{N}+ 6|\tilde{\kappa}|  + \frac{6|\tilde{\kappa}|}{N} \right)
+ 4|\tilde{\kappa}|\sup D^{0,\tau}(t) + \frac{4|\tilde{\kappa}|}{N}\sup D^{0,\tau}(t) \\
&\leq \kappa_0\| z_c^\tau-z_c^{2\tau}\| -\frac{\kappa_0}{2} D^{0,\tau}(t)\left(2 - \frac{ D^{0,\tau}(t)^2}{2} - \frac{ D^{0,\tau}(t- \tau)^2}{2}\right)\\
&+\frac{\kappa_0 \tau}{N} \sup D^{0,\tau}(v) \left(4\kappa_0+ \frac{4\kappa_0}{N}+ 6|\tilde{\kappa}|  + \frac{6|\tilde{\kappa}|}{N} \right)
+ 4|\tilde{\kappa}|\sup_{t-2\tau<v<t} D^{0,\tau}(v) + \frac{4|\tilde{\kappa}|}{N}\sup_{t-2\tau<v<t} D^{0,\tau}(v) \\
&= \kappa_0\| z_c^\tau-z_c^{2\tau}\| -\frac{\kappa_0}{2} D^{0,\tau}(t)\left(2 - \frac{ D^{0,\tau}(t)^2}{2} - \frac{ D^{0,\tau}(t- \tau)^2}{2}\right)\\
&+\frac{2\kappa_0 \tau}{N}\sup_{t-2\tau<v<t} D^{0,\tau}(v)\left(\frac{N+1}{N}\right)\left(2\kappa_0+ 3|\tilde{\kappa}|\right)
+ 4|\tilde{\kappa}|\left(\frac{N+1}{N}\right)\sup_{t-2\tau<v<t} D^{0,\tau}(v).
\end{aligned}
\end{align*}
\end{proof}

\begin{lemma} \label{L3.6}
Let $\{z_j\}$ be a global solution to \eqref{C-8}. Then, one has 
\[
\|z_i(t) - z_i^{\tau}(t) \| \leq \mathcal{C}_1\tau
\] 
where
\[
\mathcal{C}_1=2\left(\frac{N-1}{N}\right)\cdot\min\Big \{ \kappa_0+|\kappa_1|,~\kappa_0+|\tilde{\kappa}| \Big \}.
\]
\end{lemma}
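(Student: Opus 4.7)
The plan is to mimic the argument of Lemma \ref{L3.4}: represent the difference $z_i(t)-z_i^\tau(t)$ as the integral of $\dot z_i$ over a window of length $\tau$, then bound the integrand uniformly in time. Concretely, since the system \eqref{C-8} has a globally defined smooth right-hand side and $\|z_i(s)\|=1$ for all $s$ by Lemma \ref{L2.1}, we may write
\[
z_i(t)-z_i^\tau(t)=\int_{t-\tau}^{t}\dot z_i(s)\,ds,
\qquad
\|z_i(t)-z_i^\tau(t)\|\leq \int_{t-\tau}^{t}\|\dot z_i(s)\|\,ds.
\]
Thus the whole task reduces to a uniform pointwise bound on $\|\dot z_i\|$, and the appearance of the minimum in $\mathcal{C}_1$ simply reflects the fact that \eqref{C-8} admits two equivalent presentations: the SL-shifted form displayed in \eqref{C-8} (with coefficients $\kappa_0$ and $\tilde\kappa$) and the original form \eqref{C-1} (with coefficients $\kappa_0$ and $\kappa_1$).

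For the original $(\kappa_0,\kappa_1)$ form, I would estimate term by term using $\|z_j\|=\|z_k^\tau\|=1$ and Cauchy--Schwarz: each vector $\langle z_j,z_j\rangle z_k^\tau-\langle z_k^\tau,z_j\rangle z_j$ has norm bounded by $2$, and each scalar $\langle z_j,z_k^\tau\rangle-\langle z_k^\tau,z_j\rangle=2i\,\mathrm{Im}\langle z_j,z_k^\tau\rangle$ has modulus bounded by $2$. Summing $N-1$ terms in each sum yields
\[
\|\dot z_j\|\leq \frac{2(N-1)}{N}\bigl(\kappa_0+|\kappa_1|\bigr).
\]
For the $(\kappa_0,\tilde\kappa)$ form of \eqref{C-8} one repeats the same estimate, using that $\|z_k^\tau-\mathrm{Re}(\langle z_k^\tau,z_j\rangle)z_j\|\leq 1+|\mathrm{Re}\langle z_k^\tau,z_j\rangle|\leq 2$, to obtain
\[
\|\dot z_j\|\leq \frac{2(N-1)}{N}\bigl(\kappa_0+|\tilde\kappa|\bigr).
\]
Since both inequalities hold for the same solution, we may take the smaller of the two constants, which is exactly $\mathcal{C}_1$.

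Combining with the integral bound above gives $\|z_i(t)-z_i^\tau(t)\|\leq \mathcal{C}_1\tau$ for every $t\geq 0$ (for $t\in[0,\tau]$ the same argument applies after replacing $z_i^\tau(t)$ by $\varphi_i(t-\tau)$, which is continuous and lies on $\mathbb{HS}^{d-1}$). There is no real obstacle here: the statement is a simple consequence of $L^\infty$-boundedness of the velocity field on the unit hermitian sphere, and the only minor subtlety is recognizing the equivalence of the two representations so as to justify the $\min$ in $\mathcal{C}_1$.
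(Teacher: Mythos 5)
Your proposal is correct and follows essentially the same route the paper takes (the paper omits an explicit proof of Lemma \ref{L3.6}, but its proof of Lemma \ref{L3.4} is exactly this argument): write $z_i(t)-z_i^\tau(t)=\int_{t-\tau}^{t}\dot z_i(s)\,ds$ and bound $\|\dot z_i\|$ uniformly by $\tfrac{2(N-1)}{N}(\kappa_0+|\tilde\kappa|)$ resp. $\tfrac{2(N-1)}{N}(\kappa_0+|\kappa_1|)$ using $\|z_j\|=\|z_k^\tau\|=1$, the two bounds coming from the two equivalent forms \eqref{C-8} and \eqref{C-1}, which justifies the minimum in $\mathcal{C}_1$. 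Your handling of $t\in[0,\tau]$ is no less careful than the paper's own treatment, so nothing further is needed.
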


\subsubsection{Proof of Theorem \ref{T3.2}} \label{sec:3.2.2} Suppose system parameters and initial data satisfy 
\begin{align*}
\begin{aligned} 
& \kappa_0 > 0, \quad  |\tilde{\kappa}|<\frac{9}{256}\kappa_0, \quad \mathcal{C}_1\tau < \frac{1}{8}, \quad N \geq 3,  \\
& \|\varphi(t)\| = 1, \quad \sup_{-\tau \leq t \leq 0} \mathcal{D}(Z(t)) < \frac{1}{8}, \quad \Omega_j \equiv 0, \quad j \in \mathcal{N},
\end{aligned}
\end{align*}
and let $\{z_j\}$ be a solution of system \eqref{C-8}.  Then, the proof consists of two steps. \newline

\noindent $\bullet$~Step A (Existence of trapping set):~We claim:
\begin{equation} \label{C-12}
D^{0,\tau}(t) < \frac{1}{2}, \quad t \geq 0.
\end{equation}
{\it Proof of \eqref{C-12}}: We follow the same arguments as in \cite{C-H2}.  For this, we divide the estimate into three time intervals: 
\[
0 \leq t \leq \tau, \quad \tau \leq t \leq 2\tau \quad and \quad t  \geq 2\tau. 
\]
\vspace{0.2cm}

\noindent $\diamond$ Step A.1 (Estimate in the time-interval $[0,\tau]$): By triangle inequality, we have
\[
\|z_i(t) - z_j^\tau(t)\| \leq \|z_i(t) - z_i^\tau(t)\| + \|z_i^\tau(t) - z_j^\tau(t)\| \leq \mathcal{C}_1\tau + \mathcal{D}(Z(t-\tau)) < \frac{1}{4}.
\]

\vspace{0.2cm}

\noindent $\diamond$ Step A.2 (Estimate in the time-interval $[\tau, 2\tau]$): Similar to Step A,  we use triangular inequality to get
\[
\|z_i(t) - z_j^\tau(t)\| \leq \|z_i(t) - z_i^\tau(t)\| + \|z_i^\tau(t) - z_j^\tau(t)\| \leq \mathcal{C}_1\tau + \mathcal{D}(Z(t-\tau)).
\]
However, since 
\[
\|z_i(t) - z_j(t) \| \leq \|z_i(t) - z_i^\tau(t) \| + \|z_i^\tau(t) - z_j(t) \| \leq \mathcal{C}_1\tau + \frac{1}{4}<\frac{3}{8},
\]
one has
\[  \mathcal{D}(Z(t-\tau)) < \frac{3}{8}. \]
Therefore, we give
\[ \|z_i(t) - z_j^\tau(t)\| <\frac{1}{2}. \] 

\vspace{0.2cm}

\noindent $\diamond$ Step A.3 (Estimate in the time-interval $[2\tau, \infty)$):~ By \eqref{C-11}, one has 
\begin{align*}
\frac{d}{dt}D^{0,\tau}(t) &\leq \kappa_0\| z_c^\tau-z_c^{2\tau}\| -\frac{\kappa_0}{2} D^{0,\tau}(t)\left(2 - \frac{ D^{0,\tau}(t)^2}{2} - \frac{ D^{0,\tau}(t- \tau)^2}{2}\right)\\
&\hspace{0.5cm} +\frac{2\kappa_0 \tau}{N}\sup_{t-2\tau<v<t} D^{0,\tau}(v)\left(\frac{N+1}{N}\right)\left(2\kappa_0+ 3|\tilde{\kappa}| \right)
+ 4|\tilde{\kappa}|\left(\frac{N+1}{N}\right)\sup_{t-2\tau<v<t} D^{0,\tau}(v)\\
&\leq \frac{\kappa_0}{8} -\frac{\kappa_0}{2} D^{0,\tau}(t)\left(2 - \frac{ D^{0,\tau}(t)^2}{2} - \frac{ D^{0,\tau}(t- \tau)^2}{2}\right) +\frac{3\kappa_0}{4N}\sup_{t-2\tau<v<t} D^{0,\tau}(v) \\
&\hspace{0.5cm}+  4|\tilde{\kappa}|\frac{N+1}{N}\sup_{t-2\tau<v<t} D^{0,\tau}(v).
\end{align*}
Here, we used 
\[
\| z_c^\tau-z_c^{2\tau}\| \leq \mathcal{C}_1\tau < \frac{1}{8}, \quad
\left(\frac{N+1}{N}\right)\left(4\kappa_0 + 6|\tilde{\kappa}| \right) <\frac{6(N+1)}{N}(\kappa_0+|\tilde{\kappa}|)\leq \frac{3(N+1)}{N-1}\mathcal{C}_1\leq 6\mathcal{C}_1
\]
when $N\geq3$. \newline

\noindent Next we claim:
\[
D^{0,\tau}(t) < \frac{1}{2}, \quad \forall ~~ t \geq 2\tau.
\]
For the proof, we define a set $\mathcal{T}$ as 
\[
\mathcal{T} : = \left\{t \in (2\tau, \infty) : D^{0,\tau}(t)<\frac{1}{2} \right\},
\]
and proceed the proof using Lipschitz continuity of $D^{0,\tau}(t)$ as in \cite{C-H2}. The only difference is the estimate of $\frac{d}{dt}D^{0,\tau}(t)$. By direct estimates, one has 
\begin{align*}
\frac{d}{dt}D^{0,\tau}(t)&\leq \frac{\kappa_0}{8} -\frac{\kappa_0}{2} D^{0,\tau}(t)\left(2 - \frac{ D^{0,\tau}(t)^2}{2} - \frac{ D^{0,\tau}(t- \tau)^2}{2}\right)\\
&\hspace{0.5cm} +\frac{3\kappa_0}{4N}\sup_{t-2\tau<v<t} D^{0,\tau}(v)+ \sup_{t-2\tau<v<t} D^{0,\tau}(v)\left(4|\tilde{\kappa}|+ \frac{4|\tilde{\kappa}|}{N}\right)\\
&< \frac{\kappa_0}{8} -\frac{7\kappa_0}{8}D^{0,\tau}(t) +\frac{\kappa_0}{8}+ \frac{8}{3}|\tilde{\kappa} =\left(\frac{\kappa_0}{4}+\frac{8}{3}|\tilde{\kappa}|\right)   -\frac{7\kappa_0}{8} D^{0,\tau}(t).
\end{align*}
Hence, it follows from $|\tilde{\kappa}|<\frac{9}{256}\kappa_0$ that 
\[
\frac{\frac{\kappa_0}{4}+\frac{8}{3}|\tilde{\kappa}|}{\frac{7\kappa_0}{8}}<\frac{1}{2}, \quad 
D^{0,\tau}(t) \leq \max\Big \{ D^{0,\tau}(0),~~\frac{\frac{\kappa_0}{4}+\frac{8}{3}|\tilde{\kappa}|}{\frac{7\kappa_0}{8}} \Big \} < \frac{1}{2}.
\]
In this way, we verified claim \eqref{C-12}. 

\vspace{0.5cm}

\noindent $\bullet$~Step B (Zero convergence of modified diameter):~We claim
\begin{equation} \label{C-13}
\lim_{t\to\infty}\|z_i(s)-z_j(s)\|=0.
\end{equation}
The proof is similar to Theorem 3.1 of \cite{C-H2} with a slight difference. We present main steps that involve such differences. We put $s= 0$ in \eqref{C-9} to get 
\begin{align*}
\frac{d}{dt}\|z_i - z_j\|^2  &\leq -\kappa_0\|z_i-z_j\|^2(\re\langle z_c^\tau,z_i\rangle+\re\langle z_c^{\tau}, z_j\rangle)\\
&\hspace{0.2cm}-\frac{2\kappa_0}{N}\left(\re\langle z_i-z_j,z_i^\tau-z_j^{\tau}\rangle - \|z_i-z_j\|^2\right) + 4|\tilde{\kappa}|\cdot\|z_i - z_j\|^2\\
&\hspace{0.2cm} + \frac{4}{N}|\tilde{\kappa}|\cdot\|z_i - z_j\|(\|z_i - z_j\| + \|z_i^{\tau} - z_j^{\tau}\|).
\end{align*} 
Next, we define a Lyapunov functional $\mathcal{E}_{ij}$ for $Z = (z_1, \cdots , z_N)$ and $i, j\in\mathcal{N}$:
\[
\mathcal{E}_{ij}(t) := \|z_i(t) - z_j(t)\|^2 + \gamma \int_{t-\tau}^{t}\|z_i(s) - z_j(s)\|^2 ds,
\]
where $\gamma$ is a positive constant. Then, one has
\begin{align*}
\frac{d}{dt}\mathcal{E}_{ij}(t) &= \frac{d}{dt}\|z_i(t) - z_j(t)\|^2  + \gamma \|z_i(t) - z_j(t)\|^2  - \gamma\|z_i^\tau(t) - z_j^\tau(t)\|^2 \\
&\leq  -\frac{7\kappa_0}{4}\|z_i-z_j\|^2 + \frac{2\kappa_0}{N}\| z_i-z_j\|\cdot\|z_i^\tau-z_j^{\tau}\| \\
&\hspace{0.5cm}+\frac{2\kappa_0}{N}\|z_i-z_j\|^2 + 4|\tilde{\kappa}|\cdot\|z_i - z_j\| (\|z_c^{\tau} - z_c^{\tau}\| + \|z_i - z_j\|)\\
&\hspace{0.5cm}+ \frac{4}{N}|\tilde{\kappa}|\cdot\|z_i - z_j\|(\|z_i - z_j\| + \|z_i^{\tau} - z_j^{\tau}\|) + \gamma \|z_i(t) - z_j(t)\|^2  - \gamma \|z_i^\tau(t) - z_j^\tau(t)\|^2.
\end{align*}
By Young's inequality, we have
\begin{align*}
\begin{aligned}
\frac{d}{dt}\mathcal{E}_{ij}(t)&\leq -\frac{7\kappa_0}{4} \|z_i-z_j\|^2 + \frac{\kappa_0}{N\alpha}\|z_i-z_j\|^2 + \frac{\kappa_0 \alpha}{N}\|z_i^\tau-z_j^\tau \|^2 + \frac{2\kappa_0}{N}\|z_i-z_j\|^2\\
&\hspace{0.5cm}+ 4|\tilde{\kappa}|\|z_i - z_j\|^2 + 4\frac{|\tilde{\kappa}|}{N}\|z_i - z_j\|^2 + 2\frac{|\tilde{\kappa}|}{N}\left(\frac{1}{\alpha}\|z_i - z_j\|^2 + \alpha \|z_i^\tau - z_j^\tau \|^2\right)  \\
&\hspace{0.5cm}+ \gamma \|z_i(t) - z_j(t)\|^2  - \gamma \|z_i^\tau(t) - z_j^\tau(t)\|^2 \\
&=\left(-\frac{7}{4}\kappa_0 + \frac{\kappa_0}{N\alpha} + \frac{2\kappa_0}{N} + 4|\tilde{\kappa}| +\frac{4|\tilde{\kappa}|}{N} +  \frac{2|\tilde{\kappa}|}{N\alpha}+\gamma \right) \|z_i - z_j\|^2 \\
&\hspace{0.5cm}+ \left(\frac{\kappa_0 \alpha}{N} - \gamma + \frac{2\alpha |\tilde{\kappa}|}{N}\right) \|z_i^\tau - z_j^\tau \|^2.
\end{aligned}
\end{align*}
Now, we set
\[
\gamma = \frac{\kappa_0 \alpha}{N} + \frac{2\alpha |\tilde{\kappa}|}{N} \quad \mbox{and} \quad \alpha = 1.
\]
Then, we have
\begin{align*}
\frac{d}{dt}\mathcal{E}_{ij}(t) &\leq \left(-\frac{7}{4}\kappa_0 + \frac{\kappa_0}{N\alpha} + \frac{2\kappa_0}{N} + 4|\tilde{\kappa}| +\frac{4|\tilde{\kappa}|}{N} +  \frac{2|\tilde{\kappa}|}{N\alpha}+\frac{\kappa_0 \alpha}{N} + \frac{2\alpha |\tilde{\kappa}|}{N}\right) \|z_i - z_j\|^2 \\
&\leq \left(-\frac{7}{4}\kappa_0 + \frac{4\kappa_0}{N} + 4|\tilde{\kappa}| +\frac{8|\tilde{\kappa}|}{N}\right) \|z_i - z_j\|^2.
\end{align*}
For $N \geq 3$ and $|\tilde{\kappa}| < \frac{1}{16}\kappa_0$, we have 
\[
-\frac{7}{4}\kappa_0 + \frac{4\kappa_0}{N} + 4|\tilde{\kappa}| +\frac{8|\tilde{\kappa}|}{N} < 0.
\]
Here we set
\[
\beta=-\left(-\frac{7}{4}\kappa_0 + \frac{4\kappa_0}{N} + 4|\tilde{\kappa}| +\frac{8|\tilde{\kappa}|}{N}\right)
\]
to obtain
\[
\frac{d}{dt}\mathcal{E}_{ij}(t)\leq -\beta\|z_i-z_j\|^2.
\]
This yields
\[
\mathcal{E}_{ij}(t)-\mathcal{E}_{ij}(0)\leq -\beta\int_0^t\|z_i(s)-z_j(s)\|^2ds
\]
which is equivalent to 
\[
\mathcal{E}_{ij}(t)+\beta\int_0^t\|z_i(s)-z_j(s)\|^2ds\leq \mathcal{E}_{ij}(0).
\]
It follows from definition of $\mathcal{E}_{ij}$ that 
\[ \mathcal{E}_{ij}\geq0. \]
Finally, we have
\[
\beta\int_0^t\|z_i(s)-z_j(s)\|^2ds\leq \mathcal{E}_{ij}(0).
\]
By letting $t\to\infty$, one has 
\[
\beta\int_0^\infty\|z_i(s)-z_j(s)\|^2ds\leq \mathcal{E}_{ij}(0).
\]
It follows from the boundedness of $\|\dot{z}_j\|$ for all $j$ that 
\[ \sup_{0 \leq t < \infty} \Big| \frac{d}{dt}\|z_i(s)-z_j(s)\|^2 \Big| < \infty. \]
 This means $\|z_i(s)-z_j(s)\|$ is uniformly continuous. Hence, we can apply Barbalat's lemma to obtain the desired estimate \eqref{C-13}.
\section{Emergence of practical aggregation} \label{sec:4}
\setcounter{equation}{0}
In this section, we present practical aggregation of the LHS model.
\subsection{Complete network topology} \label{sec:4.1} In this subsection, we set 
\[ a_{ij}\equiv 1, \quad \Omega_j\equiv0, \quad i, j \in {\mathcal N}. \]
In this case, system \eqref{A-1} becomes 
\begin{equation} \label{D-1}
\begin{cases} 
\displaystyle \dot{z}_j = \frac{\kappa_0}{N}\sum_{k \neq j} \big(\langle z_j, z_j \rangle z_k^{\tau} - \langle z_k^{\tau}, z_j \rangle z_j) + \frac{\kappa_1}{N}\sum_{k \neq j} \big(\langle z_j, z_k^{\tau} \rangle - \langle z_k^{\tau} , z_j \rangle) z_j, \quad t > 0, \vspace{0.2cm} \\
\displaystyle z_j(t) = \varphi_j(t)\in\bbc^d, \quad -\tau \leq t \leq 0,~j \in {\mathcal N}.
\end{cases}
\end{equation}
For handy notation, we define the following notation:
\begin{equation} \label{D-2}
G_{ij} := \langle z_i, z_j \rangle ,\qquad G_{ij}^{\tau} := \langle z_i^{\tau}, z_j \rangle,\qquad L_{ij}=1-G_{ij},\quad L_{ij}^\tau=1-G_{ij}^\tau.
\end{equation}
Our third main result is concerned with the practical aggregation. Recall that 
\[  L(t) = \displaystyle\max_{i,j}|1-\langle z_i(t), z_j(t) \rangle|. \]
\begin{theorem} \label{T4.1}
Suppose coupling gains and initial data satisfy 
\[  2|\kappa_1|<\kappa_0, \quad L(0)<1 = \frac{2|\kappa_1|}{\kappa_0}, \]
and let $\{z_j\}$ be a global solution to \eqref{C-8}. Then, system \eqref{C-8} exhibits the practical synchronization:
\[
\lim_{\tau\searrow0}\limsup_{t\to\infty}L(t) =0.
\]
\end{theorem}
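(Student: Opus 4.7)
The plan is to derive a scalar differential inequality for the functional $L(t) := \max_{i,j}|1 - \langle z_i(t), z_j(t)\rangle|$ and then exploit the threshold $2|\kappa_1| < \kappa_0$ through a bootstrap argument whose residual vanishes with $\tau$. The norm preservation $\|z_j\| \equiv 1$ from Lemma \ref{L2.1} immediately bounds the right-hand side of \eqref{D-1}, giving a uniform estimate $\|\dot{z}_j(t)\| \leq C_0$ depending only on $\kappa_0, |\kappa_1|, N$; consequently $\|z_i^\tau(t) - z_i(t)\| \leq C_0\tau$ for every $i$ and $t \geq \tau$, which will be used repeatedly to compare delayed objects $G_{ij}^\tau := \langle z_i^\tau, z_j\rangle$ with $G_{ij}$.

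The first step is an algebraic identity for $\dot G_{ij}$. Rewriting the dynamics as
\[
\dot z_i = \frac{\kappa_0}{N}\sum_{k\neq i}z_k^\tau - \frac{\kappa_0}{N}\sum_{k\neq i}\langle z_k^\tau, z_i\rangle z_i + P_i^\tau z_i,\qquad P_i^\tau := -\frac{2i\kappa_1}{N}\sum_{k\neq i}\im\langle z_k^\tau, z_i\rangle,
\]
and computing $\dot G_{ij} = \langle\dot z_i, z_j\rangle + \langle z_i,\dot z_j\rangle$ with careful use of sesquilinearity, after adding and subtracting the missing diagonal indices $k=i$ and $k=j$ in the four sums that arise, one reaches the factored identity
\[
\dot G_{ij} = \kappa_0 L_{ij}\, S_{ij}^\tau + (P_j^\tau - P_i^\tau)\,G_{ij} + R_{ij}^\tau,
\]
where $L_{ij} := 1 - G_{ij}$, $S_{ij}^\tau := \frac{1}{N}\sum_{k=1}^{N}\bigl(\langle z_k^\tau, z_j\rangle + \langle z_i, z_k^\tau\rangle\bigr)$, and the remainder $R_{ij}^\tau$ collects only boundary corrections of the type $1-\langle z_i^\tau, z_i\rangle$ and $G_{ij}-\langle z_i^\tau, z_j\rangle$ (and their conjugates), hence $|R_{ij}^\tau| \leq C_1 \kappa_0\tau$. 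Crucially, $R_{ij}^\tau$ vanishes at $\tau=0$; it is the factor $L_{ij}$ in the leading term that delivers the dissipation.

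The second step is to estimate $\frac{d}{dt}|L_{ij}|^2 = -2\re\bigl(\overline{L_{ij}}\,\dot G_{ij}\bigr)$. Since $P_i^\tau$ is purely imaginary, a direct computation gives $|P_i^\tau| \leq 2|\kappa_1|\bigl(L(t)+C_0\tau\bigr)$ using $|L_{kj}^\tau - L_{kj}| \leq C_0\tau$, and one finds
\[
\re S_{ij}^\tau = 2 - \frac{1}{N}\sum_{k=1}^{N}\bigl(\re L_{kj}^\tau + \re L_{ik}^\tau\bigr) \geq 2 - 2L(t) - C_2\tau.
\]
Putting these bounds together yields
\[
\frac{d}{dt}|L_{ij}|^2 \leq -4\kappa_0(1 - L - C_2\tau)|L_{ij}|^2 + 8|\kappa_1|(L + C_0\tau)|L_{ij}| + C_3\tau\,|L_{ij}|,
\]
and dividing by $2|L_{ij}|$ on the set where it is positive, then taking the upper Dini derivative of the maximum via Danskin's rule for a max of smooth functions, gives the master inequality
\[
\frac{d^+}{dt}L(t) \leq -2\bigl[(\kappa_0 - 2|\kappa_1|) - \kappa_0 L(t)\bigr] L(t) + C_4\tau.
\]

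Finally, a standard bootstrap closes the proof. Set $L_\infty := 1 - 2|\kappa_1|/\kappa_0$ and $L^* := \tfrac{1}{2}(L(0)+L_\infty) < L_\infty$. On $\{L \leq L^*\}$ the bracket is bounded below by $\tfrac{\kappa_0}{2}(L_\infty - L(0)) > 0$; for $\tau$ small enough the right-hand side above is strictly negative whenever $L = L^*$, so $[0,L^*]$ is forward-invariant. Inside this trapping region the master inequality reduces to $\frac{d^+}{dt}L \leq -c\,L + C_4\tau$ with $c := \kappa_0(L_\infty - L(0))$, whence $\limsup_{t\to\infty}L(t) \leq C_4\tau/c$. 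Letting $\tau\searrow 0$ gives the desired practical aggregation $\lim_{\tau\to 0^+}\limsup_{t\to\infty}L(t) = 0$. The main obstacle is obtaining the clean factored form $\dot G_{ij} = \kappa_0 L_{ij} S_{ij}^\tau + \cdots$ in the delayed setting, because the two off-diagonal sums combine into a single $L_{ij}$-prefactor only after some careful bookkeeping of the diagonal corrections; once that structural observation is in place, the remaining estimates are routine Grönwall-type manipulations.
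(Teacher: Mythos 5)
Your proposal is correct and follows essentially the same route as the paper: you arrive at the same Riccati-type differential inequality $\frac{d}{dt}L \le 2\kappa_0 L^2 - 2(\kappa_0-2|\kappa_1|)L + O(\tau)$ (the paper's Lemma \ref{L4.5}, obtained there via Lemmas \ref{L4.1}--\ref{L4.4}) and conclude by the same phase-line mechanism, your trapping set $[0,L^*]$ with the bound $\limsup_{t\to\infty}L\le C_4\tau/c$ being a repackaging of the paper's analysis of the roots $x_\pm(\tau)$ of the associated quadratic. Your factored identity $\dot G_{ij}=\kappa_0 L_{ij}S_{ij}^\tau+(P_j^\tau-P_i^\tau)G_{ij}+R_{ij}^\tau$ is a tidier bookkeeping of the paper's Lemma \ref{L4.4} computation, not a different method.
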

\begin{proof}
We leave its proof in Section \ref{sec:4.1.2}.
\end{proof}

\subsubsection{Basic estimates} \label{sec:4.1.1}
In this part, we provide several lemmas to be crucially used in the proof of Theorem \ref{T4.1}.
\begin{lemma} \label{L4.1}
Let $\{z_j\}$ be a global solution to \eqref{D-2}. Then, $G_{ij}$ satisfies 
\begin{align*}
\begin{aligned}
\frac{d}{dt} G_{ij} &= \displaystyle\frac{\kappa_0}{N}\sum_{k \neq i}(G_{kj}^{\tau}-\overline{G}_{ki}^{\tau}G_{ij}) + \displaystyle\frac{\kappa_0}{N}\sum_{k \neq j}(\overline{G}_{ki}^{\tau} - G_{kj}^{\tau}G_{ij})
 \\
&\hspace{0.5cm} + \displaystyle\frac{\kappa_1}{N}\sum_{k \neq i}(G_{ki}^{\tau}-\overline{G}_{ki}^{\tau})G_{ij}+ \displaystyle\frac{\kappa_1}{N}\sum_{k \neq j}(\overline{G}_{kj}^{\tau} - G_{kj}^{\tau})G_{ij}.
\end{aligned}
\end{align*}
\end{lemma}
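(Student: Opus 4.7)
The plan is to differentiate $G_{ij}(t) = \langle z_i(t), z_j(t)\rangle$ directly by the product rule, splitting it as $\frac{d}{dt}G_{ij} = \langle \dot z_i, z_j\rangle + \langle z_i, \dot z_j\rangle$, and then substitute the governing ODE \eqref{D-1} into each of the two pieces. Since Lemma \ref{L2.1} guarantees $\langle z_j, z_j\rangle = \|z_j\|^2 = 1$ on the invariant manifold, the prefactor in the $\kappa_0$-coupling term collapses from $\langle z_j, z_j\rangle z_k^\tau$ to simply $z_k^\tau$, which is why only single-index $G$-quantities (no $\|z\|^2$ factors) appear on the right-hand side of the claim.

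The execution amounts to two symmetric computations. First I would evaluate $\langle \dot z_i, z_j\rangle$ using conjugate linearity in the first slot: every scalar coefficient pulled out of the inner product picks up a conjugation, so $\langle z_k^\tau, z_i\rangle z_i$ contributes $\overline{\langle z_k^\tau, z_i\rangle}\langle z_i, z_j\rangle = \overline{G_{ki}^\tau}\,G_{ij}$, and $\langle z_k^\tau, z_j\rangle = G_{kj}^\tau$ appears bare. This produces the first $\kappa_0/N$-sum and the first $\kappa_1/N$-sum of the stated identity, once one uses $\langle z_i, z_k^\tau\rangle = \overline{\langle z_k^\tau, z_i\rangle} = \overline{G_{ki}^\tau}$. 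Second, I would evaluate $\langle z_i, \dot z_j\rangle$, where linearity in the second slot means the coefficients pass through unconjugated, so $\langle z_k^\tau, z_j\rangle z_j$ contributes $G_{kj}^\tau G_{ij}$ and $\langle z_i, z_k^\tau\rangle$ contributes $\overline{G_{ki}^\tau}$. This supplies the remaining two sums, and adding the halves yields the formula.

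The only real subtlety — and essentially the only place to make a mistake — is the systematic bookkeeping of conjugations induced by the convention $\langle w, z\rangle = \sum_\alpha \overline{[w]_\alpha}[z]_\alpha$, conjugate linear in the first slot. In particular, because the superscript $\tau$ in the definition $G_{ij}^\tau := \langle z_i^\tau, z_j\rangle$ is attached to the \emph{first} argument, one must consistently rewrite $\langle z_i, z_k^\tau\rangle$ as $\overline{G_{ki}^\tau}$ rather than as $G_{ik}^\tau$, and treat $(\langle z_i, z_k^\tau\rangle - \langle z_k^\tau, z_i\rangle)$ as $(\overline{G_{ki}^\tau} - G_{ki}^\tau)$ in the $\kappa_1$-term arising from $\dot z_i$. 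Once these identifications are made uniformly, the claimed identity follows by direct substitution with no further estimation required.
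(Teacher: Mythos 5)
Your proposal is correct and follows essentially the same route as the paper's proof: differentiate $G_{ij}$ via the product rule, substitute system \eqref{D-1}, use $\langle z_j, z_j\rangle = 1$ from Lemma \ref{L2.1}, and track the sesquilinearity (conjugation of scalars in the first slot, none in the second) to convert everything into $G$-notation. Your conjugation bookkeeping, including rewriting $\langle z_i, z_k^\tau\rangle$ as $\overline{G_{ki}^\tau}$ and the resulting signs in the $\kappa_1$-terms, matches the paper's computation exactly.
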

\begin{proof} By direct calculation, one has 
\begin{align*}
\begin{aligned}
\frac{d}{dt} G_{ij} &=\langle \dot{z}_i, z_j\rangle+\langle z_i, \dot{z}_j\rangle\\
&= \left\langle\displaystyle\frac{\kappa_0}{N}\sum_{k \neq i}\big(\langle z_i, z_i \rangle z_k^{\tau} - \langle z_k^{\tau}, z_i \rangle z_i) + \frac{\kappa_1}{N}\sum_{k \neq i}\big(\langle z_i, z_k^{\tau} \rangle - \langle z_k^{\tau} , z_i \rangle) z_i, z_j \right\rangle\\
&\hspace{0.5cm} + \left\langle z_i,  \displaystyle\frac{\kappa_0}{N}\sum_{k \neq j}\big(\langle z_j, z_j \rangle z_k^{\tau} - \langle z_k^{\tau}, z_j \rangle z_j) + \frac{\kappa_1}{N}\sum_{k \neq j}\big(\langle z_j, z_k^{\tau} \rangle - \langle z_k^{\tau} , z_j \rangle) z_j \right\rangle\\
&=  \displaystyle\frac{\kappa_0}{N}\sum_{k \neq i}\big(\langle z_k^{\tau}, z_j \rangle - \overline{\langle z_k^{\tau}, z_i \rangle} \langle z_i, z_j \rangle) + \frac{\kappa_1}{N}\sum_{k \neq i}\big(\overline{\langle z_i, z_k^{\tau} \rangle} - \overline{\langle z_k^{\tau} , z_i \rangle}) \langle z_i, z_j \rangle\\
&\hspace{0.5cm}  + \displaystyle\frac{\kappa_0}{N}\sum_{k \neq j}\big( \langle z_i, z_k^{\tau} \rangle - \langle z_k^{\tau}, z_j \rangle \langle z_i, z_j \rangle) + \frac{\kappa_1}{N}\sum_{k \neq j}\big(\langle z_j, z_k^{\tau} \rangle  - \langle z_k^{\tau} , z_j \rangle) \langle z_i, z_j \rangle \\
&=\displaystyle\frac{\kappa_0}{N}\sum_{k \neq i}(G_{kj}^{\tau}-\overline{G}_{ki}^{\tau}G_{ij}) + \displaystyle\frac{\kappa_0}{N}\sum_{k \neq j}(\overline{G}_{ki}^{\tau} - G_{kj}^{\tau}G_{ij})
+ \displaystyle\frac{\kappa_1}{N}\sum_{k \neq i}(G_{ki}^{\tau}-\overline{G}_{ki}^{\tau})G_{ij} \\
&\hspace{0.5cm} + \frac{\kappa_1}{N}\sum_{k \neq j}(\overline{G}_{kj}^{\tau} - G_{kj}^{\tau})G_{ij}.
\end{aligned}
\end{align*}
\end{proof}
\begin{lemma} \label{L4.2}
Let $A\in\bbc^{d\times d}$ and $v\in\bbc^{d}$ be given matrix and vector, respectively. Then, one has
\[
\|Av\|\leq \|A\|_F\cdot\|v\|,
\]
where $\|\cdot\|$ is a vector norm in $\bbc^d$ and $\|\cdot\|_F$ is a Frobenius norm.
\end{lemma}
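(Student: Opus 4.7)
The plan is to reduce the claim to a componentwise application of the Cauchy--Schwarz inequality. First I would write the $i$-th component of $Av$ explicitly as $[Av]_i = \sum_{j=1}^{d} A_{ij}[v]_j$, so that by definition of the Euclidean norm on $\mathbb{C}^d$ one has $\|Av\|^2 = \sum_{i=1}^{d} \bigl|\sum_{j=1}^{d} A_{ij}[v]_j \bigr|^2$.

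Next I would apply the Cauchy--Schwarz inequality to each inner sum, obtaining
\[
\Big| \sum_{j=1}^{d} A_{ij}[v]_j \Big|^2 \leq \Big( \sum_{j=1}^{d} |A_{ij}|^2 \Big) \Big( \sum_{j=1}^{d} |[v]_j|^2 \Big) = \Big( \sum_{j=1}^{d} |A_{ij}|^2 \Big) \|v\|^2.
\]
Summing over $i$ from $1$ to $d$ and factoring out $\|v\|^2$ then yields
\[
\|Av\|^2 \leq \Big( \sum_{i=1}^{d} \sum_{j=1}^{d} |A_{ij}|^2 \Big) \|v\|^2 = \|A\|_F^2 \, \|v\|^2.
\]
Taking square roots gives the claimed bound $\|Av\| \leq \|A\|_F \cdot \|v\|$.

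There is no real obstacle here; this is a textbook estimate and the only subtlety is to be explicit that $\|\cdot\|$ denotes the Euclidean (i.e. $\ell^2$) norm on $\mathbb{C}^d$ consistent with the Frobenius norm on matrices, so that the Cauchy--Schwarz step matches the definitions. If the authors prefer a slicker approach, an alternative is to view $A$ as the linear operator whose rows are vectors $A_i^* \in \mathbb{C}^d$ and write $|[Av]_i| = |\langle \bar{A}_i^*, v\rangle| \leq \|A_i^*\| \cdot \|v\|$, then square and sum; this is essentially the same argument rephrased. Either version is short and self-contained.
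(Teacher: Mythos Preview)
Your proof is correct and follows essentially the same approach as the paper: both write out $[Av]_i$ componentwise, apply the Cauchy--Schwarz inequality to each component, and then sum over $i$ to assemble $\|A\|_F^2\|v\|^2$. The paper's version uses slightly different index notation but the argument is identical in structure and content.
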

\begin{proof}
We set the componentwise form of $A$ and $v$ as follows: 
\[  A := [A]_{\alpha\beta} \quad \mbox{and} \quad  v:=[v]_\gamma, \]
where $1\leq \alpha,\beta,\gamma\leq d$. By the Cauchy-Schwarz inequality, we have
\[
|[Av]_\alpha|=\left|\sum_{\beta=1}^{d}[A]_{\alpha\beta}[v]_\beta\right| \leq \sqrt{\sum_{\beta=1}^{d}[\bar{A}]_{\alpha\beta}[A]_{\alpha\beta}}\cdot\sqrt{\sum_{\beta=1}^{d}[\bar{v}]_\beta [v]_\beta},
\]
Thus, one has
\begin{align*}
\|Av\|^2=\sum_{\alpha=1}^d|[Av]_\alpha|^2\leq \left(\sum_{\beta=1}^d[\bar{A}]_{\alpha\beta}[A]_{\alpha\beta}\right)\cdot\left(\sum_{\beta=1}^d[\bar{v}]_\beta [v]_\beta\right)=\|A\|_F^2\cdot\|v\|^2,
\end{align*}
and this yields the desired result.
\end{proof}
\begin{lemma} \label{L4.3}
Let $\{z_j\}$ be a global solution to \eqref{D-1}. Then $L_{ij}$ in \eqref{D-2} satisfies 
\[ 
| L_{ij}(t)-L_{ij}^{\tau}(t)| \leq \tau \mathcal{C}_2,
\]
where the positive constant $\mathcal{C}_2$ is given by
\[
\mathcal{C}_2 = \frac{2(N-1)}{N}\left(\kappa_0 + |\kappa_1|\right).
\]
\end{lemma}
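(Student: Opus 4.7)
The plan is to reduce the estimate to a uniform bound on $\|\dot z_i(t)\|$ and then integrate. First, from the definitions in \eqref{D-2} I observe that
\[
L_{ij}(t) - L_{ij}^{\tau}(t) = G_{ij}^{\tau}(t) - G_{ij}(t) = \langle z_i^{\tau}(t) - z_i(t),\, z_j(t) \rangle,
\]
so by the Cauchy--Schwarz inequality together with $\|z_j(t)\| = 1$ from Lemma \ref{L2.1},
\[
|L_{ij}(t) - L_{ij}^{\tau}(t)| \leq \|z_i(t) - z_i^{\tau}(t)\|.
\]
Writing $z_i(t) - z_i^{\tau}(t) = \int_{t-\tau}^{t} \dot{z}_i(s)\, ds$ reduces the claim to showing $\|\dot{z}_i(s)\| \leq \mathcal{C}_2$ uniformly in $s$.

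To bound $\|\dot z_i\|$, I would apply the triangle inequality term by term in the right-hand side of \eqref{D-1}. For the $\kappa_0$-block, each summand $\langle z_j, z_j\rangle z_k^\tau - \langle z_k^\tau, z_j\rangle z_j$ is estimated using $\|z_j\| = \|z_k^\tau\| = 1$ and Cauchy--Schwarz on the scalar factor, yielding a bound of $2$ per summand, and hence a contribution of at most $\frac{2\kappa_0(N-1)}{N}$. For the $\kappa_1$-block, since $\langle z_j, z_k^\tau\rangle - \langle z_k^\tau, z_j\rangle = 2i\,\mathrm{Im}\langle z_j, z_k^\tau\rangle$ has modulus at most $2$, each summand is bounded by $2$ as well, giving a contribution of at most $\frac{2|\kappa_1|(N-1)}{N}$. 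Adding the two gives
\[
\|\dot z_i(s)\| \leq \frac{2(N-1)}{N}\bigl(\kappa_0 + |\kappa_1|\bigr) = \mathcal{C}_2.
\]

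Finally, integrating this pointwise bound over $[t-\tau, t]$ yields $\|z_i(t) - z_i^\tau(t)\| \leq \tau\,\mathcal{C}_2$, which combined with the Cauchy--Schwarz step above proves the lemma. There is no real obstacle here: the argument is a direct a priori estimate, and the only point requiring care is collecting the constants correctly (in particular, using the bounds $\|z_j\|=1$ from Lemma \ref{L2.1} and the trivial bound $|\mathrm{Im}\langle z_j, z_k^\tau\rangle| \leq 1$ rather than the sharper $\leq \|z_j - z_k^\tau\|$, which is not needed here). The same strategy already underlies Lemma \ref{L3.4} in the SL case, and the present statement is simply its extension to the general $(\kappa_0,\kappa_1)$ setting.
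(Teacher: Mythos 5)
Your proof is correct and follows essentially the same route as the paper: Cauchy--Schwarz with $\|z_j\|=1$ reduces the claim to bounding $\|z_i - z_i^{\tau}\|$, which the paper obtains by citing Lemma \ref{L3.6} while you re-derive it by integrating $\dot z_i$ and bounding the two coupling blocks, exactly the argument behind Lemma \ref{L3.4}/\ref{L3.6}. The constant bookkeeping ($\tfrac{2(N-1)}{N}(\kappa_0+|\kappa_1|)$) matches the paper's $\mathcal{C}_2$.
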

\begin{proof}
By Cauchy-Schwarz inequality, we have
\[
|L_{ij}(t)-L_{ij}^{\tau}(t)| = |\langle z_i - z_i^{\tau} , z_j \rangle| \leq \|z_i - z_i^{\tau} \| \cdot\|z_j \|.
\]
Note that $\|z_j\| = 1$. By Lemma \eqref{L3.6}, we have
\[ 
\|z_i - z_i^{\tau} \| \leq  \tau \Big\{  2 \frac{N-1}{N}(\kappa_0 + |\kappa_1|) \Big \}.
\]
\end{proof}
\begin{lemma}\label{L4.4}
Let $\{z_j\}$ be a global solution to \eqref{D-1}. Then, $|L_{ij}|$ satisfies 
\begin{align*}
\begin{aligned}
\frac{d}{dt}|L_{ij}|^2  &\leq -\frac{2\kappa_0}{N}\sum_{k=1}^{N}|L_{ij}|^2(\mathrm{Re}( \langle z_k^{\tau}, z_i + z_j \rangle) + 4|\kappa_1|\cdot |L_{ij}|\cdot(|L_{ci}|+|L_{cj}|+ 2\mathcal{C}_2\tau)\\
&\hspace{0.5cm} +|L_{ij}|\frac{8\mathcal{C}_2\tau}{N}(\kappa_0 + |\kappa_1|) + |L_{ij}|^2\frac{4\kappa_0 \mathcal{C}_2\tau}{N}.
\end{aligned}
\end{align*}
\end{lemma}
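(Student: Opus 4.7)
The plan is to differentiate $|L_{ij}|^2 = L_{ij}\overline{L_{ij}}$ and use $L_{ij} = 1 - G_{ij}$ to reduce the computation to
\[
\frac{d}{dt}|L_{ij}|^2 = -2\re\!\left(\dot G_{ij}\,\overline{L_{ij}}\right),
\]
and then substitute the explicit expression for $\dot G_{ij}$ from Lemma \ref{L4.1}. This splits the right-hand side cleanly into a $\kappa_0$-contribution and a $\kappa_1$-contribution, which will be handled separately.

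For the $\kappa_0$-contribution, the key algebraic observation is that restoring the omitted $k=i$ and $k=j$ terms converts the two restricted sums into full sums $\sum_{k=1}^N$, and after grouping one obtains
\[
\sum_{k=1}^N\bigl(G_{kj}^\tau + \overline{G_{ki}^\tau}\bigr)(1 - G_{ij}) = L_{ij}\sum_{k=1}^N\bigl(G_{kj}^\tau + \overline{G_{ki}^\tau}\bigr).
\]
Multiplying by $-\frac{2\kappa_0}{N}\overline{L_{ij}}$ and taking the real part produces the leading dissipative term $-\frac{2\kappa_0}{N}|L_{ij}|^2\sum_{k=1}^N\re\langle z_k^\tau, z_i + z_j\rangle$. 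The two diagonal corrections $G_{ij}^\tau - \overline{G_{ii}^\tau}G_{ij}$ and $\overline{G_{ji}^\tau} - G_{jj}^\tau G_{ij}$ are then decomposed via the identity
\[
G_{ij}^\tau - \overline{G_{ii}^\tau}G_{ij} = (G_{ij}^\tau - G_{ij}) + (1 - \overline{G_{ii}^\tau}) - (1 - \overline{G_{ii}^\tau})L_{ij}.
\]
Lemma \ref{L4.3} bounds the first term by $\mathcal{C}_2\tau$, and the estimate $|1 - \overline{G_{ii}^\tau}| = |L_{ii}^\tau| = |\langle z_i - z_i^\tau, z_i\rangle| \leq \|z_i - z_i^\tau\| \leq \mathcal{C}_2\tau$ bounds the remaining two pieces by $\mathcal{C}_2\tau$ and $\mathcal{C}_2\tau|L_{ij}|$, respectively. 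After multiplication by $\frac{2\kappa_0}{N}|L_{ij}|$ and summation over the two $k = i,j$ corrections, these produce exactly the $\frac{8\kappa_0\mathcal{C}_2\tau}{N}|L_{ij}|$ and $\frac{4\kappa_0\mathcal{C}_2\tau}{N}|L_{ij}|^2$ summands of the claim.

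For the $\kappa_1$-contribution, the identity $G_{ki}^\tau - \overline{G_{ki}^\tau} = 2i\,\im(G_{ki}^\tau)$ reduces the whole piece to $\frac{2i\kappa_1 G_{ij}}{N}A$, where the real scalar
\[
A := \sum_{k\neq i}\im(G_{ki}^\tau) - \sum_{k\neq j}\im(G_{kj}^\tau)
\]
can be rewritten via $\sum_{k=1}^N G_{k\ell}^\tau = NG_{c\ell}^\tau$ as $A = N\,\im(L_{cj}^\tau - L_{ci}^\tau) + \im(L_{ii}^\tau - L_{jj}^\tau)$. Using the centroidal delay bound $|L_{ci}^\tau - L_{ci}| \leq \|z_c - z_c^\tau\| \leq \frac{1}{N}\sum_k\|z_k - z_k^\tau\| \leq \mathcal{C}_2\tau$ together with $|L_{ii}^\tau|, |L_{jj}^\tau| \leq \mathcal{C}_2\tau$, one obtains $|A| \leq N(|L_{ci}| + |L_{cj}| + 2\mathcal{C}_2\tau) + 2\mathcal{C}_2\tau$. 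Writing $G_{ij} = 1 - L_{ij}$ and using $\re(i\overline{L_{ij}}) = \im(L_{ij})$, the $\kappa_1$-contribution to $-2\re(\dot G_{ij}\overline{L_{ij}})$ is bounded by $\frac{4|\kappa_1||A|}{N}|L_{ij}|$, which expands to exactly $4|\kappa_1||L_{ij}|(|L_{ci}| + |L_{cj}| + 2\mathcal{C}_2\tau) + \frac{8|\kappa_1|\mathcal{C}_2\tau}{N}|L_{ij}|$. Merging the $\kappa_0$- and $\kappa_1$-delay pieces into the single coefficient $\frac{8\mathcal{C}_2\tau(\kappa_0 + |\kappa_1|)}{N}|L_{ij}|$ produces the stated inequality.

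The main obstacle is purely the algebraic bookkeeping: one must discover the decomposition of the diagonal corrections $G_{ij}^\tau - \overline{G_{ii}^\tau}G_{ij}$ that simultaneously isolates both the $|L_{ij}|$ and $|L_{ij}|^2$ components of the delay error, and one must also establish the centroidal variant of Lemma \ref{L4.3}, which is not stated explicitly but follows at once from $\|z_c - z_c^\tau\| \leq \mathcal{C}_2\tau$. Once these identities are in place, every remaining estimate reduces to Cauchy--Schwarz together with the uniform delay bound $\|z_i - z_i^\tau\| \leq \mathcal{C}_2\tau$.
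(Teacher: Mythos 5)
Your proposal is correct and follows essentially the same route as the paper's proof: differentiate $|L_{ij}|^2$ using the $G_{ij}$-equation of Lemma \ref{L4.1}, separate the full sums (giving the dissipative term $-\tfrac{2\kappa_0}{N}|L_{ij}|^2\sum_k\re\langle z_k^\tau, z_i+z_j\rangle$) from the $k=i,j$ diagonal corrections, bound all delay discrepancies by $\mathcal{C}_2\tau$ via Lemma \ref{L4.3} and $|L_{ii}^\tau|\le\|z_i-z_i^\tau\|\le\mathcal{C}_2\tau$, and control the $\kappa_1$-part through the centroid $z_c^\tau$. Your bookkeeping (the decomposition identity for $G_{ij}^\tau-\overline{G_{ii}^\tau}G_{ij}$ and the single scalar $A$ for the $\kappa_1$-terms) is only an organizational variant, and in fact reproduces the stated coefficient $\tfrac{8\mathcal{C}_2\tau}{N}(\kappa_0+|\kappa_1|)$ exactly, slightly sharper than the $\tfrac{16\mathcal{C}_2|\kappa_1|\tau+8\mathcal{C}_2\kappa_0\tau}{N}$ appearing at the end of the paper's own computation.
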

\begin{proof}
We use \eqref{D-1} to get 
\begin{align*}
\begin{aligned}
\frac{d}{dt} G_{ij} &= \displaystyle\frac{\kappa_0}{N}\sum_{k \neq i}(G_{kj}^{\tau}-\overline{G}_{ki}^{\tau}G_{ij}) + \displaystyle\frac{\kappa_0}{N}\sum_{k \neq j}(\overline{G}_{ki}^{\tau} - G_{kj}^{\tau}G_{ij})
+ \displaystyle\frac{\kappa_1}{N}\sum_{k \neq i}(G_{ki}^{\tau}G_{ij}-\overline{G}_{ki}^{\tau}G_{ij}) \\
&\hspace{0.5cm} + \displaystyle\frac{\kappa_1}{N}\sum_{k \neq j}(\overline{G}_{kj}^{\tau}G_{ij} - G_{kj}^{\tau}G_{ij})\\
&= \displaystyle\frac{\kappa_0}{N}\sum_{k = 1}^{N}(G_{kj}^{\tau}-\overline{G}_{ki}^{\tau}G_{ij} + \overline{G}_{ki}^{\tau} - G_{kj}^{\tau}G_{ij}) + \displaystyle\frac{\kappa_1}{N}\sum_{k =1}^{N}(G_{ki}^{\tau}G_{ij}-\overline{G}_{ki}^{\tau}G_{ij}+\overline{G}_{kj}^{\tau}G_{ij} - G_{kj}^{\tau}G_{ij}) \\
&\hspace{0.5cm} - \frac{\kappa_0}{N}(G_{ij}^{\tau}-\overline{G}_{ii}^{\tau}G_{ij} + \overline{G}_{ji}^{\tau}-G_{jj}^{\tau}G_{ij}) - \frac{\kappa_1}{N}(G_{ii}^{\tau}G_{ij}-\overline{G}_{ii}^{\tau}G_{ij} + \overline{G}_{jj}^{\tau}G_{ij}-G_{jj}^{\tau}G_{ij})\\
&= \displaystyle\frac{\kappa_0}{N}\sum_{k = 1}^{N}(2-L_{kj}^{\tau}-\overline{L}_{ki}^{\tau})L_{ij} + \displaystyle\frac{2\mathrm{i}\kappa_1}{N}\sum_{k =1}^{N}(\mathrm{Im}L_{kj}^{\tau}-\mathrm{Im}L_{ki}^{\tau})(1-L_{ij}) \\
&\hspace{0.5cm} - \frac{\kappa_0}{N}(2L_{ij}-L_{ij}^{\tau}-\overline{L}_{ji}^{\tau}+L_{jj}^{\tau}+\overline{L}_{ii}^{\tau}-\overline{L}_{ii}^{\tau}L_{ij}-L_{jj}^{\tau}L_{ij}) - \frac{2\mathrm{i}\kappa_1}{N}(1-L_{ij})(\mathrm{Im}L_{jj}^{\tau}-\mathrm{Im}L_{ii}^{\tau})
\end{aligned}
\end{align*}
Thus, we have
\begin{align*}
\frac{d}{dt}|L_{ij}|^2 &= \frac{d}{dt}(L_{ij} \bar{L}_{ij}) = \dot{L}_{ij}\overline{L}_{ij} + L_{ij}\dot{\overline{L}}_{ij} = -\frac{d}{dt}\langle z_i, z_j \rangle (1 - \langle z_j, z_i \rangle) + (1- \langle z_i, z_j \rangle)(-\frac{d}{dt}\langle z_j, z_i \rangle)\\
&= -L_{ji}\big(\displaystyle\frac{\kappa_0}{N}\sum_{k = 1}^{N}(2-L_{kj}^{\tau}-\overline{L}_{ki}^{\tau})L_{ij} + \displaystyle\frac{2\mathrm{i} \kappa_1}{N}\sum_{k =1}^{N}(\mathrm{Im}L_{kj}^{\tau}-\mathrm{Im}L_{ki}^{\tau})(1-L_{ij})\\
&\hspace{0.5cm} - \frac{\kappa_0}{N}(2L_{ij}-L_{ij}^{\tau}-\overline{L}_{ji}^{\tau}+L_{jj}^{\tau}+\overline{L}_{ii}^{\tau}-\overline{L}_{ii}^{\tau}L_{ij}-L_{jj}^{\tau}L_{ij}) - \frac{2\mathrm{i}\kappa_1}{N}(1-L_{ij})(\mathrm{Im}L_{jj}^{\tau}-\mathrm{Im}L_{ii}^{\tau})\\
&\hspace{0.5cm} -L_{ij}\big(\displaystyle\frac{\kappa_0}{N}\sum_{k = 1}^{N}(2-L_{ki}^{\tau}-\overline{L}_{kj}^{\tau})L_{ji} + \displaystyle\frac{2\mathrm{i}\kappa_1}{N}\sum_{k =1}^{N}(\mathrm{Im}L_{ki}^{\tau}-\mathrm{Im}L_{kj}^{\tau})(1-L_{ji}) \\
&\hspace{0.5cm} - \frac{\kappa_0}{N}(2L_{ji}-L_{ji}^{\tau}-\overline{L}_{ij}^{\tau}+L_{ii}^{\tau}+\overline{L}_{jj}^{\tau}-\overline{L}_{jj}^{\tau}L_{ji}-L_{ii}^{\tau}L_{ji}) - \frac{2\mathrm{i}\kappa_1}{N}(1-L_{ji})(\mathrm{Im}L_{ii}^{\tau}-\mathrm{Im}L_{jj}^{\tau}))\\
&= \frac{\kappa_0}{N}\sum_{k=1}^{N}L_{ij}L_{ji}(L_{kj}^{\tau}+\overline{L}_{kj}^{\tau}+L_{ki}^{\tau}+\overline{L}_{ki}^{\tau}-4)
+ \frac{2\mathrm{i}\kappa_1}{N}\sum_{k=1}^{N}(L_{ij}-L_{ji})(\mathrm{Im}L_{kj}^{\tau}-\mathrm{Im}L_{ki}^{\tau})\\
&\hspace{0.5cm} + \frac{\kappa_0}{N}\big(4L_{ij}L_{ji} + (-L_{ij}^{\tau}-\overline{L}_{ji}^{\tau}+L_{jj}^{\tau}+\overline{L}_{ii}^{\tau})L_{ji} +(-L_{ji}^{\tau}-\overline{L}_{ij}^{\tau}+L_{ii}^{\tau}+\overline{L}_{jj}^{\tau})L_{ij} \\
&\hspace{0.5cm} -(\overline{L}_{ii}^{\tau}+L_{jj}^{\tau}+\overline{L}_{jj}^{\tau}+L_{ii}^{\tau})L_{ij}L_{ji}) + \frac{2\mathrm{i}\kappa_1}{N}(L_{ij}-L_{ji})(\mathrm{Im}L_{ii}^{\tau} - \mathrm{Im}L_{jj}^{\tau}).
\end{align*}
Note that 
\[ L_{ij}L_{ji} = (1- \langle z_i, z_j \rangle)(1- \langle z_j, z_i \rangle) = \| L_{ij}\| ^2  \quad \mbox{and} \quad \overline{L_{ij}}=L_{ji}. \]
So we have
\begin{align*}
\frac{d}{dt}|L_{ij}|^2 &= \frac{2\kappa_0}{N}\sum_{k=1}^{N}|L_{ij}|^2(\mathrm{Re}L_{ki}^{\tau} + \mathrm{Re}L_{kj}^{\tau} - 2) + \frac{4\kappa_1}{N}\sum_{k = 1}^{N}\mathrm{Im}L_{ij}(\mathrm{Im}L_{ki}^{\tau} - \mathrm{Im}L_{kj}^{\tau})\\
&\hspace{0.5cm} + \frac{\kappa_0}{N}\big(4|L_{ij}|^2 + 2\mathrm{Re}((-L_{ji}^{\tau} - \overline{L}_{ij}^{\tau} + L_{ii}^{\tau} + \overline{L}_{jj}^{\tau})L_{ij}) -2(\mathrm{Re}(L_{ii}^{\tau} + L_{jj}^{\tau})|Lij|^2)\big) \\
&\hspace{0.5cm} - \frac{4\kappa_1}{N}\mathrm{Im}L_{ij}\mathrm{Im}(L_{ii}^{\tau} - L_{jj}^{\tau}).
\end{align*}
Note that the last two terms of the right hand side in above equation goes to zero as $\tau$ goes to $0$. \newline

\noindent $\bullet$~Step A: Note that 
\begin{align*}
\begin{aligned}
& \|z_i - z_j\|^2 = |2(1 - \re(\langle z_i, z_j \rangle)| \leq 2|L_{ij}|, \\
\end{aligned}
\end{align*}
By Lemma 4.3, we have for any $i$,
\[
|L_{ii}^{\tau}| \leq \tau \mathcal{C}_2.
\]
Since $L_{ii} = 0$, one has 
\begin{align*}
\begin{aligned}
\left|\frac{4\kappa_1}{N}\mathrm{Im}L_{ij}\mathrm{Im}(L_{ii}^{\tau} - L_{jj}^{\tau}) \right| &= \frac{4|\kappa_1|}{N}|L_{ij}| \mathrm{Im}(\langle z_i, z_i^{\tau} \rangle - \langle z_j, z_j^{\tau} \rangle)| \\
& =  \frac{4|\kappa_1|}{N}|L_{ij}| \mathrm{Im}(\langle z_i - z_j, z_i^{\tau} \rangle + \langle z_j, z_i^{\tau} - z_j^{\tau} \rangle)| \\
&  \leq \frac{4|\kappa_1|}{N}|L_{ij}|(|L_{ii}^{\tau}| + |L_{jj}^{\tau}|) \leq \frac{16\mathcal{C}_2|\kappa_1|\tau}{N}|L_{ij}|.
\end{aligned}
\end{align*}

\vspace{0.2cm}

\noindent $\bullet$ Step B: Next, we analyze the term
\begin{align}
\begin{aligned} \label{D-3}
A &:= \frac{\kappa_0}{N}\big(4|L_{ij}|^2 + 2\mathrm{Re}((-L_{ji}^{\tau} - \overline{L}_{ij}^{\tau} + L_{ii}^{\tau} + \overline{L}_{jj}^{\tau})L_{ij}) -2(\mathrm{Re}(L_{ii}^{\tau} + L_{jj}^{\tau})|L_{ij}|^2) \\
 &= \frac{\kappa_0}{N}\Big(  \underbrace{4|L_{ij}|^2 + 2\mathrm{Re}((-L_{ji}^{\tau}L_{ij} - \overline{L}_{ij}^{\tau}L_{ij})}_{=: A_1} + \underbrace{2\mathrm{Re}(L_{ii}^{\tau}L_{ij} + \overline{L}_{jj}^{\tau}L_{ij}) -2(\mathrm{Re}(L_{ii}^{\tau} + L_{jj}^{\tau})|L_{ij}|^2}_{=: A_2} \Big).
\end{aligned}
\end{align}
In the sequel, we estimate $A_i,~i = 1,2$ as follows. \newline

\noindent $\bullet$~(Estimate of $A_2$):~By direct estimate, one has 
\begin{align} 
\begin{aligned} \label{D-4}
A_2 &\leq 2|L_{ii}^{\tau}L_{ij}| + 2| \overline{L}_{jj}^{\tau}L_{ij}|  + 2\big((|L_{ii}^{\tau}| + |L_{jj}^{\tau}|)|L_{ij}|^2\big) \\
&\leq 2|L_{ij}|(|L_{ii}^{\tau}| + | \overline{L}_{jj}^{\tau}|)  + 2|L_{ij}|^2(|L_{ii}^{\tau}| + |L_{jj}^{\tau}|)\\
&\leq 4\mathcal{C}_2\tau|L_{ij}| + 4\mathcal{C}_2\tau|L_{ij}|^2
\end{aligned}
\end{align}
On the other hand, note that
\begin{align*}
|L_{ij}|^2 - \mathrm{Re}L_{ji}^{\tau}L_{ij} &= \mathrm{Re}|L_{ij}|^2 - \mathrm{Re}L_{ji}^{\tau}L_{ij} = \mathrm{Re}\big((L_{ji}-L_{ji}^{\tau})L_{ij}\big) \\
&\leq |\big((L_{ji}-L_{ji}^{\tau})L_{ij}\big)| = |L_{ji}-L_{ji}^{\tau}||L_{ij}| \leq \mathcal{C}_2\tau|L_{ij}|
\end{align*}

\vspace{0.2cm}

\noindent $\bullet$~(Estimate of $A_1$):~Similarly, one has 
\begin{align*}
\begin{aligned} 
|L_{ij}|^2 - \mathrm{Re}L_{ij}\overline{L}_{ij}^{\tau} &= \mathrm{Re}(|L_{ij}|^2 - L_{ij}\overline{L}_{ij}^{\tau}) = \mathrm{Re}(L_{ij}L_{ji} - L_{ij}\overline{L}_{ij}^{\tau})\\
&\leq |L_{ij}||L_{ji} - \overline{L}_{ij}^{\tau}| = |L_{ij}|\cdot| 1 - \langle z_j, z_i \rangle - 1 + \overline{\langle z_i^{\tau}, z_j \rangle}| \\
&\leq |L_{ij}| |- \langle z_j, z_i \rangle + \langle z_j, z_i^{\tau}\rangle|=|L_{ij}|\cdot\|\langle z_j, z_i - z_i^{\tau} \rangle \| \leq \mathcal{C}_2\tau |L_{ij}|.
\end{aligned}
\end{align*}
Thus, we have
\begin{align}
\begin{aligned} \label{D-5}
A_1 &= 4\|L_{ij}\|^2 + 2\mathrm{Re}((-L_{ji}^{\tau}L_{ij} - \overline{L}_{ij}^{\tau}L_{ij})  \\
&= 2(\|L_{ij}\|^2 + \|L_{ij}\|^2 - \mathrm{Re}L_{ji}^{\tau}L_{ij} - \mathrm{Re}\overline{L}_{ij}^{\tau}L_{ij}) \\
&\leq 4\mathcal{C}_2\tau\|L_{ij}\|.
\end{aligned}
\end{align}
In \eqref{D-3}, we combine all the estimate \eqref{D-4} and \eqref{D-5} to find 
\[
A = \frac{\kappa_0}{N}(A_1 + A_2) \leq \frac{4\kappa_0 \mathcal{C}_2\tau}{N}(2|L_{ij}| + |L_{ij}|^2), 
\]
and
\begin{align*}
\begin{aligned}
&\frac{\kappa_0}{N}\big(4|L_{ij}|^2 + 2\mathrm{Re}((-L_{ji}^{\tau} - \overline{L}_{ij}^{\tau} + L_{ii}^{\tau} + \overline{L}_{jj}^{\tau})L_{ij}) \\
& \hspace{1cm} -2(\mathrm{Re}(L_{ii}^{\tau} + L_{jj}^{\tau})|Lij|^2)\big) - \frac{4\kappa_1}{N}\mathrm{Im}L_{ij}\mathrm{Im}(L_{ii}^{\tau} - L_{jj}^{\tau}) \\
& \hspace{1cm} \leq |L_{ij}|\frac{16\mathcal{C}_2|\kappa_1|\tau+8\mathcal{C}_2\kappa_0\tau}{N} + |L_{ij}|^2\frac{4\kappa_0 \mathcal{C}_2\tau}{N} .
\end{aligned}
\end{align*}
\\
$\bullet$ Step C:  Finally, we analyze the term
\[
\frac{2\kappa_0}{N}\sum_{k=1}^{N}|L_{ij}|^2(\mathrm{Re}L_{ki}^{\tau} + \mathrm{Re}L_{kj}^{\tau} - 2) + \frac{4\kappa_1}{N}\sum_{k = 1}^{N}\mathrm{Im}L_{ij}(\mathrm{Im}L_{ki}^{\tau} - \mathrm{Im}L_{kj}^{\tau}).
\] 
By direct calculation, we have
\begin{align*}
&\frac{2\kappa_0}{N}\sum_{k=1}^{N}|L_{ij}|^2(\mathrm{Re}L_{ki}^{\tau} + \mathrm{Re}L_{kj}^{\tau} - 2) + \frac{4\kappa_1}{N}\sum_{k = 1}^{N}\mathrm{Im}L_{ij}(\mathrm{Im}L_{ki}^{\tau} - \mathrm{Im}L_{kj}^{\tau})\\
& \hspace{0.5cm} = \frac{2\kappa_0}{N}\sum_{k=1}^{N}|1-\langle z_i, z_j \rangle|^2(\mathrm{Re}(1- \langle z_k^{\tau}, z_i \rangle + 1 - \langle z_k^{\tau}, z_j \rangle - 2) \\
&\hspace{0.7cm} + \frac{4\kappa_1}{N}\sum_{k=1}^{N}\mathrm{Im}(1- \langle z_i, z_j \rangle)(\mathrm{Im}(1- \langle z_k^{\tau}, z_i \rangle - 1 + \langle z_k^{\tau}, z_j \rangle)  \\
& \hspace{0.5cm} = -\frac{2\kappa_0}{N}\sum_{k=1}^{N}|1-\langle z_i, z_j \rangle|^2(\mathrm{Re}( \langle z_k^{\tau}, z_i + z_j \rangle) + \frac{4\kappa_1}{N}\sum_{k=1}^{N}\mathrm{Im}(\langle z_i, z_j \rangle)(\mathrm{Im}( \langle z_k^{\tau}, z_i - z_j \rangle).  \\
\end{align*}
Note that 
\begin{align*}
\begin{aligned}
&\frac{4\kappa_1}{N}\sum_{k=1}^{N}\mathrm{Im}(\langle z_i, z_j \rangle)(\mathrm{Im}( \langle z_k^{\tau}, z_i - z_j \rangle) =  4\kappa_1 \mathrm{Im}(\langle z_i, z_j \rangle)(\mathrm{Im}( \langle z_c^{\tau}, z_i - z_j \rangle)\\
&\hspace{2cm}\leq 4|\kappa_1|\cdot|L_{ij}|\cdot(|L_{ci}^{\tau}| + |L_{cj}^{\tau}|) \leq 4|\kappa_1| \cdot |L_{ij}|\cdot(|L_{ci}|+|L_{cj}|+ 2\mathcal{C}_2\tau),
\end{aligned}
\end{align*}
and so
\begin{align*}
&-\frac{2\kappa_0}{N}\sum_{k=1}^{N}|1-\langle z_i, z_j \rangle|^2(\mathrm{Re}( \langle z_k^{\tau}, z_i + z_j \rangle) + \frac{4\kappa_1}{N}\sum_{k=1}^{N}\mathrm{Im}(\langle z_i, z_j \rangle)(\mathrm{Im}( \langle z_k^{\tau}, z_i - z_j \rangle) \\
&\hspace{1cm} \leq -\frac{2\kappa_0}{N}\sum_{k=1}^{N}|1-\langle z_i, z_j \rangle|^2(\mathrm{Re}( \langle z_k^{\tau}, z_i + z_j \rangle) + 4|\kappa_1|\cdot |L_{ij}|\cdot(|L_{ci}|+|L_{cj}|+ 2\mathcal{C}_2\tau).
\end{align*}

\vspace{0.2cm}

$\bullet$ Step D: We collect all the estimates in Step A - Step C to find 
\begin{align*}
\frac{d}{dt}|L_{ij}|^2 &= \frac{2\kappa_0}{N}\sum_{k=1}^{N}|L_{ij}|^2(\mathrm{Re}L_{ki}^{\tau} + \mathrm{Re}L_{kj}^{\tau} - 2) + \frac{4\kappa_1}{N}\sum_{k = 1}^{N}\mathrm{Im}L_{ij}(\mathrm{Im}L_{ki}^{\tau} - \mathrm{Im}L_{kj}^{\tau})\\
&\hspace{0.5cm} + \frac{\kappa_0}{N}\big(4|L_{ij}|^2 + 2\mathrm{Re}((-L_{ji}^{\tau} - \overline{L}_{ij}^{\tau} + L_{ii}^{\tau} + \overline{L}_{jj}^{\tau})L_{ij}) -2(\mathrm{Re}(L_{ii}^{\tau} + L_{jj}^{\tau})|L_{ij}|^2) \\
&\hspace{0.5cm} - \frac{4\kappa_1}{N}\mathrm{Im}L_{ij}\mathrm{Im}(L_{ii}^{\tau} - L_{jj}^{\tau}) \\
&\leq -\frac{2\kappa_0}{N}\sum_{k=1}^{N}|L_{ij}|^2(\mathrm{Re}( \langle z_k^{\tau}, z_i + z_j \rangle) + 4|\kappa_1|\cdot |L_{ij}|\cdot(|L_{ci}|+|L_{cj}|+ 2\mathcal{C}_2\tau) \\
&\hspace{0.5cm}+|L_{ij}|\frac{16\mathcal{C}_2|\kappa_1|\tau+8\mathcal{C}_2\kappa_0\tau}{N}
 + |L_{ij}|^2\frac{4\kappa_0 \mathcal{C}_2\tau}{N}.
\end{align*}
\end{proof}
We set  
\[ L(t) = \max_{ i,j}|L_{ij}|. \]
Then, for each time $t$, there exists $i_t$ and $j_t$ by which the maximum is attained, i.e.
\begin{align}\label{D-6}
L(t) = |1-\langle z_{i_t}, z_{j_t} \rangle|.
\end{align}
Now we want to obtain the dynamics of $L(t)$.

\begin{lemma} \label{L4.5}
Let $\{z_j\}$ be a global solution to \eqref{C-8}. Then, the functional $L(t)$ in \eqref{D-6} satisfies 
\begin{align*}
\begin{aligned}
\frac{d}{dt}L(t) &\leq 2\kappa_0 L(t)^2 + \left(-2\kappa_0 +2\mathcal{C}_2\kappa_0 \tau +4|\kappa_1| + \frac{2\mathcal{C}_2\kappa_0 \tau}{N}\right)L(t) \\
&+ \left(4\mathcal{C}_2|\kappa_1|\tau + \frac{4\mathcal{C}_2\tau}{N}(\kappa_0 + 2|\kappa_1|)\right).
\end{aligned}
\end{align*}
\end{lemma}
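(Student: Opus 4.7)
The plan is to reduce the estimate on $L(t)$ to the per-pair estimate established in Lemma \ref{L4.4}, applied at the maximizing index pair $(i_t, j_t)$ defined in \eqref{D-6}. Since $L(t) = |L_{i_t j_t}(t)|$ is the maximum of finitely many $C^1$ functions, a standard envelope/Danskin argument shows that for almost every $t$ one has $\frac{d}{dt} L(t)^2 = \frac{d}{dt}|L_{i_t j_t}|^2$ (and the upper Dini derivative bounds the rest), so I can simply substitute $i = i_t$, $j = j_t$ into the right-hand side of Lemma \ref{L4.4} and then divide by $2L(t)$ at the end.

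After that substitution, the main task is to convert the three structural quantities $\mathrm{Re}\langle z_k^\tau, z_i + z_j\rangle$, $|L_{ci}| + |L_{cj}|$, and the explicit $\mathcal{O}(\tau)$ remainder into bounds purely in terms of $L(t)$. For the first term I would write $\mathrm{Re}\langle z_k^\tau, z_i\rangle = 1 - \mathrm{Re}(L_{ki}^\tau)$, use $|L_{ki}^\tau| \leq |L_{ki}| + |L_{ki}^\tau - L_{ki}|$, and then apply Lemma \ref{L4.3} together with $|L_{ki}| \leq L(t)$ to obtain the lower bound
\[
\mathrm{Re}\langle z_k^\tau, z_i + z_j\rangle \;\geq\; 2 - 2L(t) - 2\mathcal{C}_2\tau,
\]
which, after multiplication by $-\frac{2\kappa_0}{N}\sum_k L(t)^2$, produces exactly the $-4\kappa_0 L(t)^2 + 4\kappa_0 L(t)^3 + 4\kappa_0\mathcal{C}_2\tau L(t)^2$ combination that eventually yields the $2\kappa_0 L(t)^2$ and $-2\kappa_0 L(t)$ coefficients. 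For the second term, since $z_c = \frac{1}{N}\sum_k z_k$ one has $L_{ci} = \frac{1}{N}\sum_k L_{ki}$, so $|L_{ci}| + |L_{cj}| \leq 2 L(t)$, which turns $4|\kappa_1||L_{ij}|(|L_{ci}|+|L_{cj}|+2\mathcal{C}_2\tau)$ into $8|\kappa_1|L(t)^2 + 8|\kappa_1|\mathcal{C}_2\tau L(t)$; this provides the $4|\kappa_1|L(t)$ coefficient after the final division by $2L(t)$.

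The last step is purely algebraic: divide the resulting estimate on $\frac{d}{dt}L(t)^2 = 2L(t)\frac{d}{dt}L(t)$ by $2L(t)$ and collect terms according to their power of $L(t)$. The constant (zeroth-order in $L$) contribution comes from the explicit $L(t)\cdot\frac{8\mathcal{C}_2\tau}{N}(\kappa_0 + |\kappa_1|)$ remainder of Lemma \ref{L4.4} combined with $8|\kappa_1|\mathcal{C}_2\tau L(t)$, and the linear-in-$L(t)$ contribution collects $-2\kappa_0$, $2\mathcal{C}_2\kappa_0\tau$, $4|\kappa_1|$, and $\frac{2\mathcal{C}_2\kappa_0\tau}{N}$, matching the stated coefficients.

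The only real subtlety is the non-smoothness of $L(t)$ at instants when the argmax $(i_t, j_t)$ switches, which would normally invalidate the pointwise identity $\frac{d}{dt}L^2 = \frac{d}{dt}|L_{i_t j_t}|^2$. The standard remedy is to work with the upper Dini derivative $D^+ L(t)$ and invoke the fact that for a maximum of $C^1$ functions $D^+ L(t) \leq \max_{(i,j)\in\mathcal{A}(t)} \frac{d}{dt}|L_{ij}|/(2L(t))$ over the active set $\mathcal{A}(t)$; since the pointwise bound from Lemma \ref{L4.4} is uniform over all $(i,j)$ once we use $|L_{ij}| \leq L(t)$, this presents no obstruction. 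Hence the remaining computation is elementary collection of terms.
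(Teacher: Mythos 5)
Your proposal is correct and follows essentially the same route as the paper: evaluate Lemma \ref{L4.4} at the maximizing pair $(i_t,j_t)$, bound $\mathrm{Re}\langle z_c^\tau, z_i+z_j\rangle \geq 2-2L(t)-2\mathcal{C}_2\tau$ and $|L_{ci}|+|L_{cj}|\leq 2L(t)$ via Lemma \ref{L4.3}, then divide $\frac{d}{dt}L^2=2L\frac{d}{dt}L$ by $2L(t)$ and collect powers of $L(t)$ (your explicit Dini-derivative remark only makes rigorous what the paper leaves implicit). Note that starting from the stated form of Lemma \ref{L4.4} your constant term comes out as $4\mathcal{C}_2|\kappa_1|\tau+\frac{4\mathcal{C}_2\tau}{N}(\kappa_0+|\kappa_1|)$, which is slightly sharper than, and hence consistent with, the claimed $4\mathcal{C}_2|\kappa_1|\tau+\frac{4\mathcal{C}_2\tau}{N}(\kappa_0+2|\kappa_1|)$.
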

\begin{proof}
It follows from Lemma \ref{L4.4} that 
\begin{align*}
\frac{d}{dt}L(t)^2  &\leq -2\kappa_0 L(t)^2(\mathrm{Re}( \langle z_c^{\tau}, z_i + z_j \rangle) + 4|\kappa_1|\cdot L(t)\cdot(2L(t)+ 2\mathcal{C}_2\tau)\\
 &\hspace{0.5cm}  +L(t)\frac{16\mathcal{C}_2|\kappa_1|\tau+8\mathcal{C}_2\kappa_0\tau}{N}+ L(t)^2\frac{4\kappa_0 \mathcal{C}_2\tau}{N} \\
&=  \left(-2\kappa_0\mathrm{Re}( \langle z_c^{\tau}, z_i + z_j \rangle) +8|\kappa_1| + \frac{4\mathcal{C}_2\kappa_0 \tau}{N}\right)L(t)^2 \\
& \hspace{0.5cm} + \left(8\mathcal{C}_2|\kappa_1|\tau + \frac{16\mathcal{C}_2|\kappa_1|\tau+8\mathcal{C}_2\kappa_0\tau}{N}\right)L(t).
\end{align*}
Note that 
\[
\mathrm{Re}( \langle z_c^{\tau}, z_i + z_j \rangle) = - \mathrm{Re}(L_{ci}^{\tau} + L_{cj}^{\tau}) + 2. 
\]
Thus, we have
\begin{align*}
\begin{aligned}
\frac{d}{dt}L(t)^2 &\leq \Big(-2\kappa_0\mathrm{Re}(2-L_{ci}^{\tau} -L_{cj}^{\tau} ) +8|\kappa_1| + \frac{4\mathcal{C}_2\kappa_0 \tau}{N} \Big)L(t)^2  \\
& \hspace{0.5cm} + \left(8\mathcal{C}_2|\kappa_1|\tau + \frac{16\mathcal{C}_2|\kappa_1|\tau+8\mathcal{C}_2\kappa_0\tau}{N}\right)L(t) \\
&\leq 4\kappa_0 L(t)^3 + \left(-4\kappa_0 +4\mathcal{C}_2\kappa_0 \tau +8|\kappa_1| + \frac{4\mathcal{C}_2\kappa_0 \tau}{N}\right)L(t)^2  \\
& \hspace{0.5cm} + \left(8\mathcal{C}_2|\kappa_1|\tau + \frac{16\mathcal{C}_2|\kappa_1|\tau+8\mathcal{C}_2\kappa_0\tau}{N}\right)L(t).
\end{aligned}
\end{align*}
Hence we obtain the desired estimate:
\begin{align*}
\begin{aligned}
\frac{d}{dt}L(t) &\leq 2\kappa_0 L(t)^2 + \left(-2\kappa_0 +2\mathcal{C}_2\kappa_0 \tau +4|\kappa_1| + \frac{2\mathcal{C}_2\kappa_0 \tau}{N}\right)L(t)  \\
&+ \left(4\mathcal{C}_2|\kappa_1|\tau + \frac{4\mathcal{C}_2\tau}{N}(\kappa_0 + 2|\kappa_1|)\right).
\end{aligned}
\end{align*}
\end{proof} 

\subsubsection{Proof of Theorem \ref{T4.1}} \label{sec:4.1.2}
Consider a polynomial 
\begin{align*}
f(x)&=  2\kappa_0 x^2 +  \left(-2\kappa_0 +2\mathcal{C}_2\kappa_0 \tau +4|\kappa_1| + \frac{2\mathcal{C}_2\kappa_0 \tau}{N}\right)x +  \left(4\mathcal{C}_2|\kappa_1|\tau + \frac{4\mathcal{C}_2\tau}{N}(\kappa_0 + 2|\kappa_1|)\right)\\
&=2\kappa_0\left(x^2-\left(1-\frac{2|\kappa_1|}{\kappa_0}-\mathcal{C}_2\tau\left(\frac{N+1}{N}\right)x\right)+\mathcal{C}_2\tau\left(\frac{2|\kappa_1|}{\kappa_0}\left(1+\frac{2}{N}\right)+\frac{1}{N}\right)\right).
\end{align*}
Now we study about the practical synchronization with $\tau\searrow0$. Here we fix the other variables. Let assume that $2|\kappa_1|<\kappa_0$. Then for a sufficiently small $\tau$, there are two solutions of $f(x)=0$. If they exist, let they be $x_-(\tau)$ and $x_+(\tau)$ with $x_-(\tau)\leq x_+(\tau)$. Then we have following property from the phase portrait:
\[
L(0)<x_+(\tau)\quad\Rightarrow\quad\limsup_{t\to\infty}L(t)\leq x_-(\tau).
\]
Also we can obtain
\[
\lim_{\tau\searrow0}x_-(\tau)=0,\quad \lim_{\tau\searrow0}x_+(\tau)=1-\frac{2|\kappa_1|}{\kappa_0}.
\]

\subsection{General network topology} \label{sec:4.2} In this subsection, we will study practical synchronization of the system \eqref{A-1}. Here we set network topology as $\{a_{ij}\}$ with $a_{ij} \geq 0$, for all $i,j \in \mathcal{N}$. 
\begin{align}\label{D-7}
\begin{cases}
\dot{z}_j=\Omega_j z_j + \displaystyle\frac{\kappa_0}{N}\sum_{k \neq j}a_{jk}\big(\langle z_j, z_j \rangle z_k^{\tau} - \langle z_k^{\tau}, z_j \rangle z_j) + \frac{\kappa_1}{N}\sum_{k \neq j}a_{jk}\big(\langle z_j, z_k^{\tau} \rangle - \langle z_k^{\tau} , z_j \rangle) z_j, \\
z_j(t) = \varphi_j(t)\in\bbc^d, \quad -\tau \leq t \leq 0 ,
\end{cases}
\end{align}
\begin{theorem} \label{T4.2}
Let $\{z_j\}$ be a global solution to \eqref{D-7} with the following initial condition:
\[
L(0)<1-\frac{2\sum_{k=1}^{N}|a_{ik} - a_{jk}|}{\sum_{k=1}^N(a_{ik} + a_{jk})}.
\]
 Then, system \eqref{D-7} exhibits the practical synchronization:
\[
\lim_{\kappa_0\to\infty}\lim_{\tau\searrow0}\limsup_{t\to\infty}L(t)=0.
\]
\end{theorem}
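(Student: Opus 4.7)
The strategy mirrors that of Theorem \ref{T4.1}: derive a scalar quadratic differential inequality for the envelope $L(t) = \max_{i,j}|L_{ij}(t)|$, show that the hypothesis places the initial data strictly below the larger root of the associated quadratic, and extract the iterated limit from the smaller root. The new ingredients are the weight asymmetry $a_{ik}-a_{jk}$ and the non-identical skew-Hermitian free flows $\Omega_j$. First, I would repeat the computation of Lemma \ref{L4.1} with the weights included and the $\Omega$-contribution $\langle \Omega_i z_i, z_j\rangle + \langle z_i, \Omega_j z_j\rangle$ retained, then split the weighted $\kappa_0$-coupling as
\[
\frac{\kappa_0}{N}\sum_{k}\frac{a_{ik}+a_{jk}}{2}\bigl[\,\text{symmetric}\,\bigr] \;+\; \frac{\kappa_0}{N}\sum_{k}\frac{a_{ik}-a_{jk}}{2}\bigl[\,\text{asymmetric}\,\bigr].
\]
The symmetric piece reproduces the dissipative mechanism of Lemma \ref{L4.4} scaled by $A_{ij} := \frac{1}{N}\sum_k(a_{ik}+a_{jk})$, while the asymmetric piece yields a forcing of size $O(\kappa_0 B_{ij})$ with $B_{ij} := \frac{1}{N}\sum_k|a_{ik}-a_{jk}|$. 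The $\Omega$-part is bounded by $O(\|\Omega\|_\infty)$ uniformly in $\kappa_0$.

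Second, the delay Lipschitz bound of Lemma \ref{L4.3} extends almost verbatim: since $\dot z_j$ is uniformly bounded on $\mathbb{HS}^{d-1}$ by a constant depending linearly on $\|\Omega\|_\infty$, $\max_{i,j}a_{ij}$, $\kappa_0$ and $|\kappa_1|$, one obtains $|L_{ij}(t)-L_{ij}^\tau(t)| \leq \tau\widetilde{\mathcal{C}}$. Assembling Steps 1--2 as in Lemmas \ref{L4.4}--\ref{L4.5} and passing to the max envelope yields a quadratic ODI of the form
\begin{align*}
\frac{d}{dt}L(t) &\leq 2\kappa_0 L(t)^2 - 2\kappa_0\bigl(A_{\min}-2B_{\max}-\varepsilon(\tau,\kappa_0^{-1})\bigr)L(t) \\
&\hspace{0.5cm}+ C_1(|\kappa_1|+\|\Omega\|_\infty)L(t) + C_2\kappa_0 B_{\max} + C_3\tau(\kappa_0+|\kappa_1|+\|\Omega\|_\infty),
\end{align*}
with $A_{\min}=\min_{i,j}A_{ij}$ and $B_{\max}=\max_{i,j}B_{ij}$. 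For $\tau$ small and $\kappa_0$ large the right-hand-side quadratic has two real roots $x_-(\tau,\kappa_0)\leq x_+(\tau,\kappa_0)$, and a direct computation gives
\[
\lim_{\kappa_0\to\infty}\lim_{\tau\searrow 0}x_+(\tau,\kappa_0) = 1-\frac{2B_{\max}}{A_{\min}},\qquad \lim_{\kappa_0\to\infty}\lim_{\tau\searrow 0}x_-(\tau,\kappa_0) = 0.
\]
The hypothesis $L(0)<1-2B_{ij}/A_{ij}$ is calibrated exactly so that $L(0)<x_+(\tau,\kappa_0)$ in this regime, placing the trajectory in the attracting basin of $x_-$; the phase-portrait argument of Section \ref{sec:4.1.2} then gives $\limsup_{t\to\infty}L(t)\leq x_-(\tau,\kappa_0)$, and taking the iterated limit closes the proof.

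\textbf{Main obstacle.} Unlike Theorem \ref{T4.1}, the asymmetry forcing $C_2\kappa_0 B_{\max}$ scales with $\kappa_0$, so it does not vanish in the $\kappa_0\to\infty$ limit; it combines with the reduced linear-drift coefficient $2\kappa_0(A_{\min}-2B_{\max})$ to pin the larger root at exactly the threshold $1-2B_{\max}/A_{\min}$ dictated by the initial-data hypothesis. The delicate book-keeping lies in verifying that the discriminant of the quadratic stays bounded below by a positive constant (so both roots exist and remain separated) in the iterated limit $\tau\searrow 0$ then $\kappa_0\to\infty$, and that the $|\kappa_1|$- and $\|\Omega\|_\infty$-contributions---whose absolute size is $\kappa_0$-independent---become negligible relative to the $\kappa_0$-scaling which ultimately drives $x_-$ to zero.
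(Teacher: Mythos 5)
Your overall architecture is the same as the paper's (weighted version of Lemma \ref{L4.1}, a delay bound $|L_{ij}-L_{ij}^{\tau}|\leq \tau\widetilde{\mathcal{C}}$, a quadratic ODI for $L$, roots $x_\pm$, the phase-portrait argument of Section \ref{sec:4.1.2}, and the iterated limit), but the way you let the network asymmetry enter the ODI contains a genuine error that breaks the conclusion. The asymmetric piece of the coupling is, up to diagonal corrections,
$-\tfrac{\kappa_0}{N}\sum_k (a_{ik}-a_{jk})(\overline{L}_{ki}^{\tau}-L_{kj}^{\tau})$ inside $\dot L_{ij}$, so in $\tfrac{d}{dt}|L_{ij}|^2=2\re(\dot L_{ij}\overline{L}_{ij})$ it is bounded by $\tfrac{2\kappa_0}{N}\sum_k|a_{ik}-a_{jk}|\,(|L_{ki}|+|L_{kj}|+2\tau\mathcal{C}_3)\,|L_{ij}|$; after maximizing, it contributes $\tfrac{2\kappa_0}{N}\sum_k|a_{ik}-a_{jk}|$ to the coefficient of $L$ (degrading the damping from $\mathcal{A}_1$ to $\mathcal{A}_1-\mathcal{A}_2$, as in the paper) and only an $O(\kappa_0\tau)$ amount to the constant term. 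The constant term of the correct ODI is therefore $\mathcal{D}(\Omega)+O\big((\kappa_0+|\kappa_1|)\tau\big)$, which, after factoring out $\kappa_0$, gives $\mathcal{A}_3\to 0$ under $\tau\searrow 0$ followed by $\kappa_0\to\infty$; this vanishing of the product of roots $x_-x_+=\mathcal{A}_3/\mathcal{A}_1$ is precisely what forces $x_-\to 0$, while the threshold $1-\tfrac{2\sum_k|a_{ik}-a_{jk}|}{\sum_k(a_{ik}+a_{jk})}$ comes solely from the ratio $(\mathcal{A}_1-\mathcal{A}_2)/\mathcal{A}_1$ of the linear to the quadratic coefficient.

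In your ODI you instead place a constant forcing $C_2\kappa_0 B_{\max}$, independent of $L$, and in the ``main obstacle'' paragraph you claim it combines with the reduced drift to pin $x_+$ at $1-2B_{\max}/A_{\min}$. That cannot work: for the quadratic $2\kappa_0 x^2-2\kappa_0(A_{\min}-2B_{\max})x+C_2\kappa_0 B_{\max}$ the product of the roots equals $C_2B_{\max}/2$, which does not tend to zero in the iterated limit whenever $B_{\max}>0$; hence $x_-(\tau,\kappa_0)\not\to 0$, the sum of roots (not $x_+$ alone) equals $A_{\min}-2B_{\max}$ so $x_+$ does not converge to $1-2B_{\max}/A_{\min}$ either, and you would additionally need an unstated hypothesis to keep the discriminant positive. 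The fix is exactly the sharper bookkeeping above: keep the factors $|L_{ki}|+|L_{kj}|\leq 2L$ attached to the asymmetry (they are small in the aggregation regime, which is the whole point), so that no $O(\kappa_0 B)$ term survives in the constant slot. Your use of the uniform envelopes $A_{\min}$, $B_{\max}$ in place of the paper's $i,j$-dependent quantities is fine (indeed slightly cleaner), and the treatment of the $\Omega$- and $\kappa_1$-contributions is consistent with the paper; only the placement of the asymmetry term needs to be corrected for the proof to close.
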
}
\begin{proof}
We leave its proof in Section \ref{sec:4.2.2}
\end{proof}

\subsubsection{Basic estimates} \label{sec:4.2.1} In this part, we provide several a priori estimates to be used in the proof of Theorem \ref{T4.2}. 
\begin{lemma} \label{L4.6}
Let $\{z_j\}$ be a global solution to \eqref{D-7}. Then we have 
\begin{align*}
\frac{d}{dt} G_{ij} &=\langle (\Omega_i-\Omega_j) z_i, z_j \rangle + \displaystyle\frac{\kappa_0}{N}\sum_{k \neq i}a_{ik}(G_{kj}^{\tau}-\overline{G}_{ki}^{\tau}G_{ij}) + \displaystyle\frac{\kappa_0}{N}\sum_{k \neq j}a_{jk}(\overline{G}_{ki}^{\tau} - G_{kj}^{\tau}G_{ij})
\\
&\hspace{0.5cm} + \displaystyle\frac{\kappa_1}{N}\sum_{k \neq i}a_{ik}(G_{ki}^{\tau}-\overline{G}_{ki}^{\tau})G_{ij} + \displaystyle\frac{\kappa_1}{N}\sum_{k \neq j}a_{jk}(\overline{G}_{kj}^{\tau} - G_{kj}^{\tau})G_{ij}.
\end{align*}
\end{lemma}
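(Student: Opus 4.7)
The plan is to derive the identity by direct differentiation of $G_{ij}(t)=\langle z_i(t),z_j(t)\rangle$, substituting the governing equation \eqref{D-7} for $\dot z_i$ and $\dot z_j$, and then simplifying using the sesquilinearity of the inner product together with the two structural facts already established: $\|z_j(t)\|=1$ for all $j$ (Lemma \ref{L2.1}) and the skew-Hermiticity of each $\Omega_j$. This is the exact same bookkeeping as in Lemma \ref{L4.1}; the only new ingredients are the weights $a_{jk}$, which ride along passively, and the free-flow matrices $\Omega_j$, which need a short combining step. Since the computation is linear and termwise, the difficulty is purely notational.

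First, I would write
\[
\frac{d}{dt}G_{ij}=\langle\dot z_i,z_j\rangle+\langle z_i,\dot z_j\rangle
\]
and insert \eqref{D-7} on both sides. Treating $\langle z_i,z_i\rangle=\|z_i\|^2=1$ as a scalar that factors out (and similarly for $\langle z_j,z_j\rangle$) converts the quartic-looking $\kappa_0$ terms into bilinear combinations of the $G$-variables. Using $\overline{\langle z_k^\tau,z_i\rangle}=\langle z_i,z_k^\tau\rangle$ and the shorthand $G_{ij}^\tau=\langle z_i^\tau,z_j\rangle$, the $\kappa_0$ contributions from $\langle\dot z_i,z_j\rangle$ and $\langle z_i,\dot z_j\rangle$ collapse precisely to
\[
\frac{\kappa_0}{N}\sum_{k\neq i}a_{ik}(G_{kj}^{\tau}-\overline{G}_{ki}^{\tau}G_{ij})\quad\text{and}\quad \frac{\kappa_0}{N}\sum_{k\neq j}a_{jk}(\overline{G}_{ki}^{\tau}-G_{kj}^{\tau}G_{ij}),
\]
respectively, matching the two middle terms in the statement. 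The $\kappa_1$ contributions are even easier: the scalar factor $(\langle z_j,z_k^\tau\rangle-\langle z_k^\tau,z_j\rangle)$ (purely imaginary, twice the imaginary part) simply multiplies $G_{ij}$ after one passes it across the inner product, producing the last two summands.

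For the free-flow contribution, I would combine $\langle\Omega_i z_i,z_j\rangle+\langle z_i,\Omega_j z_j\rangle$ by sliding $\Omega_j$ to the left using its adjoint. Because $\Omega_j$ is skew-Hermitian, $\Omega_j^\dagger=-\Omega_j$, so $\langle z_i,\Omega_j z_j\rangle=\langle\Omega_j^\dagger z_i,z_j\rangle=-\langle\Omega_j z_i,z_j\rangle$, yielding the compact form $\langle(\Omega_i-\Omega_j)z_i,z_j\rangle$ stated in the lemma. This step is exactly the free-flow analogue of the cancellation already carried out in Case A of Lemma \ref{L2.1}.

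I expect no real obstacle; the entire argument is a mechanical extension of Lemma \ref{L4.1}, and the main risk is a sign or conjugation slip when distributing the inner product across $\kappa_1$ terms. To guard against that, I would keep careful track of which argument of $\langle\cdot,\cdot\rangle$ is being conjugated (the first one, per the convention $\langle w,z\rangle=\sum_\alpha[\bar w]_\alpha[z]_\alpha$ fixed in the introduction) and verify, as a sanity check, that setting $\Omega_j=0$ and $a_{jk}\equiv1$ recovers Lemma \ref{L4.1} verbatim.
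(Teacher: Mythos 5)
Your proposal is correct and follows essentially the same route as the paper: differentiate $G_{ij}$, substitute \eqref{D-7}, use sesquilinearity together with $\|z_j\|=1$ (Lemma \ref{L2.1}) to reduce the $\kappa_0$, $\kappa_1$ terms to the stated $G$-combinations, and combine the free-flow terms via $\langle z_i,\Omega_j z_j\rangle=-\langle\Omega_j z_i,z_j\rangle$ from skew-Hermiticity to obtain $\langle(\Omega_i-\Omega_j)z_i,z_j\rangle$. No gaps.
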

\begin{proof}
By definition, one has 
\begin{align*}
\frac{d}{dt} G_{ij} &=\langle \dot{z}_i, z_j\rangle+\langle z_i, \dot{z}_j\rangle\\
&= \left\langle \Omega_i z_i + \displaystyle\frac{\kappa_0}{N}\sum_{k \neq i}a_{ik}\big(z_k^{\tau} - \langle z_k^{\tau}, z_i \rangle z_i) + \frac{\kappa_1}{N}\sum_{k \neq i}a_{ik}\big(\langle z_i, z_k^{\tau} \rangle - \langle z_k^{\tau} , z_i \rangle) z_i, z_j \right\rangle\\
&\hspace{0.5cm} + \left\langle z_i, \Omega_j z_j + \displaystyle\frac{\kappa_0}{N}\sum_{k \neq j}a_{jk}\big(z_k^{\tau} - \langle z_k^{\tau}, z_j \rangle z_j) + \frac{\kappa_1}{N}\sum_{k \neq j}a_{jk}\big(\langle z_j, z_k^{\tau} \rangle - \langle z_k^{\tau} , z_j \rangle) z_j \right\rangle\\
&= \langle \Omega_i z_i, z_j \rangle + \displaystyle\frac{\kappa_0}{N}\sum_{k \neq i}a_{ik}\big(\langle z_k^{\tau}, z_j \rangle - \overline{\langle z_k^{\tau}, z_i \rangle} \langle z_i, z_j \rangle) + \frac{\kappa_1}{N}\sum_{k \neq i}a_{ik}\big(\overline{\langle z_i, z_k^{\tau} \rangle} - \overline{\langle z_k^{\tau} , z_i \rangle}) \langle z_i, z_j \rangle\\
&\hspace{0.5cm} + \langle z_i, \Omega_j z_j \rangle + \displaystyle\frac{\kappa_0}{N}\sum_{k \neq j}a_{jk}\big( \langle z_i, z_k^{\tau} \rangle - \langle z_k^{\tau}, z_j \rangle \langle z_i, z_j \rangle) + \frac{\kappa_1}{N}\sum_{k \neq j}a_{jk}\big(\langle z_j, z_k^{\tau} \rangle  - \langle z_k^{\tau} , z_j \rangle) \langle z_i, z_j \rangle \\
&= \langle (\Omega_i-\Omega_j) z_i, z_j \rangle + \displaystyle\frac{\kappa_0}{N}\sum_{k \neq i}a_{ik}\big(\langle z_k^{\tau}, z_j \rangle - \overline{\langle z_k^{\tau}, z_i \rangle} \langle z_i, z_j \rangle) + \frac{\kappa_1}{N}\sum_{k \neq i}a_{ik}\big(\overline{\langle z_i, z_k^{\tau} \rangle} - \overline{\langle z_k^{\tau} , z_i \rangle}) \langle z_i, z_j \rangle\\
&\hspace{0.5cm} + \displaystyle\frac{\kappa_0}{N}\sum_{k \neq j}a_{jk}\big( \langle z_i, z_k^{\tau} \rangle - \langle z_k^{\tau}, z_j \rangle \langle z_i, z_j \rangle) + \frac{\kappa_1}{N}\sum_{k \neq j}a_{jk}\big(\langle z_j, z_k^{\tau} \rangle  - \langle z_k^{\tau} , z_j \rangle) \langle z_i, z_j \rangle \\
&=\langle (\Omega_i-\Omega_j) z_i, z_j \rangle + \displaystyle\frac{\kappa_0}{N}\sum_{k \neq i}a_{ik}(G_{kj}^{\tau}-\overline{G}_{ki}^{\tau}G_{ij}) + \displaystyle\frac{\kappa_0}{N}\sum_{k \neq j}a_{jk}(\overline{G}_{ki}^{\tau} - G_{kj}^{\tau}G_{ij})
\\
&\hspace{0.5cm} + \displaystyle\frac{\kappa_1}{N}\sum_{k \neq i}a_{ik}(G_{ki}^{\tau}-\overline{G}_{ki}^{\tau})G_{ij} + \displaystyle\frac{\kappa_1}{N}\sum_{k \neq j}a_{jk}(\overline{G}_{kj}^{\tau} - G_{kj}^{\tau})G_{ij}.
\end{align*}
\end{proof}
\begin{lemma} \label{L4.7}
Let $\{z_j\}$ be a global solution to \eqref{D-7}. Then, one has 
\[ 
| L_{ij}(t)-L_{ij}^{\tau}(t)| \leq \tau \mathcal{C}_3,
\]
where the positive constant $\mathcal{C}_3$ is given by
\[
\mathcal{C}_3 = \left(\max_{i}\|\Omega_{i}\|_{\infty} + \frac{2\mathcal{D}(A)(N-1)(\kappa_0+|\kappa_1|)}{N} \right).
\]
\end{lemma}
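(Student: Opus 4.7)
The plan is to mimic the argument of Lemma \ref{L4.3} but carefully track the two extra features in \eqref{D-7} that are absent from \eqref{D-1}: the non-trivial free-flow matrix $\Omega_i$ and the heterogeneous weights $a_{ik}$. The strategy is simply to reduce $|L_{ij}(t)-L_{ij}^\tau(t)|$ to a bound on $\|z_i(t)-z_i^\tau(t)\|$, and then to bound the latter via a uniform $L^\infty$ estimate on $\dot z_i$ obtained directly from the right-hand side of \eqref{D-7}.

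First I would note that, by the sesquilinearity of $\langle\cdot,\cdot\rangle$,
\[
L_{ij}(t)-L_{ij}^\tau(t)=\langle z_i^\tau(t),z_j(t)\rangle-\langle z_i(t),z_j(t)\rangle=\langle z_i^\tau(t)-z_i(t),z_j(t)\rangle,
\]
so the Cauchy--Schwarz inequality together with Lemma \ref{L2.1} (which gives $\|z_j(t)\|=1$) yields
\[
|L_{ij}(t)-L_{ij}^\tau(t)|\le \|z_i(t)-z_i^\tau(t)\|.
\]
Thus the lemma reduces to showing $\|z_i(t)-z_i^\tau(t)\|\le \tau\mathcal{C}_3$.

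Next, writing
\[
z_i(t)-z_i^\tau(t)=\int_{t-\tau}^t \dot z_i(s)\,ds,
\]
it suffices to obtain a time-independent upper bound on $\|\dot z_i\|$. Using \eqref{D-7}, the triangle inequality, $\|z_k\|=\|z_i\|=1$, and $|\langle z_k^\tau,z_i\rangle|,|\langle z_i,z_k^\tau\rangle|\le 1$, I would estimate term by term:
\[
\|\Omega_i z_i\|\le \|\Omega_i\|_\infty,\qquad
\Big\|\langle z_i,z_i\rangle z_k^\tau-\langle z_k^\tau,z_i\rangle z_i\Big\|\le 2,\qquad
\Big\|(\langle z_i,z_k^\tau\rangle-\langle z_k^\tau,z_i\rangle)z_i\Big\|\le 2,
\]
and then bound each weight $a_{ik}$ by $\mathcal{D}(A)$. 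This produces
\[
\|\dot z_i(s)\|\le \max_{i}\|\Omega_i\|_\infty+\frac{2\mathcal{D}(A)(N-1)(\kappa_0+|\kappa_1|)}{N}=\mathcal{C}_3,
\]
uniformly in $s$. Integrating over $[t-\tau,t]$ and combining with the Cauchy--Schwarz step concludes the proof.

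There is no real obstacle: the only points requiring care are making sure the interpretation of $\mathcal{D}(A)$ (as an upper bound on the individual weights $a_{ik}$) is consistent with its usage elsewhere in the paper, and verifying that the $\Omega_i$-contribution is correctly measured in the same norm $\|\cdot\|_\infty$ used in $\mathcal{C}_3$. Both are straightforward, and the proof is essentially a bookkeeping extension of Lemma \ref{L3.6}/Lemma \ref{L4.3} to the present inhomogeneous setting.
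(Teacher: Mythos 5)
Your proposal is correct and follows essentially the same route as the paper: reduce $|L_{ij}(t)-L_{ij}^\tau(t)|$ to $\|z_i(t)-z_i^\tau(t)\|$ by Cauchy--Schwarz and the unit-modulus property, then bound $\|\dot z_i\|$ uniformly from \eqref{D-7} and integrate over $[t-\tau,t]$. Your reading of $\mathcal{D}(A)$ as a bound on the individual weights is also consistent with the paper's own proof, which replaces it by $\|A\|_\infty=\max_{i,j}a_{ij}$ in the final estimate.
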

\begin{proof}
By the Cauchy-Schwarz inequality, we have
\[
|L_{ij}(t)-L_{ij}^{\tau}(t)| = |\langle z_i - z_i^{\tau} , z_j \rangle| \leq \|z_i - z_i^{\tau} \| \cdot\|z_j \| =\|z_i - z_i^{\tau} \|.
\]
On the other hand, we have
\begin{align*}
&\|z_i(t) - z_i^\tau(t)\| = \left\|\int_{t- \tau}^{t}\dot{z}_i(s)ds\right\| \\
& \hspace{0.5cm}  \leq \int_{t- \tau}^{t}\left\|\Omega_iz_i + \displaystyle\frac{\kappa_0}{N}\sum_{k \neq i}a_{ik}\big(\langle z_i, z_i \rangle z_k^{\tau} - \langle z_k^{\tau}, z_i \rangle z_i) + \frac{\kappa_1}{N}\sum_{k \neq i}a_{ik}\big(\langle z_i, z_k^{\tau} \rangle - \langle z_k^{\tau} , z_i \rangle) z_i\right\|ds \\
& \hspace{0.5cm} \leq \tau \left(\|\Omega_i\|_{\infty} + \frac{2a_{ik}(N-1)(\kappa_0+|\kappa_1|)}{N} \right).
\end{align*}
Now, we set
\[  \|A \|_{\infty}:= \displaystyle\max_{i,j}{a_{ij}}. \]
Then, we have 
\[
|L_{ij}(t)-L_{ij}^{\tau}(t)| \leq \tau \left(\max_{i}\|\Omega_{i}\|_{\infty} + \frac{2 \|A \|_{\infty} (N-1)(\kappa_0+|\kappa_1|)}{N} \right).
\]
\end{proof}
\noindent We set 
\[  \mathcal{D}(\Omega) := \displaystyle\max_{i,j}{\|\Omega_i - \Omega_j\|_{\infty}}. \]
Then by direct calculation, we get the following lemma. 
\begin{lemma} \label{L4.8}
Let $\{z_j\}$ be a global solution to \eqref{D-7}. Then $|L_{ij}|$ satisfies
\begin{align}
\begin{aligned} \label{D-8}
\frac{d}{dt}|L_{ij}|^2 &\leq 2|L_{ij}|\mathcal{D}(\Omega) + \frac{2\kappa_0}{N}\sum_{k=1}^{N}|a_{ik} - a_{jk}|(|L_{ki}| + |L_{kj}|+2\tau \mathcal{C}_3)|L_{ij}| + \frac{4\mathcal{C}_3|\kappa_1|\tau}{N}(a_{ii}+a_{jj})|L_{ij}|\\
&\hspace{0.5cm} -2\frac{\kappa_0}{N}\sum_{k=1}^{N}(a_{ik}+a_{jk})|L_{ij}|^2 + \frac{2\kappa_0}{N}\sum_{k=1}^{N} \bigg(a_{ik}(|L_{ki}|+\tau \mathcal{C}_3)+a_{jk}(|L_{kj}|+\tau \mathcal{C}_3 \bigg)|L_{ij}|^2 \\
&\hspace{0.5cm}+\frac{2C\kappa_0 \tau}{N}(a_{ii} + a_{jj})|L_{ij}| + \frac{2\mathcal{C}_3\kappa_0 \tau}{N}(a_{ii} + a_{jj})(|L_{ij}|^2 + |L_{ij}|)\\
&\hspace{0.5cm}+\frac{4|\kappa_1|}{N}\sum_{k=1}^{N}|L_{ij}|(a_{ik}(|L_{ki}| + \tau \mathcal{C}_3) + a_{jk}(|L_{kj}| + \tau \mathcal{C}_3)). 
\end{aligned}
\end{align}
\end{lemma}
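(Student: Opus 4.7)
The plan is to derive \eqref{D-8} by following the template of Lemma \ref{L4.4} (which handles the complete-network, zero-free-flow case) but with two new bookkeeping complications: the asymmetry $\Omega_i\neq\Omega_j$ and the asymmetry $a_{ik}\neq a_{jk}$. The starting point is the identity
\[
\frac{d}{dt}|L_{ij}|^2 \;=\; \dot{L}_{ij}\,L_{ji} + L_{ij}\,\dot{L}_{ji} \;=\; -\dot{G}_{ij}\,L_{ji} - L_{ij}\,\dot{G}_{ji},
\]
into which I would substitute the explicit expression for $\dot{G}_{ij}$ from Lemma \ref{L4.6} (and its conjugate for $\dot{G}_{ji}$). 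This splits the right-hand side into an $\Omega$-contribution, a $\kappa_0$-contribution, and a $\kappa_1$-contribution, each of which is handled separately.

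For the $\Omega$-contribution, the two pieces $-\langle(\Omega_i-\Omega_j)z_i,z_j\rangle L_{ji}$ and $-L_{ij}\langle z_i,(\Omega_i-\Omega_j)z_j\rangle$ are bounded via Lemma \ref{L4.2} together with $\|z_i\|=\|z_j\|=1$ and the definition $\mathcal{D}(\Omega)=\max_{i,j}\|\Omega_i-\Omega_j\|_\infty$, yielding the leading term $2|L_{ij}|\mathcal{D}(\Omega)$ of \eqref{D-8}. For the $\kappa_0$- and $\kappa_1$-sums, I would first complete the sums $\sum_{k\neq i}a_{ik}$ and $\sum_{k\neq j}a_{jk}$ to full sums $\sum_{k=1}^N$, paying the price of diagonal correction terms involving $a_{ii},a_{jj}$ and $G_{ii}^\tau,G_{jj}^\tau,G_{ij}^\tau,G_{ji}^\tau$; since $L_{ii}=0$ and $|L_{ii}^\tau|\leq \tau\mathcal{C}_3$ by Lemma \ref{L4.7}, these self-terms contribute the $\frac{2\mathcal{C}_3\kappa_0\tau}{N}(a_{ii}+a_{jj})$ and $\frac{4\mathcal{C}_3|\kappa_1|\tau}{N}(a_{ii}+a_{jj})$ corrections.

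Next, I would convert to $L$-variables by writing $G_{kj}^\tau = 1-L_{kj}^\tau$, $\overline{G_{ki}^\tau}=1-\overline{L_{ki}^\tau}$, etc., and then decompose the weights symmetrically via
\[
a_{ik} = \tfrac{1}{2}(a_{ik}+a_{jk}) + \tfrac{1}{2}(a_{ik}-a_{jk}), \qquad a_{jk} = \tfrac{1}{2}(a_{ik}+a_{jk}) - \tfrac{1}{2}(a_{ik}-a_{jk}).
\]
The symmetric part produces, after multiplication by $L_{ji}$ and $L_{ij}$ and taking real parts, the dissipative quadratic term $-\frac{2\kappa_0}{N}\sum_{k=1}^N(a_{ik}+a_{jk})|L_{ij}|^2$ together with its $L_{ki}^\tau, L_{kj}^\tau$ perturbation $\frac{2\kappa_0}{N}\sum_{k=1}^N\bigl(a_{ik}L_{ki}^\tau+a_{jk}L_{kj}^\tau\bigr)|L_{ij}|^2$, which Lemma \ref{L4.7} turns into the stated $|L_{ki}|+\tau\mathcal{C}_3$ and $|L_{kj}|+\tau\mathcal{C}_3$ form. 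The antisymmetric part is the source of the $\frac{2\kappa_0}{N}\sum|a_{ik}-a_{jk}|(|L_{ki}|+|L_{kj}|+2\tau\mathcal{C}_3)|L_{ij}|$ term: these are the cross-terms that would cancel exactly in the symmetric-weight case of Lemma \ref{L4.4}, so they must be estimated by triangle inequality rather than eliminated.

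The $\kappa_1$-sums are treated by the same strategy. Using $\mathrm{Im}\,L_{ij} = -\mathrm{Im}\,\langle z_i,z_j\rangle$ and $|\mathrm{Im}\,L_{ij}|\leq |L_{ij}|$, and grouping the $(i,k)$ and $(j,k)$ sums symmetrically, one gets the $\frac{4|\kappa_1|}{N}\sum(a_{ik}|L_{ki}|+a_{jk}|L_{kj}|+\tau\mathcal{C}_3(a_{ik}+a_{jk}))|L_{ij}|$ contribution. The single delicate point for $\kappa_1$ is that the "diagonal" correction $\frac{4|\kappa_1|}{N}\mathrm{Im}\,L_{ij}\,\mathrm{Im}(L_{ii}^\tau-L_{jj}^\tau)$ — an exact analogue of Step A of the proof of Lemma \ref{L4.4} — is controlled by $\frac{4\mathcal{C}_3|\kappa_1|\tau}{N}(a_{ii}+a_{jj})|L_{ij}|$ via Lemma \ref{L4.7}. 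Finally, collecting all the bounds and applying the triangle inequality $|L_{ij}|^2 \leq 2|L_{ij}|$ where convenient yields \eqref{D-8}. The main obstacle is purely combinatorial: tracking the many asymmetric weight/index pairings without losing the sign of the dominant dissipative term, and ensuring that every $L_{k\ell}^\tau$ substitution leaves behind a controlled $\tau\mathcal{C}_3$ remainder attached to the correct weight $a_{ik}$ or $a_{jk}$.
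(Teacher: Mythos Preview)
Your proposal is correct and follows essentially the same route as the paper: start from $\frac{d}{dt}|L_{ij}|^2=-2\,\mathrm{Re}(\dot G_{ij}L_{ji})$, plug in Lemma~\ref{L4.6}, complete the sums to $\sum_{k=1}^N$ at the cost of diagonal corrections controlled by Lemma~\ref{L4.7}, convert to $L$-variables, and estimate the $\Omega$-, $\kappa_0$-, and $\kappa_1$-contributions separately. The only cosmetic difference is that you organize the $\kappa_0$-sum via the explicit decomposition $a_{ik}=\tfrac12(a_{ik}+a_{jk})+\tfrac12(a_{ik}-a_{jk})$, whereas the paper arrives at the same $(a_{ik}+a_{jk})$/$(a_{ik}-a_{jk})$ splitting by straight algebra after substituting $G=1-L$; the resulting terms and their bounds are identical.
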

\begin{proof} We use \eqref{D-7} to find 
\begin{align*}
\frac{d}{dt}\langle z_i, z_j\rangle &=\langle (\Omega_i-\Omega_j) z_i, z_j \rangle + \displaystyle\frac{\kappa_0}{N}\sum_{k \neq i}a_{ik}(G_{kj}^{\tau}-\overline{G}_{ki}^{\tau}G_{ij}) + \displaystyle\frac{\kappa_0}{N}\sum_{k \neq j}a_{jk}(\overline{G}_{ki}^{\tau} - G_{kj}^{\tau}G_{ij})\\
&\hspace{0.5cm} + \displaystyle\frac{\kappa_1}{N}\sum_{k \neq i}a_{ik}(G_{ki}^{\tau}-\overline{G}_{ki}^{\tau})G_{ij} + \displaystyle\frac{\kappa_1}{N}\sum_{k \neq j}a_{jk}(\overline{G}_{kj}^{\tau} - G_{kj}^{\tau})G_{ij}\\ 
&= \langle (\Omega_i-\Omega_j) z_i, z_j \rangle + \displaystyle\frac{\kappa_0}{N}\sum_{k = 1}^{N}\left(a_{ik}(G_{kj}^{\tau}-\overline{G}_{ki}^{\tau}G_{ij}) + a_{jk}(\overline{G}_{ki}^{\tau} - G_{kj}^{\tau}G_{ij})\right)\\
&\hspace{0.5cm}+ \displaystyle\frac{\kappa_1}{N}\sum_{k =1}^{N}\left(a_{ik}(G_{ki}^{\tau}G_{ij}-\overline{G}_{ki}^{\tau}G_{ij})+a_{jk}(\overline{G}_{kj}^{\tau}G_{ij} - G_{kj}^{\tau}G_{ij})\right) \\
&\hspace{0.5cm}- \frac{\kappa_0}{N}\left(a_{ii}(G_{ij}^{\tau}-\overline{G}_{ii}^{\tau}G_{ij}) + a_{jj}(\overline{G}_{ji}^{\tau}-G_{jj}^{\tau}G_{ij})\right)\\
&\hspace{0.5cm}- \frac{\kappa_1}{N}\bigg(a_{ii}(G_{ii}^{\tau}G_{ij}-\overline{G}_{ii}^{\tau}G_{ij}) + a_{jj}(\overline{G}_{jj}^{\tau}G_{ij}-G_{jj}^{\tau}G_{ij})\bigg)\\
&= \langle (\Omega_i-\Omega_j) z_i, z_j \rangle\\
&\hspace{0.5cm} + \frac{\kappa_0}{N}\sum_{k=1}^{N}\left[(a_{jk}-a_{ik})L_{kj}^{\tau} + (a_{ik} - a_{jk})\overline{L}_{ki}^{\tau} + (a_{ik}+a_{jk} - a_{ik}\overline{L}_{ki}^{\tau} - a_{jk}L_{kj}^\tau)L_{ij}\right]\\
&\hspace{0.5cm}+\frac{2i\kappa_1}{N}\sum_{k=1}^{N}\left(a_{jk}\mathrm{Im}L_{kj}^{\tau}- a_{ik}\mathrm{Im}L_{ki}^{\tau})(1-L_{ij})\right) - \frac{2\mathrm{i}\kappa_1}{N}(a_{jj}\mathrm{Im}L_{jj}^{\tau} - a_{ii}\mathrm{Im}L_{ii}^\tau)(1-L_{ij}) \\
&\hspace{0.5cm}-\frac{\kappa_0}{N}\left(a_{jj}L_{jj}^\tau - a_{ii}L_{ij}^\tau + a_{ii}\overline{L}_{ii}^\tau-a_{jj}\overline{L}_{ji}^\tau + (a_{ii} + a_{jj})L_{ij} -a_{ii}\overline{L}_{ii}^\tau L_{ij} - a_{jj}L_{jj}^\tau L_{ij}\right).
\end{align*}
Thus, we have
\begin{align*}
\frac{d}{dt}|L_{ij}|^2 &= \frac{d}{dt}(L_{ij} \overline{L}_{ij}) = 2\re(\dot{L_{ij}}\overline{L}_{ij}) = -2\re\left(\frac{d}{dt}\langle z_i, z_j \rangle (1 - \langle z_j, z_i \rangle)\right)\\
&=-2\re \Big(L_{ji}\bigg[\langle (\Omega_i-\Omega_j) z_i, z_j \rangle + \frac{\kappa_0}{N}\sum_{k=1}^{N}\big\{(a_{jk}-a_{ik})L_{kj}^{\tau}\\
&\hspace{0.5cm}+ (a_{ik} - a_{jk})\overline{L}_{ki}^{\tau} + (a_{ik}+a_{jk} - a_{ik}\overline{L}_{ki}^{\tau} - a_{jk}L_{kj}^\tau)L_{ij}\big\}\\
&\hspace{0.5cm}+\frac{2\mathrm{i}\kappa_1}{N}\sum_{k=1}^{N}\left(a_{jk}\mathrm{Im}L_{kj}^{\tau}- a_{ik}\mathrm{Im}L_{ki}^{\tau})(1-L_{ij})\right) - \frac{2\mathrm{i}\kappa_1}{N}(a_{jj}\mathrm{Im}L_{jj}^{\tau} - a_{ii}\mathrm{Im}L_{ii}^\tau)(1-L_{ij}) \\
&\hspace{0.5cm}-\frac{\kappa_0}{N}\left(a_{jj}L_{jj}^\tau - a_{ii}L_{ij}^\tau + a_{ii}\overline{L}_{ii}^\tau-a_{jj}\overline{L}_{ji}^\tau + (a_{ii} + a_{jj})L_{ij} -a_{ii}\overline{L}_{ii}^\tau L_{ij} - a_{jj}L_{jj}^\tau L_{ij}\right)\bigg]\Big) \\
&\leq 2|L_{ij}|\mathcal{D}(\Omega) -2\re \frac{\kappa_0}{N}\sum_{k=1}^{N}\left\{(a_{ik}+a_{jk} - a_{ik}\overline{L}_{ki}^{\tau} - a_{jk}L_{kj}^\tau)|L_{ij}|^2 + (a_{ik} - a_{jk})(\overline{L}_{ki}^{\tau}-L_{kj}^\tau)L_{ji}\right\} \\
&\hspace{0.5cm}+\re\frac{2\kappa_0}{N}((a_{ii} + a_{jj} - a_{ii}\overline{L}_{ii}^\tau - a_{jj}L_{jj}^\tau)|L_{ij}|^2 + (a_{jj}L_{jj}^\tau - a_{ii}L_{ij}^\tau + a_{ii}\overline{L}_{ii}^\tau-a_{jj}\overline{L}_{ji}^\tau)L_{ji})\\
&\hspace{0.5cm}+\frac{4\kappa_1}{N}\sum_{k=1}^{N}(\mathrm{Im}(a_{ik}L_{ki}^\tau - a_{jk}L_{kj}^\tau)\mathrm{Im}L_{ij} - \frac{4\kappa_1}{N}\mathrm{Im}(a_{ii}L_{ii}^\tau - a_{jj}L_{jj}^\tau)\mathrm{Im}L_{ij}.\\
\end{align*}
In the sequel, we prove each term in the R.H.S. of the above relation. \newline

\vspace{0.2cm}

\noindent $\bullet$~Step A: By direct calculation, one has 
\begin{align*}
\begin{aligned}
& \left|\frac{4\kappa_1}{N}\mathrm{Im}(a_{ii}L_{ii}^\tau - a_{jj}L_{jj}^\tau)\mathrm{Im}L_{ij} \right| = \frac{4|\kappa_1|}{N}|L_{ij}||\mathrm{Im}(a_{ii}L_{ii}^\tau - a_{jj}L_{jj}^\tau)|\\
& \hspace{0.5cm} \leq \frac{4|\kappa_1|}{N}|L_{ij}|(a_{ii}|L_{ii}^\tau| + a_{jj}|L_{jj}^\tau|)  \leq \frac{4\mathcal{C}_3|\kappa_1|\tau}{N}(a_{ii}+a_{jj})|L_{ij}|
\end{aligned}
\end{align*}

\vspace{0.2cm}

\noindent $\bullet$ Step B:~We set 
\[
A := \frac{2\kappa_0}{N}\re\bigg((a_{ii} + a_{jj} - a_{ii}\overline{L}_{ii}^\tau - a_{jj}L_{jj}^\tau)|L_{ij}|^2 + (a_{jj}L_{jj}^\tau - a_{ii}L_{ij}^\tau + a_{ii}\overline{L}_{ii}^\tau-a_{jj}\overline{L}_{ji}^\tau)L_{ji}\bigg).
\]
Then, one has
\begin{align*}
A &\leq \frac{2\kappa_0}{N}\re \left(a_{ii}(L_{ij} - L_{ij}^\tau)L_{ji} + a_{jj}(L_{ij} - \overline{L}_{ji}^\tau)L_{ji} \right) + \frac{2\mathcal{C}_3\kappa_0 \tau}{N}(a_{ii} + a_{jj})(|L_{ij}|^2 + |L_{ij}|) \\
&\leq \frac{2\mathcal{C}_3\kappa_0 \tau}{N}(a_{ii} + a_{jj})|L_{ij}| + \frac{2\mathcal{C}_3\kappa_0 \tau}{N}(a_{ii} + a_{jj})(|L_{ij}|^2 + |L_{ij}|).
\end{align*}

\vspace{0.2cm}

\noindent $\bullet$ Step C:~We set 
\begin{align*}
B &= -2\re \frac{\kappa_0}{N}\sum_{k=1}^{N}\left\{(a_{ik}+a_{jk} - a_{ik}\overline{L}_{ki}^{\tau} - a_{jk}L_{kj}^\tau)|L_{ij}|^2 + (a_{ik} - a_{jk})(\overline{L}_{ki}^{\tau}-L_{kj}^\tau)L_{ji}\right\} \\
&\leq -2\frac{\kappa_0}{N}\sum_{k=1}^{N}(a_{ik}+a_{jk})|L_{ij}|^2 + \frac{2\kappa_0}{N}\sum_{k=1}^{N} \bigg(a_{ik}(|L_{ki}|+\tau \mathcal{C}_3)+a_{jk}(|L_{kj}|+\tau \mathcal{C}_3 )\bigg)|L_{ij}|^2 \\
&\hspace{0.5cm}+ \frac{2\kappa_0}{N}\sum_{k=1}^{N}|a_{ik} - a_{jk}|(|L_{ki}| + |L_{kj}|+2\tau \mathcal{C}_3)|L_{ij}|.
\end{align*}

\vspace{0.2cm}

\noindent $\bullet$~Step D:~Note that 
\begin{align*}
\frac{4\kappa_1}{N}\sum_{k=1}^{N}(\mathrm{Im}(a_{ik}L_{ki}^\tau - a_{jk}L_{kj}^\tau)\mathrm{Im}L_{ij} \leq \frac{4|\kappa_1|}{N}\sum_{k=1}^{N}|L_{ij}|(a_{ik}(|L_{ki}| + \tau \mathcal{C}_3) + a_{jk}(|L_{kj}| + \tau \mathcal{C}_3)). 
\end{align*}

\vspace{0.2cm}

\noindent $\bullet$ Step E: We collect all the estimates in Step A - Step D to get 
\begin{align*}
\begin{aligned}
\frac{d}{dt}|L_{ij}|^2 &\leq 2|L_{ij}|\mathcal{D}(\Omega) + \frac{2\kappa_0}{N}\sum_{k=1}^{N}|a_{ik} - a_{jk}|(|L_{ki}| + |L_{kj}|+2\tau \mathcal{C}_3)|L_{ij}| + \frac{4\mathcal{C}_3|\kappa_1|\tau}{N}(a_{ii}+a_{jj})|L_{ij}|\\
&-\frac{2\kappa_0}{N}\sum_{k=1}^{N}(a_{ik}+a_{jk})|L_{ij}|^2 + \frac{2\kappa_0}{N}\sum_{k=1}^{N} \bigg(a_{ik}(|L_{ki}|+\tau \mathcal{C}_3)+a_{jk}(|L_{kj}|+\tau \mathcal{C}_3 )\bigg)|L_{ij}|^2 \\
&+\frac{2\mathcal{C}_3\kappa_0 \tau}{N}(a_{ii} + a_{jj})|L_{ij}| + \frac{2\mathcal{C}_3\kappa_0 \tau}{N}(a_{ii} + a_{jj})(|L_{ij}|^2 + |L_{ij}|)\\
&+\frac{4|\kappa_1|}{N}\sum_{k=1}^{N}|L_{ij}|(a_{ik}(|L_{ki}| + \tau \mathcal{C}_3) + a_{jk}(|L_{kj}| + \tau \mathcal{C}_3)).
\end{aligned}
\end{align*}
\end{proof}

\subsubsection{Proof of Theorem \ref{T4.2}} \label{sec:4.2.2} 
Recall that 
\[ L(t) := \max_{i,j}{L_{ij}}. \]
Then, for each time $t$, there exists $i_t$ and $j_t$ by which the maximum is attained:
\[
L(t) = |1-\langle z_{i_t}, z_{j_t} \rangle|.
\]
Then by Lemma \eqref{L4.8}, one has 
\begin{align*}
\frac{d}{dt}L(t)^2  &\leq 2 L(t)\mathcal{D}(\Omega) + \frac{4\kappa_0}{N}\sum_{k=1}^{N}|a_{ik} - a_{jk}|(L(t)+ \tau \mathcal{C}_3)L(t) + \frac{4\mathcal{C}_3|\kappa_1|\tau}{N}(a_{ii}+a_{jj})L(t)\\
& \hspace{0.5cm}-\frac{2\kappa_0}{N}\sum_{k=1}^{N}(a_{ik}+a_{jk})L(t)^2 + \frac{2\kappa_0}{N}\sum_{k=1}^{N} \bigg(a_{ik}(L(t)+\tau \mathcal{C}_3)+a_{jk}(L(t)+\tau \mathcal{C}_3 )\bigg)L(t)^2 \\
&\hspace{0.5cm}+ \frac{2\mathcal{C}_3\kappa_0 \tau}{N}(a_{ii} + a_{jj})L(t) + \frac{2\mathcal{C}_3\kappa_0 \tau}{N}(a_{ii} + a_{jj})(L(t)^2 + L(t))\\
&\hspace{0.5cm}+\frac{4|\kappa_1|}{N}\sum_{k=1}^{N}L(t)\bigg(a_{ik}(L(t) + \tau \mathcal{C}_3) + a_{jk}(L(t) + \tau \mathcal{C}_3)\bigg) \\
&\leq \frac{2\kappa_0}{N}\sum_{k=1}^N(a_{ik} + a_{jk})L(t)^3 \\
&\hspace{0.5cm}+ \bigg( -\frac{2\kappa_0}{N}\sum_{k=1}^{N}(a_{ik}+a_{jk}) + \frac{4\kappa_0}{N}\sum_{k=1}^{N}|a_{ik} - a_{jk}|+ \frac{4\mathcal{C}_3\kappa_0 \tau}{N}\sum_{k=1}^{N}(a_{ik} + a_{jk})\\
&\hspace{0.5cm}+ \frac{2\mathcal{C}_3\kappa_0 \tau}{N}(a_{ii} + a_{jj})+\frac{4|\kappa_1|}{N}\sum_{k=1}^{N}(a_{ik}+ a_{jk})\bigg)L(t)^2\\
&\hspace{0.5cm} + \bigg(2\mathcal{D}(\Omega) + \frac{4\mathcal{C}_3\kappa_0 \tau}{N}\sum_{k=1}^{N}|a_{ik} - a_{jk}| + (a_{ii} + a_{jj})\frac{4\mathcal{C}_3\tau}{N}(\kappa_0+|\kappa_1|)+8\mathcal{C}_3|\kappa_1|\tau\bigg)L(t).
\end{align*}
Then, we have
\begin{align*}
\frac{d}{dt}L(t)&\leq \frac{\kappa_0}{N}\sum_{k=1}^N(a_{ik} + a_{jk})L(t)^2 \\
&\hspace{0.5cm}+ \bigg( -\frac{\kappa_0}{N}\sum_{k=1}^{N}(a_{ik}+a_{jk}) + \frac{2\kappa_0}{N}\sum_{k=1}^{N}|a_{ik} - a_{jk}|+ \frac{2\mathcal{C}_3\kappa_0 \tau}{N}\sum_{k=1}^{N}(a_{ik} + a_{jk})\\
&\hspace{5cm}+ \frac{\mathcal{C}_3\kappa_0 \tau}{N}(a_{ii} + a_{jj})+\frac{2|\kappa_1|}{N}\sum_{k=1}^{N}(a_{ik}+ a_{jk})\bigg)L(t)\\
&\hspace{0.5cm} + \bigg(\mathcal{D}(\Omega) + \frac{2\mathcal{C}_3\kappa_0 \tau}{N}\sum_{k=1}^{N}|a_{ik} - a_{jk}| + \frac{2\mathcal{C}_3\tau}{N}(a_{ii} + a_{jj})(\kappa_0+|\kappa_1|)+4\mathcal{C}_3|\kappa_1|\tau\bigg).
\end{align*}
Now, we set 
\begin{align*}
\mathcal{A}_1& := \frac{1}{N}\sum_{k=1}^N(a_{ik} + a_{jk}),\\
\mathcal{A}_2& :=\frac{2}{N}\sum_{k=1}^{N}|a_{ik} - a_{jk}|+ \frac{2\mathcal{C}_3 \tau}{N}\sum_{k=1}^{N}(a_{ik} + a_{jk})+ \frac{\mathcal{C}_3\tau}{N}(a_{ii} + a_{jj})+\frac{2|\kappa_1|}{N\kappa_0}\sum_{k=1}^{N}(a_{ik}+ a_{jk}),\\
\mathcal{A}_3& :=\frac{\mathcal{D}(\Omega)}{\kappa_0} + \frac{2\mathcal{C}_3 \tau}{N}\sum_{k=1}^{N}|a_{ik} - a_{jk}| +\frac{2\mathcal{C}_3\tau}{N} (a_{ii} + a_{jj})\left(1+\frac{|\kappa_1|}{\kappa_0}\right)+4\mathcal{C}_3\tau\frac{|\kappa_1|}{\kappa_0}.
\end{align*}
This yields
\[
\frac{d}{dt}L\leq \kappa_0\big(\mathcal{A}_1 L^2-(\mathcal{A}_1-\mathcal{A}_2)L+\mathcal{A}_3\big).
\]
If we impose following conditions:
\begin{align}\label{S-2}
\tau\searrow0\quad\text{and after that}\quad \kappa_0\to\infty,
\end{align}
we obtain
\[
\lim_{\kappa_0\to\infty}\lim_{\tau\searrow0}\mathcal{A}_2= \frac{2}{N}\sum_{k=1}^{N}|a_{ik} - a_{jk}|,\quad \lim_{\kappa_0\to\infty}\lim_{\tau\searrow0}\mathcal{A}_3= 0.
\]
Since the zeros of 
\[
\mathcal{A}_1 x^2-(\mathcal{A}_1-\mathcal{A}_2)x+\mathcal{A}_3=0
\]
can be expressed as
\[
x_\pm=\frac{\mathcal{A}_1-\mathcal{A}_2\pm\sqrt{(\mathcal{A}_1-\mathcal{A}_2)^2-4\mathcal{A}_1\mathcal{A}_3}}{2\mathcal{A}_1}.
\]
If we combine this expression, under the condition \eqref{S-2} we have
\[
\lim_{\kappa_0\to\infty}\lim_{\tau\searrow0}x_+=1-\frac{2\sum_{k=1}^{N}|a_{ik} - a_{jk}|}{\sum_{k=1}^N(a_{ik} + a_{jk})},\quad \lim_{\kappa_0\to\infty}\lim_{\tau\searrow0}x_-=0.
\]
By similar arguments with previous result, we have desired estimate.
\qed

\section{Conclusion} \label{sec:5}
In this paper, we have proposed several sufficient frameworks leading to complete aggregation and practical aggregation in terms of initial data, coupling gains and size of time-delay for the LHS model with time-delayed interactions. The LHS model is a complex counterpart of the LS model, and it describes the continuous-time dynamics of the Lohe hermitian sphere particles on the unit hermitian sphere. For the SL coupling gain pair with $\kappa_1 = -\frac{\kappa_0}{2}$, the LHS model on $\bbc^d$ can be rewritten as the LS model on $\bbr^{2d}$. When the coupling gain pair is close to that of the SL coupling gain pair and the corresponding linear flows are the same, we show that the LHS flow with a time-delay tends to complete aggregation asymptotically for some admissible class of initial data and system parameters. For a general network, we also provided a sufficient framework for practical aggregation to the LHS model with respect to time-delay. The current work only dealt with the same free flow, and we did not take into account of heterogeneous free flow. Even for the LS model on the unit sphere in Euclidean space, the complete aggregation is not known in previous literature, except a weak aggregation estimates such as practical aggregation. We leave this interesting issue for a future work.

\end{document}